\spnewtheorem{clm}{Claim}{\bfseries}{\rmfamily}
\definecolor{lightcyan}{rgb}{0.88,1,1}
\definecolor{antiquewhite}{rgb}{0.98, 0.92, 0.84}
\newcounter{casecounter}
\newcounter{subcasecounter}
\newcounter{subsubcasecounter}
\newcommand{\ccase}[2][]{%
	\stepcounter{casecounter}%
	\setcounter{subcasecounter}{0}%
	\protected@write \@auxout {}{\string \newlabel {#2}{{#1\thecasecounter}{\thepage}{#1\thecasecounter}{#2}{}} }%
	\hypertarget{#2}{\noindent\textbf{Case #1\thecasecounter.}}
}
\newcommand{\subcase}[2][]{%
	\stepcounter{subcasecounter}%
	\setcounter{subsubcasecounter}{0}%
	\protected@write \@auxout {}{\string \newlabel {#2}{{#1\thecasecounter.\thesubcasecounter}{\thepage}{#1\thecasecounter.\thesubcasecounter}{#2}{}} }%
	\hypertarget{#2}{\noindent\textbf{Case #1\thecasecounter.\thesubcasecounter.}}
}
\newcommand{\subsubcase}[2][]{%
	\stepcounter{subsubcasecounter}%
	\protected@write \@auxout {}{\string \newlabel {#2}{{#1\thecasecounter.\thesubcasecounter.\thesubsubcasecounter}{\thepage}{#1\thecasecounter.\thesubcasecounter.\thesubsubcasecounter}{#2}{}} }%
	\hypertarget{#2}{\noindent\textbf{Case #1\thecasecounter.\thesubcasecounter.\thesubsubcasecounter.}}
}
\begin{document}
	\title{Mutual Witness Gabriel Drawings \\ of Complete Bipartite Graphs
\thanks {Work partially supported by: (i) MUR, grant 20174LF3T8 AHeAD: efficient Algorithms for HArnessing networked Data", (ii) Dipartimento di Ingegneria, Universita degli Studi di Perugia, grant RICBA21LG: Algoritmi, modelli e sistemi per la rappresentazione visuale di reti.}
}

\author{William J. Lenhart\inst{1},
		Giuseppe Liotta\inst{2}}

\institute{Williams College, US\\
		\email {wlenhart@williams.edu}
		\and
        Universit\`a degli Studi di Perugia, Italy\\
		\email {giuseppe.liotta@unipg.it}
}

\authorrunning{W. Lenhart and G. Liotta}

	\maketitle

\begin{abstract}

 Let $\Gamma$ be a straight-line drawing of a graph and let $u$ and $v$ be two vertices of $\Gamma$. The
Gabriel disk of $u,v$ is the disk having $u$ and $v$ as antipodal points. A pair $\langle \Gamma_0,\Gamma_1 \rangle$ of
vertex-disjoint straight-line drawings form a mutual witness Gabriel drawing when, for $i=0,1$, any two vertices $u$ and $v$ of $\Gamma_i$ are adjacent if and only if their Gabriel disk does not contain any vertex of $\Gamma_{1-i}$. We characterize the pairs $\langle G_0,G_1 \rangle $ of complete bipartite graphs that admit a mutual witness Gabriel drawing. The characterization leads to a linear time testing algorithm. We also show that when at least one of the graphs in the pair $\langle G_0, G_1 \rangle $ is complete $k$-partite with $k>2$ and all partition sets in the two graphs have size greater than one, the pair does not admit a mutual witness Gabriel drawing.

\keywords{ Proximity drawings, \and Gabriel drawings, \and witness proximity drawings, \and simultaneous drawing of two graphs.}
\end{abstract}

\section{Introduction}\label{se:intro}

Proximity drawings, including Delaunay triangulations, rectangle of influence drawings, minimum spanning trees, and unit disk graphs, are among the most studied geometric graphs. They are commonly used as descriptors of the ``shape'' of a point set and are used in a variety of applications, including machine learning, pattern recognition, and computer graphics (see, e.g.,~\cite{DBLP:books/wi/CGHandbook2017}).
They have also been used to measure the faithfulness of large graph visualizations (see, e.g.,~\cite{DBLP:journals/jgaa/EadesH0K17}).

Proximity drawings are geometric graphs in which two vertices are adjacent if and only if they are deemed close by some measure. A common approach to define the closeness of two vertices $u$ and $v$ uses a \emph{region of influence} of $u$ and $v$, which is a convex region whose shape depends only on the relative position of $u$ with respect to $v$. Then we say that $u$ and $v$ are adjacent if and only if their region of influence does not contain some obstacle, often another vertex of the drawing. For example, a \emph{Gabriel drawing} $\Gamma$ is a proximity drawing where the region of influence  $u$ and $v$ is the disk having $u$ and $v$ as antipodal points, called the Gabriel region of $u$ and $v$; $u$ and $v$ are adjacent in $\Gamma$ if and only if their Gabriel region does not contain any other vertex. See also~\cite{DBLP:reference/crc/Liotta13} for a survey on different types of proximity regions and drawings.

An interesting generalization of proximity drawings is given in a sequence of papers  by Aronov, Dulieu, and Hurtado who introduce and study \emph{witness proximity drawings} and \emph{mutual witness proximity drawings}~\cite{DBLP:journals/comgeo/AronovDH11,DBLP:journals/comgeo/AronovDH13,DBLP:journals/ipl/AronovDH14,DBLP:journals/gc/AronovDH14}. In a witness proximity drawing the obstacles are points, called witnesses, that are suitably placed in the plane to impede the existence of  edges between non-adjacent vertices; these points may or may not include some of the vertices of the drawing itself. A mutual witness proximity drawing is a pair of  witness proximity drawings that are computed simultaneously and such that the vertices of one drawing are the witnesses of the other drawing.
For example, Figure~\ref{fi:introfigure} depicts a mutual witness Gabriel drawing (MWG-drawing for short) of two trees. In the figure, the Gabriel disk of $v_0,v_1$ of $\Gamma_0$ includes vertex $v_2$ but no vertices of $\Gamma_1$ and hence $v_0,v_1$ are adjacent in $\Gamma_0$; conversely, $v_1$ and $v_2$ are not adjacent in $\Gamma_0$ because their Gabriel disk contains vertex $u_1$ of $\Gamma_1$.

\begin{figure}[t]
	\centering
		\includegraphics[width=0.5\textwidth, page=1]{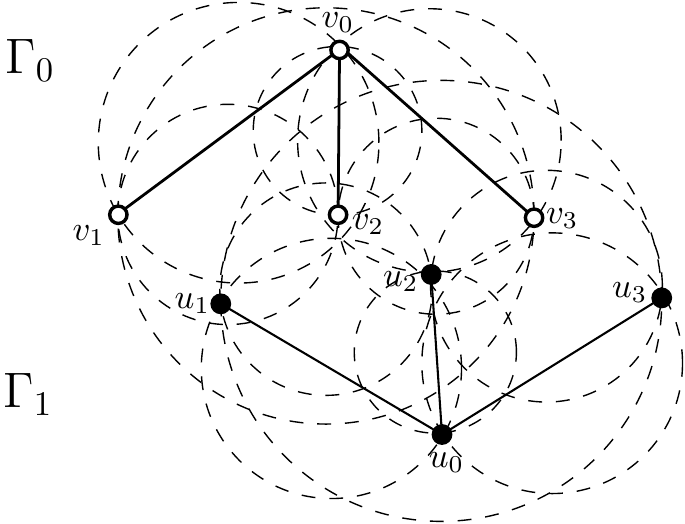}
			\caption{A mutual witness Gabriel drawing of two trees (Gabriel disks are dotted).}
	\label{fi:introfigure}
\end{figure}

In this  paper we characterize those pairs of complete bipartite graphs that admit an  MWG-drawing. While every complete bipartite graph has a witness Gabriel drawing~\cite{DBLP:journals/comgeo/AronovDH13}, not all pairs of complete bipartite graphs admit an MWG-drawing. To characterize the drawable pairs we also investigate some properties of MWG-drawings that go beyond complete bipartiteness. More precisely:

\begin{itemize}
\item We show that if $\langle \Gamma_0,\Gamma_1 \rangle $ is an MWG-drawing such that both $\Gamma_0$ and $\Gamma_1$ have diameter two, then the set of vertices of $\Gamma_0$ is linearly separable from the set of vertices of $\Gamma_1$. This extends a result of~\cite{DBLP:journals/ipl/AronovDH14}, where linear separability is proved when the diameter is one, i.e. when the two graphs are complete.
\item We show, perhaps surprisingly, that if $\langle G_0,G_1 \rangle $  is a pair of complete bipartite graphs that admits an MWG-drawing, then  both must be planar.
\item The above result let us characterize those pairs $\langle G_0, G_1 \rangle $ of complete bipartite graphs that admit an MWG-drawing 
    and leads to a linear time testing algorithm. When the test returns that $\langle G_0, G_1 \rangle $ is drawable, an MWG-drawing can be constructed in linear time in the real RAM model.
\item We show that relaxing the bipartiteness assumption does not significantly enlarge the class of representable graph pairs: We consider those pairs of complete multi-partite graphs each having all partition sets of size at least two and prove that if at least one of the graphs in the pair has more than two partition sets, then the pair does not admit an MWG-drawing.

\end{itemize}

We remark that our contribution not only fits into the rich literature devoted to proximity drawings, but it also relates to two other well studied topics in graph drawing, namely simultaneous embeddings (see, e.g.,~\cite{DBLP:reference/crc/BlasiusKR13,DBLP:books/sp/20/Rutter20} for references) and obstacle representations (see, e.g.,~\cite{DBLP:journals/dcg/AlpertKL10,DBLP:journals/dcg/BalkoCV18,DBLP:conf/gd/ChaplickLPW16,DBLP:journals/combinatorics/DujmovicM15,https://doi.org/10.48550/arxiv.2202.13015,DBLP:conf/cccg/JohnsonS14,DBLP:journals/combinatorics/MukkamalaPP12,DBLP:conf/wg/MukkamalaPS10,DBLP:journals/gc/PachS11}). As in simultaneous embeddings, the coordinates of the vertices of $\Gamma_i$ in a mutual witness proximity drawing are defined by taking into account the (geometric and topological) properties of $\Gamma_{1-i}$; as in obstacle graph representations, the adjacency of the vertices $\Gamma_i$ depends on whether their geometric interaction is obstructed by some external obstacles, namely the vertices of $\Gamma_{1-i}$;. Finally, mutual witness proximity drawings are of interest in pattern recognition, where they have been used in the design of trained classifiers to convey information about the interclass structure of two sets of features (see, e.g.~\cite{DBLP:journals/pr/IchinoS85}).

\section{Preliminaries}\label{se:preli}
We assume familiarity with basic definitions and results of graph drawing~\cite{DBLP:books/ph/BattistaETT99}.
%
We assume that all drawings occur in the Euclidean plane with standard $x$ and $y$ axes, and so concepts such as above/below a (non-vertical) line are unambiguous.
Given two distinct points $p$ and $q$ in the plane, we denote by $\overline{pq}$ the straight-line segment whose endpoints are $p$ and $q$. Also, let $a,b,c$ be three distinct points in the Euclidean plane, we denote by $\Delta(abc)$ the triangle whose vertices are $a,b,c$. Given two non-axis-parallel lines $\ell_1$ and $\ell_2$ intersecting at a point $b$, those lines divide the plane into four \emph{wedges: the top, bottom, left, and right wedges of $b$ with respect to $\ell_1$ and $\ell_2$}. The top and bottom wedges lie entirely above and below the horizontal line through $b$, respectively; the left and right wedges lie entirely to the left and right of the vertical line through $b$. When the two lines are determined by providing a point (other than $b$) on each line, say $a$ and $c$, we denote the wedges by $W_T[b, a, c]$, $W_B[b, a, c]$, $W_L[b, a, c]$, and $W_R[b, a, c]$ when we want to include the boundary of each wedge as part of that wedge and by $W_T(b, a, c)$, $W_B(b, a, c)$, $W_L(b, a, c)$, and $W_R(b, a, c)$ when we do not.

Note that exactly one of the four wedges will have both $a$ and $c$ on its boundary, we denote that wedge as $W[b, a, c]$ (or $W(b, a, c)$).
See Figure~\ref{fi:wedge}.

\begin{figure}[t]
		\centering
		\subfigure[]{\label{fi:wedge}\includegraphics[width=0.4\columnwidth]{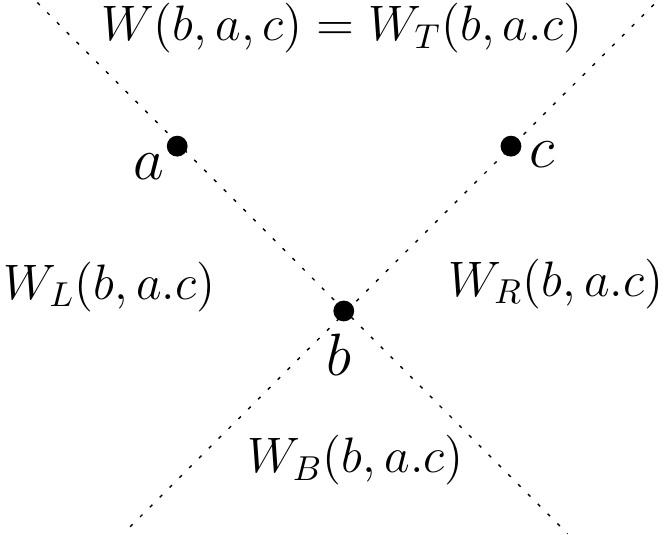}}
        \hfill
		\subfigure[]{\label{fi:triangle}\includegraphics[width=0.45\columnwidth]{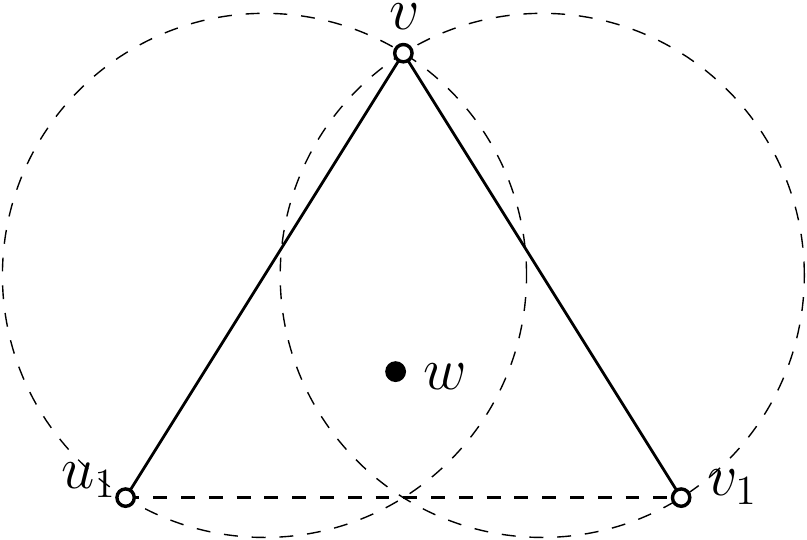}}
		\caption{(a) $W(b,a,c) = W_T(b,a,c)$, $W_B(b,a,c)$, $W_L(b,a,c)$, and $W_R(b,a,c)$ (b) If $w \in \Delta(vu_1v_1)$, at most one of $\overline{vu_1}$ and $\overline{vv_1}$ is an edge of a WG-drawing.}\label{fi:preli}
	\end{figure}


Let $\Gamma$ be a straight line drawing of a graph $G$ and let $u$ and $v$ be two vertices of $\Gamma$ (and of $G$). Vertices $u$ and $v$ may either be adjacent in $G$ and thus $\overline{uv}$ is an edge of $\Gamma$ or $u$ and $v$ are not adjacent vertices, in which case $\overline{uv}$ is a \emph {non-edge} of $\Gamma$. For example, $\overline{v_0v_1}$ in \cref{fi:introfigure} is an edge while $\overline{v_1v_2}$ is a non-edge of $\Gamma_1$. Also, the \emph{Gabriel disk} of $p$ and $q$, denoted as $D[p, q]$ is the disk having $p$ and $q$ as antipodal points; $D[p, q]$ is a closed set.

Let $V$ and $P$ be two sets of distinct points in the plane. A \emph {witness Gabriel drawing} ({\em WG-drawing}) with vertex set $V$ and witness set $P$ is a geometric graph $\Gamma$ whose vertices are the points of $V$ and such that any two vertices $u$ and $v$ form an edge if and only if $D[u, v] \cap P = \emptyset$. A graph $G$ is \emph {witness Gabriel drawable} ({\em WG-drawable}) if there exist two point sets $V$ and $P$ such that the witness Gabriel drawing with vertex set $V$ and witness set $P$ represents $G$ (i.e., there is a bijection between the vertex set of $G$ and the point set $V$ and between the edge set of $G$ and the edge set of $\Gamma$ that is incidence-preserving). The following property can be proved with elementary geometric arguments (see also Figure~\ref{fi:triangle}).


\begin{property}\label{pr:triangle}
Let $\Gamma$ be a WG-drawing with witness set $P$ and let $\overline{vu_1}$ and $\overline{vv_1}$ be two edges of  $\Gamma$ incident on the same vertex $v$. Then $\Delta(vu_1v_1) \cap P = \emptyset$.
\end{property}

For a pair $\langle G_0,G_1 \rangle $ of WG-drawable graphs, a \emph{mutual witness Gabriel drawing} ({\em MWG-drawing}) is a pair $\langle \Gamma_0,\Gamma_1 \rangle $ of straight-line drawings such that $\Gamma_i$ is a WG-drawing of $G_i$ with witness set the vertices of $\Gamma_{1-i}$ ($i=0,1$). If $\langle G_0,G_1 \rangle $ admits an MWG-drawing we say that $\langle G_0,G_1 \rangle $ is \emph{mutually witness Gabriel drawable} (\emph{MWG-drawable}).

Details of proofs of statements marked with '*' can be found in the appendix.

%
%
%
%


\section{Linear Separability of Diameter-2 MWG-drawings}\label{se:diameter-2}

In this section we extend a result by Aronov et al.~\cite{DBLP:journals/ipl/AronovDH14} about the linear separability of the MWG-drawings of complete graphs to graphs of diameter two.


\begin{lemma}\label{le:non-crossing}
Let $\langle \Gamma_0, \Gamma_1 \rangle $ be an MWG-drawing such that $\Gamma_i$ has diameter at most $2$ ($i=0,1$).
Then no segment of $\Gamma_0$ intersects any segment of $\Gamma_1$.
\end{lemma}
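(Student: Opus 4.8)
The plan is to argue by contradiction: suppose some edge or segment $\overline{ab}$ of $\Gamma_0$ crosses some segment $\overline{xy}$ of $\Gamma_1$ at a point, and derive a violation of the adjacency rules. The first observation is a purely geometric one: if $\overline{ab}$ and $\overline{xy}$ cross, then at least one of the four "cross" pairs formed by an endpoint of one segment and an endpoint of the other — say $\overline{ax}$, $\overline{ay}$, $\overline{bx}$, $\overline{by}$ — has the property that its Gabriel disk $D[\cdot,\cdot]$ contains the crossing point, hence contains a point strictly inside the triangle cut off on the other side. More usefully, a standard fact about Gabriel disks is that if $\overline{ab}$ and $\overline{xy}$ cross at $p$, then one of the two "opposite" endpoints lies in the Gabriel disk of the crossing segment it is not on; concretely, since the crossing point $p$ lies on $\overline{ab}$, at least one of $x,y$ satisfies that the angle $\angle apx$ or $\angle bpx$ is obtuse-or-right in the appropriate configuration, which forces (after a short case analysis on which of $a,b$ is "closer" along the segment) that either $x$ or $y$ lies in $D[a,b]$, or that either $a$ or $b$ lies in $D[x,y]$. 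I would isolate this as the geometric core lemma.

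The next step is to exploit the crossing together with the diameter-$2$ hypothesis. The key point is that $\overline{ab}$ being a crossed segment of $\Gamma_0$ need not itself be an edge of $\Gamma_0$ — but if it is a non-edge, then since $\Gamma_0$ has diameter two there is a common neighbor $c$ of $a$ and $b$, so $\overline{ac}$ and $\overline{bc}$ are edges, and by Property~\ref{pr:triangle} the triangle $\Delta(acb)$ is free of vertices of $\Gamma_1$; I would use this to relocate the crossing to an actual edge or to get a contradiction directly. So the real work is the case where $\overline{ab}$ is an edge of $\Gamma_0$. Then $D[a,b]$ contains no vertex of $\Gamma_1$, so by the core lemma it must be $a$ or $b$ that lies in $D[x,y]$; this means $\overline{xy}$ is a non-edge of $\Gamma_1$. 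Now apply diameter two to $\Gamma_1$: $x$ and $y$ have a common neighbor $z$ in $\Gamma_1$, so $\overline{xz}$ and $\overline{yz}$ are edges of $\Gamma_1$ and, by Property~\ref{pr:triangle} applied in $\Gamma_1$ (whose witness set is the vertex set of $\Gamma_0$), the triangle $\Delta(xzy)$ contains no vertex of $\Gamma_0$. But the segment $\overline{ab}$ enters the region bounded by $\overline{xy}$ on the side of $z$ (or I make it do so, possibly replacing the crossed $\Gamma_1$-segment by $\overline{xz}$ or $\overline{yz}$ using that $\overline{ab}$ must also cross the boundary of $\Delta(xzy)$), and since $\Delta(xzy)$ is vertex-free of $\Gamma_0$ while $a$ or $b$ lies in $D[x,y]$, I get that the common neighbor $z$ — a vertex of $\Gamma_1$ — lies inside $D[a,b]$, contradicting that $\overline{ab}$ is an edge of $\Gamma_0$. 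Making this last containment assertion precise is where care is needed: one shows that whichever of $a,b$ lies in $D[x,y]$, the disk $D[a,b]$ must contain $z$ by combining the obtuse-angle characterization of membership in Gabriel disks with the fact that $z$ lies in the triangle $\Delta(xzy)$ and $\overline{ab}$ crosses into it.

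The step I expect to be the main obstacle is exactly this geometric bookkeeping: translating "segments cross" plus "the two triangles $\Delta(acb)$ in $\Gamma_0$ and $\Delta(xzy)$ in $\Gamma_1$ are each free of the other graph's vertices" into the final contradiction that some witness sits inside a Gabriel disk of an edge. The delicate part is handling which endpoints play which role — there are several symmetric sub-configurations depending on the order of intersection points along $\overline{ab}$ and on which common neighbor's triangle the segment passes through — and making sure that in every sub-case one genuinely lands a vertex of one drawing inside the Gabriel disk of an edge of the other. I would organize this as a short sequence of claims: (1) the core "crossing implies a point in some Gabriel disk" lemma; (2) reduction to the case where the crossed segments are edges of their respective drawings, using Property~\ref{pr:triangle} and diameter two; (3) the final containment argument. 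Once no $\Gamma_0$-segment meets any $\Gamma_1$-segment, the lemma follows.
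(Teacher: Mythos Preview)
Your overall plan — handle edge-vs-edge via the quadrilateral angle argument, then treat the cases involving non-edges using the diameter-$2$ common neighbor together with Property~\ref{pr:triangle} — is exactly the paper's strategy, and your edge-vs-edge case is correct.

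The gap is in your edge-vs-nonedge step. You set as your target the containment $z \in D[a,b]$, flag it as ``where care is needed,'' and never justify it; this is the wrong target. The clean finish is the one you mention only parenthetically and then set aside: once you know (from the already-proved edge-vs-edge case) that the edges $\overline{xz}$ and $\overline{yz}$ of $\Gamma_1$ do not cross the edge $\overline{ab}$ of $\Gamma_0$, while the side $\overline{xy}$ of $\Delta(xzy)$ \emph{does} cross $\overline{ab}$, it follows immediately that one endpoint of $\overline{ab}$ is trapped inside $\Delta(xzy)$ — and that already contradicts Property~\ref{pr:triangle}. No claim about $z \in D[a,b]$ is needed, and chasing it is precisely what generates the ``geometric bookkeeping'' you anticipate as the main obstacle. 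The nonedge-vs-nonedge case then goes through by the identical triangle-trapping argument, now invoking the just-proved edge-vs-nonedge case (rather than edge-vs-edge) to rule out crossings with $\overline{xz}$ and $\overline{yz}$. This inductive layering of the three cases is what keeps the paper's proof to a few lines.
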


\begin{proof}
Note first that by Property~\ref{pr:triangle}, a vertex $u$ of $\Gamma_i$ cannot lie on a non-edge $\overline{u_1, v_1}$ of $\Gamma_{1-i}$
since $\{u_1, v_1\}$ have at least one common neighbor $v$. Also no vertex of $\Gamma_i$ can lie on an edge of $\Gamma_{1-i}$.
Let $\overline{u_0v_0}$ be an edge of $\Gamma_0$ and let $\overline{u_1v_1}$ be an edge of $\Gamma_1$. Assume that they cross and consider the quadrilateral $Q$ whose vertices are the end-points of the two crossing edges. Since some internal angle of $Q$ must be at least $\frac{\pi}{2}$, either $D[u_0, v_0]$ contains one of $\{u_1,v_1\}$ or $D[u_1, v_1]$ contains one of $\{u_0,v_0\}$ contradicting the fact that both $\overline{u_0v_0}$ is an edge of $\Gamma_0$ and $\overline{u_1v_1}$ is an edge of $\Gamma_1$.

Let $\overline{u_0v_0}$ be an edge of $\Gamma_0$ and let $\overline{u_1v_1}$ be a non-edge of $\Gamma_1$. Since $\Gamma_1$ has diameter at most two, there is a vertex $v$ in $\Gamma_1$ such that both $\overline{vu_1}$ and $\overline{vv_1}$ are edges of $\Gamma_1$. Since $\overline{u_1 v_1}$ crosses $\overline{u_0 v_0}$, but neither $\overline{vu_1}$ nor $\overline{vv_1}$ crosses $\overline{u_0v_0}$, we have that one of $\{u_0,v_0\}$ is a point of $\Delta(u_1,v_1,v)$. However, $\Gamma_1$ is a WG-drawing whose witness set is the set of vertices of $\Gamma_0$ and, by Property~\ref{pr:triangle}, no vertex of $\Gamma_0$ can be a point of $\Delta(u_1,v_1,v)$. An analogous argument applies when $\overline{u_0v_0}$ is a non-edge of $\Gamma_0$ while $\overline{u_1v_1}$  is  an edge of $\Gamma_1$.


It remains to consider the case that $\overline{u_0v_0}$ is a non-edge of $\Gamma_0$ and $\overline{u_1v_1}$ is a non-edge of $\Gamma_1$. Let $v$ be a vertex such that both $\overline{vu_1}$ and $\overline{vv_1}$ are edges of $\Gamma_1$. By the previous case, neither of these two edges can cross $\overline{u_0v_0}$. It follows that one of $\{u_0,v_0\}$ is a point of $\Delta(u_1,v_1,v)$ which, by Property~\ref{pr:triangle}, is impossible.
\end{proof}

\begin{figure}[t]
	\centering
		\includegraphics[width=0.8\textwidth, page=1]{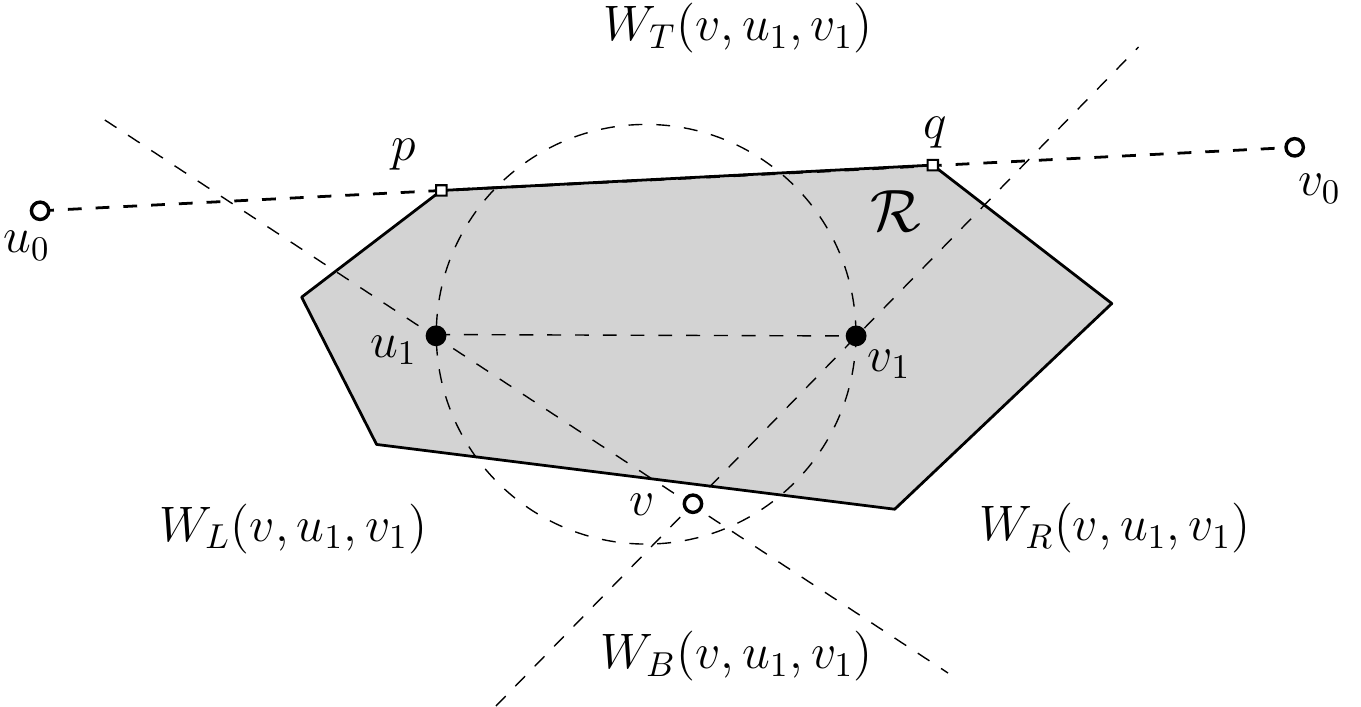}
			\caption{Illustration for the proof of \cref{th:linear-separability}: a non-edge $\overline{u_1v_1}$ of $\Gamma_1$ in a region $\mathcal{R}$ bounded by portions of edges and non-edges of $\Gamma_0$.}
	\label{fi:linear-separability}
\end{figure}

We are now ready to prove the main result of this section. We denote by $\mathit{CH}(\Gamma)$ the convex hull of the vertex set of a drawing $\Gamma$.

\begin{theorem}\label{th:linear-separability}
Let $\langle \Gamma_0, \Gamma_1 \rangle $ be an MWG-drawing such that each $\Gamma_i$ has diameter $2$ . Then $\Gamma_0$ and $\Gamma_1$ are linearly separable.
\end{theorem}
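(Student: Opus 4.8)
The plan is to leverage Lemma~\ref{le:non-crossing}, which already tells us that no segment (edge or non-edge) of $\Gamma_0$ crosses any segment of $\Gamma_1$. In particular, the convex hulls $\mathit{CH}(\Gamma_0)$ and $\mathit{CH}(\Gamma_1)$ have disjoint interiors: if they overlapped in an open region, then since each hull is the union of triangles spanned by triples of its own vertices, some segment connecting two vertices of $\Gamma_0$ would have to cross some segment connecting two vertices of $\Gamma_1$ (a standard convexity argument). Two convex bodies with disjoint interiors can be separated by a line, so $\Gamma_0$ and $\Gamma_1$ are linearly separable --- \emph{unless} the two hulls touch in a way that forces a shared vertex or an interleaving on the boundary. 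So the real content is to rule out the degenerate boundary configurations.

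First I would record the two facts from the proof of Lemma~\ref{le:non-crossing} that no vertex of $\Gamma_i$ lies on any edge or non-edge of $\Gamma_{1-i}$; in particular the point sets are in ``general enough'' position that no vertex of one drawing lies on the boundary segment between two vertices of the other. Combined with non-crossing of all segments, this already gives that $\mathit{CH}(\Gamma_0)$ and $\mathit{CH}(\Gamma_1)$ are interior-disjoint and share no vertex. Then I would argue that they cannot share more than a single boundary point: if the hulls met along a segment or at two distinct points, one could find a non-edge or edge of $\Gamma_0$ and one of $\Gamma_1$ forced to cross, or a vertex of one lying on a segment of the other. Hence the hulls are either disjoint or touch at exactly one point $b$, which is not a vertex of either drawing, and lies in the relative interior of an edge/non-edge segment of each --- but that again contradicts the ``no vertex on a segment'' observation applied to the vertices of $\Gamma_i$ that are the endpoints of that segment. (Here I would need the diameter-$2$ hypothesis once more, exactly as in Lemma~\ref{le:non-crossing}, to handle the non-edge sub-cases: any non-edge $\overline{u_1v_1}$ has a common neighbour $v$, and Property~\ref{pr:triangle} forbids vertices of $\Gamma_0$ inside $\Delta(u_1 v_1 v)$.) Once the hulls are genuinely disjoint compact convex sets, the separating hyperplane theorem finishes the argument.

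The figure suggests the authors instead run a more hands-on argument: take a non-edge $\overline{u_1 v_1}$ of $\Gamma_1$ and look at the region $\mathcal{R}$ cut out of the plane by the edges and non-edges of $\Gamma_0$ that ``surround'' it, showing $\mathcal{R}$ stays on one side, and then patch these regions together over all of $\Gamma_1$. If I wanted to follow that route, the key steps would be: (i) show every vertex of $\Gamma_1$ lies strictly outside $\mathit{CH}(\Gamma_0)$ (immediate from non-crossing plus the no-vertex-on-segment fact, since a vertex of $\Gamma_1$ inside $\mathit{CH}(\Gamma_0)$ would have every segment from it to a vertex of $\Gamma_0$ blocked, forcing crossings with a spanning triangulation of $\Gamma_0$); (ii) symmetrically every vertex of $\Gamma_0$ lies strictly outside $\mathit{CH}(\Gamma_1)$; (iii) conclude the hulls are disjoint and apply linear separation. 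Step (i) is where the diameter-$2$ assumption and Property~\ref{pr:triangle} do their work, because to ``triangulate'' $\Gamma_0$ from an interior point one needs edges of $\Gamma_0$, and the obstacle is precisely that $\Gamma_0$ need not be a complete graph --- this is the gap Aronov et al.\ did not have to face, since for them $\Gamma_0$ was complete.

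The main obstacle I anticipate is exactly this: translating ``no crossings'' into ``disjoint convex hulls'' when neither drawing is a complete graph, so that the convex hull of a drawing is not literally covered by its own edges. The careful step is to show that a point of $\Gamma_1$ inside $\mathit{CH}(\Gamma_0)$ (or on its boundary) yields a contradiction; this requires decomposing $\mathit{CH}(\Gamma_0)$ using triangles $\Delta(u_0 v_0 w_0)$ whose sides are segments of $\Gamma_0$ --- some edges, some non-edges --- and invoking Lemma~\ref{le:non-crossing} (which already covers the non-edge cases) rather than Property~\ref{pr:triangle} directly. Once interior-disjointness and the absence of shared boundary vertices are established, the separation itself is routine.
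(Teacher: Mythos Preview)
There is a genuine gap. Your ``standard convexity argument'' for interior-disjointness of the hulls is false precisely in the nested case: take $\Gamma_0$ to be three points forming a large triangle and $\Gamma_1$ three points forming a small triangle inside it; the hulls overlap in an open region yet no segment of $\Gamma_0$ crosses any segment of $\Gamma_1$. Non-crossing of all segments, together with no vertex of one drawing lying on a segment of the other, implies only that the hull \emph{boundaries} are disjoint, hence either the hulls are disjoint (and you are done) or one hull sits inside the other. Ruling out this nested configuration is the whole theorem.

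Your fallback arguments do not rule it out either. In your step~(i), ``every segment from [a vertex of $\Gamma_1$] to a vertex of $\Gamma_0$'' is not a segment of either drawing, so Lemma~\ref{le:non-crossing} says nothing about it. And triangulating $\mathit{CH}(\Gamma_0)$ by segments of $\Gamma_0$ gives no contradiction: all of $\Gamma_1$ could sit inside a single triangle of that triangulation, with no crossings at all. The diameter-$2$ hypothesis has not yet been used in any essential way beyond what already went into Lemma~\ref{le:non-crossing}.

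The paper's proof attacks the nested case directly and quite differently. Assuming $\mathit{CH}(\Gamma_1)$ lies in a convex cell $\mathcal{R}$ of the arrangement of segments of $\Gamma_0$, it picks a non-edge $\overline{u_1v_1}$ of $\Gamma_1$ (such a non-edge exists because the diameter is~$2$, not~$1$) and its witness $v\in\Gamma_0$; then it locates a boundary segment $\overline{u_0v_0}$ of $\mathcal{R}$ meeting the top wedge $W_T(v,u_1,v_1)$ and shows, via a wedge argument, that $u_0\in W_L(v,u_1,v_1)$, $v_0\in W_R(v,u_1,v_1)$, and $\overline{u_0v_0}$ is itself a non-edge of $\Gamma_0$. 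Now the diameter-$2$ hypothesis on $\Gamma_0$ produces a common neighbour $z$ of $u_0,v_0$, and a case analysis on which wedge of $v$ contains $z$ (using Property~\ref{pr:triangle} and Lemma~\ref{le:non-crossing}) yields a contradiction in every case. That interplay between a witness of a $\Gamma_1$-non-edge and a common neighbour of a $\Gamma_0$-non-edge is the missing idea in your sketch.
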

\begin{proof}
By \cref{le:non-crossing}, no vertex, edge or non-edge of $\Gamma_i$ intersects any vertex, edge or non-edge of $\Gamma_{1-i}$.
Hence, either $\mathit{CH}(\Gamma_0)$ and $\mathit{CH}(\Gamma_1)$ are linearly separable and we are done, or one of the convex hulls -- say $\mathit{CH}(\Gamma_1)$ -- is contained in a convex region $\mathcal{R}$ bounded by (portions of) edges and/or non-edges of $\Gamma_0$. We prove that region $\mathcal{R}$ cannot exist, which implies the statement.
Suppose for a contradiction that $\Gamma_1$ is contained in $\mathcal{R}$ and let $\overline{u_1v_1}$ be a non-edge of $\Gamma_1$ with $x(u_1) \leq x(v_1)$. See Figure~\ref{fi:linear-separability} for a schematic illustration.  Since $u_1$ and $v_1$ are not adjacent, there is some vertex $v$ of $\Gamma_0$ such that $v \in D[u_1, v_1]$.
Without loss of generality, assume that $\overline{u_1v_1}$ is horizontal and that $v$ is below the line through $u_1$ and $v_1$. Since $u_1$ and $v_1$ are points of $W_T[v, u_1, v_1]$ and $\Gamma_1$ is contained in $\mathcal{R}$, there is some segment $\overline{pq}$ of the boundary of $\mathcal{R}$  such that $\overline{pq}$ intersects $W_T(v, u_1, v_1)$ above the line through $u_1$ and $v_1$.  Let $u_0$ and $v_0$ be the two vertices of $\Gamma_0$ such that $\overline{pq}$ is a subset of $\overline{u_0v_0}$.
For concreteness, we assume that the $x$-coordinates of $u_0$, $p$, $q$, and $v_0$ are such that $x(u_0) \leq x(p) \leq x(q) \leq x(v_0)$.

\begin{claim}\label{cl:opposite-wedges}
$u_0 \in W_L(v, u_1, v_1)$ , $v_0 \in W_R(v, u_1, v_1)$ and $\overline{u_0v_0}$ is a non-edge of $\Gamma_0$.
\end{claim}
\noindent \emph{Proof of the claim:} Suppose for a contradiction that a vertex of $\{u_0,v_0\}$ -- say $v_0$ -- were a point of $W_T[v,u_1,v_1]$. Since $\overline{pq}$ intersects $W_T(v, u_1, v_1)$ above the horizontal line through $u_1$ and $v_1$, we have that  $v_0$ must also be above this horizontal line or else $\overline{u_0v_0}$ and $\overline{u_1v_1}$ would cross, contradicting \cref{le:non-crossing}. However, if $v_0$ is above the line through $u_1$ and $v_1$ we have that $\overline{u_1v_1}$  and $\overline{vv_0}$ cross which again contradicts \cref{le:non-crossing}. Therefore, $v_0 \not \in W_T[v, u_1, v_1]$  and, by the same argument, $u_0 \not \in W_T[v, u_1, v_1]$. Note that this argument also precludes either point of $\{u_0,v_0\}$ from
being in $W_B[v, u_1, v_1]$, since, because $\overline{pq}$ intersects $W_T[v, u_1, v_1]$, we would then have that the other point of
$\{u_0,v_0\}$ lies in $W_T[v, u_1, v_1]$.
Finally, observe that if $u_0$ and $v_0$ were both points of either $W_L(v, u_1, v_1)$ or $W_R(v, u_1, v_1)$, segment $\overline{pq}$ would not intersect $W_T(v, u_1, v_1)$. It follows that $u_0 \in W_L(v, u_1, v_1)$  and $v_0 \in W_R(v, u_1, v_1)$.
Note that $\triangle(u_0vv_0)$ contains both $u_1$ and $v_1$, so $\angle u_0 v v_0 > \angle u_1 v v_1 \geq \frac{\pi}{2}$, which implies that $\angle u_0 v_1 v_0 > \frac{\pi}{2}$ and so $\overline{u_0v_0}$ is a non-edge of $\Gamma_0$. This concludes the proof of the claim.

\begin{figure}[t]
		\centering
		\subfigure[]{\label{fi:diameter-not-two-a}\includegraphics[width=0.4\columnwidth]{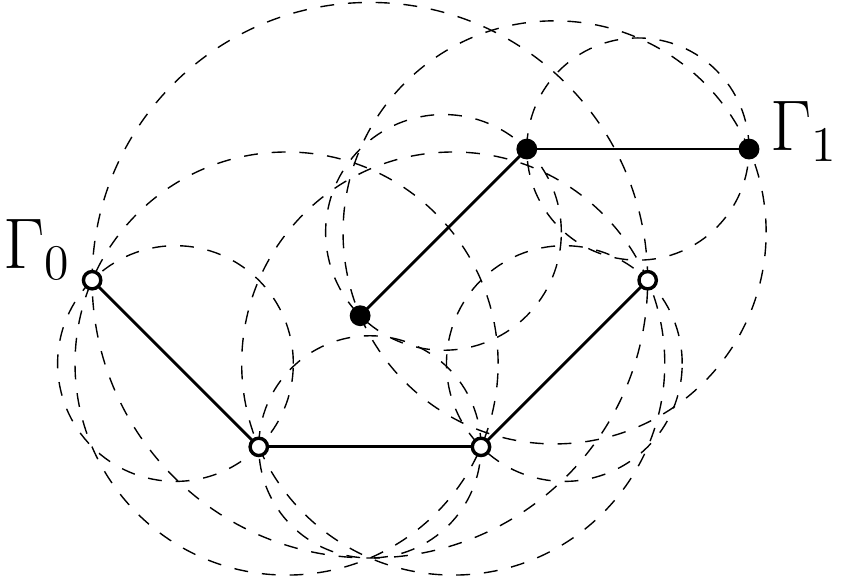}}
        \hfill
		\subfigure[]{\label{fi:diameter-not-two-b}\includegraphics[width=0.4\columnwidth]{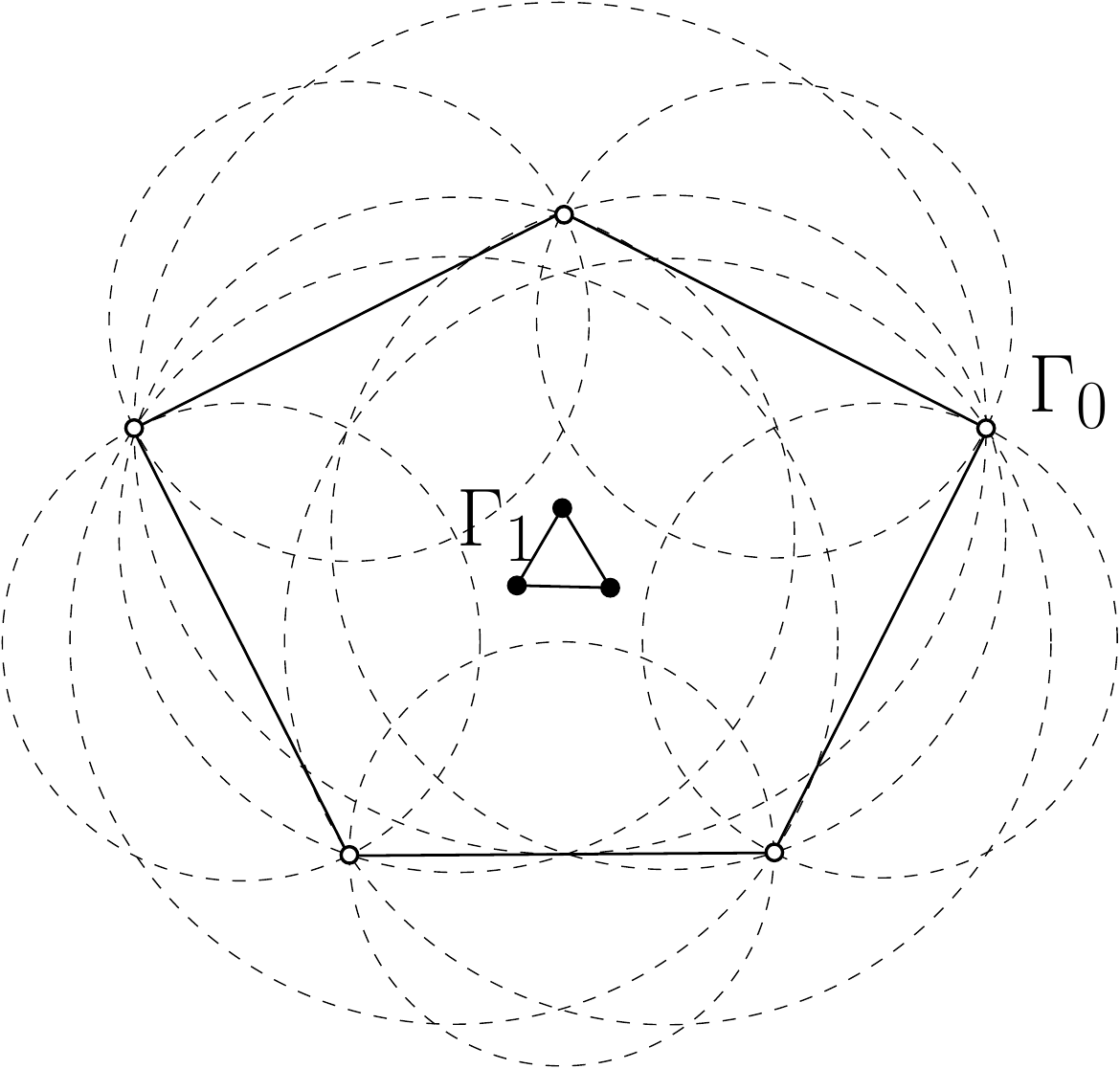}}
		\caption{Non-linearly separable MWG-drawings of: (a) a diameter two graph and a diameter three graph; (b) a diameter two graph and a diameter one graph.}\label{fi:dimater-not-two}
	\end{figure}

\medskip

By the claim above and by the assumption that $\Gamma_0$ has diameter two, there is some vertex $z$ such that both $\overline{zu_0}$ and $\overline{zv_0}$ are edges of $\Gamma_0$. Vertex $z$ may or may not coincide with $v$. If $z$ coincides with $v$ or if $z \in W_B[v, u_1, v_1]$, we have that $\Delta(z u_0 v_0)$ contains both $u_1$ and $v_1$ and two of its sides are edges of $\Gamma_0$, which contradicts Property~\ref{pr:triangle}.
If $z \in W_T[v, u_1, v_1]$ and $z$ is above the line through $u_1$ and $v_1$, we have that $\overline{vz}$ and $\overline{u_1v_1}$ cross, which contradicts \cref{le:non-crossing}. If $z \in W_T[v, u_1, v_1]$ and $z$ is below the line through $u_1$ and $v_1$, either  $\overline{u_1v_1}$ crosses one of $\{\overline{zu_0}, \overline{zv_0}\}$ contradicting \cref{le:non-crossing}, or $\Delta(zu_0v_0)$ contains both $u_1$ and $v_1$ contradicting Property~\ref{pr:triangle}.
If $z \in W_L(v, u_1, v_1)$, we consider three cases. If edge $\overline{zv_0}$ crosses $\overline{u_1v_1}$, we would violate \cref{le:non-crossing}. If edge $\overline{zv_0}$ is above $\overline{u_1v_1}$, then $\angle{zvv_0} > \frac{\pi}{2}$ and since both $u_1$ and $v_1$ are in the interior of $\triangle(zvv_0)$, we have that $u_1$ and $v_1$ are in $D[z,v_0]$, contradicting the fact that $\overline{zv_0}$ is an edge of $\Gamma_0$.   If edge $\overline{zv_0}$ is below $\overline{u_1v_1}$,  $\triangle(u_0zv_0)$ contains both $u_1$ and $v_1$, which violates Property~\ref{pr:triangle}. By a symmetric argument, we have that $z$ cannot be a point of $W_R(v, u_1, v_1)$ either. Since point $z$ does not exist, it follows that $\mathcal{R}$ does not exist.
\end{proof}

\cref{th:linear-separability} shows that  MWG-drawings with diameter two capture useful information about the interaction of two point sets. As pointed out by both Ichino and Slansky~\cite{DBLP:journals/pr/IchinoS85} and  by Aronov et al.~\cite{DBLP:journals/ipl/AronovDH14}, the linear separability of mutual witness proximity drawings gives useful information about the interclass structure of two set of points. It is also worth noting that if at least one of the graphs in the pair has diameter different from two, a non-linearly separable drawing may exist. For example Figure~\ref{fi:diameter-not-two-a} and Figure~\ref{fi:diameter-not-two-b} show MWG-drawings of graph pairs in which the diameter two property is violated for one of the two graphs.


\section{MWG-drawable Complete Bipartite Graphs}\label{se:bipartite}

In this section we exploit \cref{th:linear-separability} to characterize those pairs of complete bipartite graphs that admit an MWG-drawing. In Section~\ref{sse:planarity} we prove that any two complete bipartite graphs that form an MWG-drawable pair are planar. The complete characterization is then given in Section~\ref{sse:characterization}. In what follows we shall assume without loss of generality that the line separating a drawing $\Gamma$ from its set of witnesses is horizontal and it coincides with the line $y=0$, with the witnesses in the negative half-plane.
The proof of the following property is trivial and therefore omitted, but Figure~\ref{fi:in-the-wedge} and its caption illustrate it.

\begin{figure}[ht]
		\centering
		\subfigure[]{\label{fi:in-the-wedge}\includegraphics[width=0.4\columnwidth]{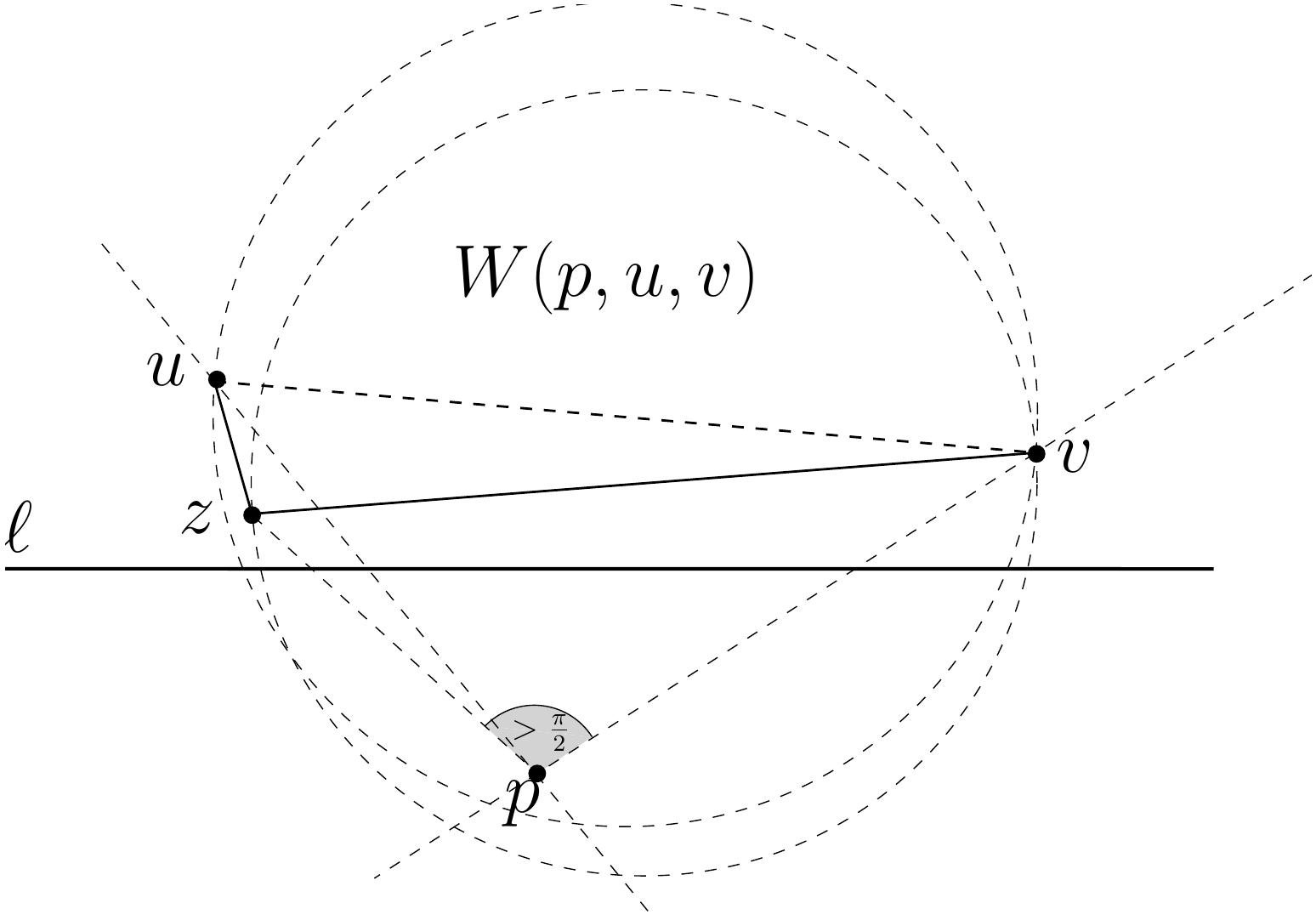}}
\hfill
		\subfigure[]{\label{fi:alternating}\includegraphics[width=0.5\columnwidth]{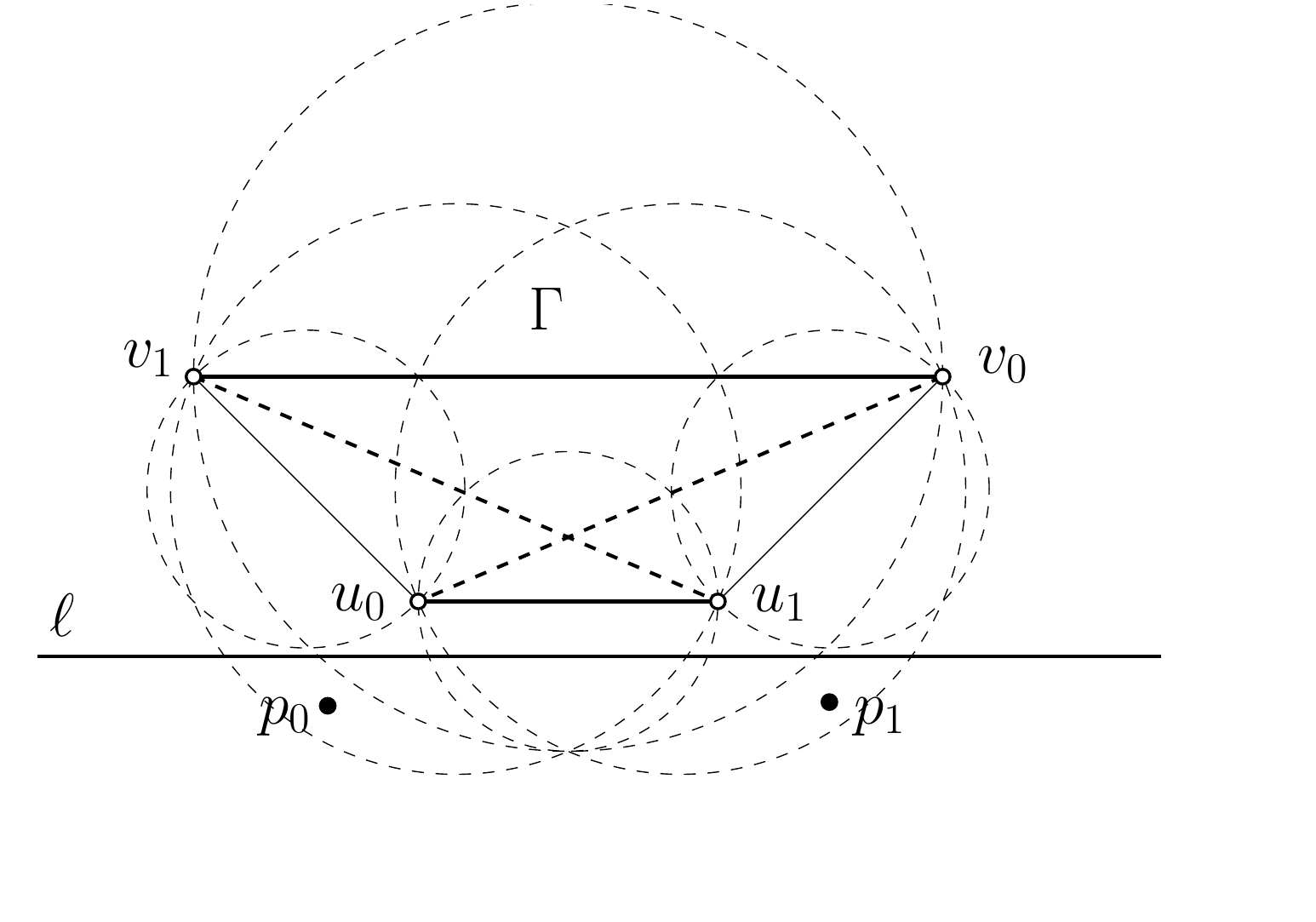}}
		\caption{(a) If $z \not \in W(p, u, v)$, then $p \in D[z,v]$ and $\overline{zv}$ is not an edge of $\Gamma$. (b) A WG-drawing $\Gamma$ with an alternating 4-cycle highlighted in bold.}\label{fi:universal}
	\end{figure}

\begin{property}\label{pr:common-neighbor}
Let $\Gamma$ be a WG-drawing with witness set $P$, let $\overline{uv}$ be a non-edge of $\Gamma$ with witness $p \in P$, and let $z$ be a vertex of $\Gamma$ such that both $\overline{zu}$ and $\overline{zv}$ are edges of $\Gamma$. Then $z \in W(p, u, v)$.
\end{property}

\subsection{Planarity}\label{sse:planarity}

Let $\Gamma$ be a WG-drawing; an \emph{alternating 4-cycle} in  $\Gamma$ consists of two vertex-disjoint edges $\overline{u_0u_1}$ and $\overline{v_0v_1}$ of $\Gamma$ such that $\overline{u_0v_0}$ and $\overline{u_1v_1}$ are both non-edges in the drawing. For example, Figure~\ref{fi:alternating} shows a WG-drawing $\Gamma$ whose witness set consists of points $p_0$ and $p_1$. In the figure, $\overline{u_0u_1}$ and $\overline{v_0v_1}$ are edges of $\Gamma$ while $\overline{u_0v_0}$ and $\overline{u_1v_1}$ are non-edges of $\Gamma$: these two pairs of edges  and non-edges (bolder in the figure) form an alternating 4-cycle in $\Gamma$.


\begin{lemma}\label{le:alternating-4-cycle}
Let $\Gamma$ be a WG-drawing of a complete bipartite graph such that $\Gamma$ is linearly separable from its witness set $P$ and let $C$ be an alternating 4-cycle defined on $\Gamma$. The two edges of $\Gamma$ in $C$ do not cross while the two non-edges of $\Gamma$ in $C$ do cross.
\end{lemma}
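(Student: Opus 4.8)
Let $C$ consist of edges $\overline{u_0u_1}$, $\overline{v_0v_1}$ and non-edges $\overline{u_0v_0}$, $\overline{u_1v_1}$ of $\Gamma$. Since $\Gamma$ is a WG-drawing of a complete bipartite graph, and $\overline{u_0u_1}$ is an edge while $\overline{u_0v_0}$ and $\overline{u_1v_1}$ are non-edges, the vertices $u_0,v_0$ lie on one side of the bipartition and $u_1,v_1$ on the other; hence $\overline{v_0v_1}$ being an edge is consistent, and the two edges of $C$ join the two sides while the two non-edges stay within a side.

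**Plan.** The plan is to work with the separating line $y=0$, with $\Gamma$ above it and $P$ below it, and to use Property~\ref{pr:triangle} and Property~\ref{pr:common-neighbor} as the two main tools. First I would prove that the two non-edges cross. Since $\overline{u_0v_0}$ is a non-edge, it has a witness $p\in P$ with $p\in D[u_0,v_0]$; since $\overline{u_1v_1}$ is a non-edge, it has a witness $q\in P$ with $q\in D[u_1,v_1]$. Both $p$ and $q$ are below the line $y=0$ while all four vertices $u_0,u_1,v_0,v_1$ are above it. The key observation is that $u_1$ is a common neighbor of $u_0$ and... no — rather, $u_0$ and $u_1$ are adjacent, $v_0$ and $v_1$ are adjacent, but within $C$ the useful common-neighbor relations are: $u_1$ is adjacent to both $u_0$ and $v_0$? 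That need not hold. Instead I would argue geometrically: consider the four points; if $\overline{u_0v_0}$ and $\overline{u_1v_1}$ did not cross, then (since no vertex can lie on a non-edge by Property~\ref{pr:triangle}, applied using a common neighbor, which exists because the graph is complete bipartite and each vertex has degree $\geq 1$ — here one must be slightly careful and use that in a complete bipartite graph with both sides nonempty every vertex of one side is adjacent to every vertex of the other) the four points would be in convex position with $u_0,v_0$ consecutive on the convex hull, or one of $u_1,v_1$ would lie inside the triangle of the other three. I would rule out the inside case via Property~\ref{pr:triangle} (using an edge of $C$ together with the edge from that vertex to a common neighbor), and in the convex-position case I would show that the disk $D[u_0,v_0]$ must then contain $u_1$ or $v_1$ — since with $u_0,v_0$ adjacent on the hull and $u_1,v_1$ on the same side of line $\overline{u_0v_0}$, the angle $\angle u_0 u_1 v_0$ or $\angle u_0 v_1 v_0$ being obtuse would force a vertex into $D[u_0,v_0]$, contradicting that $u_0$ and $v_0$ are non-adjacent only via a witness in $P$ (not via a vertex of $\Gamma$). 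Actually the cleaner contradiction: if the non-edges do not cross, then the witness $p\in D[u_0,v_0]$ lies below $y=0$, but I can show $D[u_0,v_0]$ stays close to the segment $\overline{u_0v_0}$ which is above the line, forcing the edge $\overline{u_0u_1}$ or $\overline{v_0v_1}$ — no. Let me instead lean on the linear separability more directly.

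**Refined plan for the crossing of non-edges.** Use linear separability: all vertices of $\Gamma$ are in the open half-plane $y>0$, all witnesses in $y<0$. For the non-edge $\overline{u_0v_0}$, any witness $w$ realizing it lies in $D[u_0,v_0]\cap P$, so $w$ has $y(w)<0<\min(y(u_0),y(v_0))$ forces $y(u_0)$ and $y(v_0)$ to be bounded, and more importantly the lowest point of $D[u_0,v_0]$ has negative $y$-coordinate, which constrains $u_0,v_0$ to be near the line. That alone doesn't give the crossing. The real engine is: suppose $\overline{u_0v_0}$ and $\overline{u_1v_1}$ do not cross. By Property~\ref{pr:common-neighbor}, since $\overline{u_0v_0}$ is a non-edge with witness $p$ and $u_1$ is a common neighbor of $u_0$ and $v_0$ (true: complete bipartite, $u_1$ on the opposite side from $u_0,v_0$, hence $\overline{u_0u_1}$ and $\overline{v_0u_1}$ are edges), we get $u_1\in W(p,u_0,v_0)$; likewise $v_1\in W(p,u_0,v_0)$. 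Symmetrically, $u_0,v_0\in W(q,u_1,v_1)$. Now $W(p,u_0,v_0)$ is the wedge at $p$ through $u_0$ and $v_0$; since $p$ is below the line through $u_0,v_0$ roughly, this wedge is "above" $\overline{u_0v_0}$ in a precise sense, and symmetrically for the other. I would then show these two wedge-containment conditions, together with non-crossing of the two non-edges, are geometrically contradictory (both quadrilateral and hull configurations get eliminated), while the edges $\overline{u_0u_1}$ and $\overline{v_0v_1}$ are forced not to cross by Property~\ref{pr:triangle}: indeed $u_0u_1$ and $v_0v_1$ crossing would, together with the edges $\overline{u_0u_1},\overline{u_0v_1}$ (the latter an edge since complete bipartite!) — wait, is $\overline{u_0v_1}$ an edge? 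Yes: $u_0$ is opposite side from $v_1$. So $u_0$ has edges to both $u_1$ and $v_1$, hence by Property~\ref{pr:triangle} $\Delta(u_0u_1v_1)\cap P=\emptyset$; similarly $v_0$ has edges to both, $\Delta(v_0u_1v_1)\cap P=\emptyset$. If $\overline{u_0u_1}$ crosses $\overline{v_0v_1}$, the four points are in convex order $u_0,v_0,u_1,v_1$ or similar, and then one can locate the witness $p\in D[u_0,v_0]$ inside one of these empty triangles, a contradiction — this is the cleanest route. The main obstacle is the careful case analysis of relative positions of the four vertices (convex position versus one-inside-triangle) and pinning down which of the empty triangles $\Delta(u_0u_1v_1)$, $\Delta(v_0u_1v_1)$, $\Delta(u_0v_0u_1)$, $\Delta(u_0v_0v_1)$ the witnesses $p$ and $q$ are forced into.

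**Where the difficulty lies.** I expect the main obstacle to be the geometric case analysis showing that the configuration "edges don't cross AND non-edges don't cross" is impossible: one must systematically consider the $3$ combinatorially distinct convex configurations of $4$ labeled points (which pair is "separated" by the diagonal) plus the $4$ one-point-in-triangle configurations, and in each either exhibit a crossing that's already forbidden, or place a witness ($p$ for $\overline{u_0v_0}$, $q$ for $\overline{u_1v_1}$) into a triangle that Property~\ref{pr:triangle} says is empty, or derive that $D[u_0,v_0]$ or $D[u_1,v_1]$ contains a vertex of $\Gamma$ (forbidden since those non-edges are witnessed only by points of $P$, and a vertex of $\Gamma$ in the disk would make them non-edges "for the wrong reason" — more precisely, it would mean the corresponding pair on the other side, by complete bipartiteness being adjacent, violates Property~\ref{pr:triangle}). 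Once non-crossing of edges and crossing of non-edges are both established, the lemma is complete. I would organize it as: (i) observe all relevant within-$C$ pairs across the bipartition are edges, giving four empty triangles via Property~\ref{pr:triangle}; (ii) show the two edges of $C$ cannot cross; (iii) conclude from (ii) plus the existence of witnesses $p,q$ for the two non-edges that the two non-edges must cross, since in the unique remaining convex configuration (the two edges as "sides", the two non-edges as "diagonals") the diagonals cross.
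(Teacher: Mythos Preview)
Your ``refined plan'' paragraph lands on the right mechanism---apply Property~\ref{pr:common-neighbor} with $u_1,v_1$ as common neighbours of the non-edge $\overline{u_0v_0}$ (and symmetrically), placing $u_1,v_1\in W(p_0,u_0,v_0)$ and $u_0,v_0\in W(p_1,u_1,v_1)$ for witnesses $p_0,p_1$---and this is exactly the paper's approach. But you never carry that argument to completion, and the organized plan (i)--(iii) you settle on instead has real gaps.

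Step~(ii) does not work as sketched. You propose to rule out a crossing of $\overline{u_0u_1}$ and $\overline{v_0v_1}$ by ``locating the witness $p\in D[u_0,v_0]$ inside one of the empty triangles'' coming from Property~\ref{pr:triangle}. But all four vertices $u_0,u_1,v_0,v_1$ lie strictly above the separating line, so each of those triangles lies entirely above the line, while every witness lies strictly below it. Hence $p$ is never in any of those triangles and no contradiction arises. Step~(iii) is also incomplete even granting~(ii): non-crossing of $\overline{u_0u_1},\overline{v_0v_1}$ does not force the non-edges to be the diagonals of a convex quadrilateral---one vertex could sit inside the triangle of the other three, or the crossing pair could be the \emph{other} edges $\overline{u_0v_1},\overline{u_1v_0}$. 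You flag these as the ``main obstacle'' but do not resolve them.

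The paper avoids both difficulties by reversing the order: it proves directly that the non-edges $\overline{u_0v_0}$ and $\overline{u_1v_1}$ cross, after which non-crossing of the edges is immediate (among four points, at most one of the three opposite-segment pairs can cross). The crossing argument is the wedge analysis you hinted at: with $p_0\in D[u_0,v_0]$ and $p_1\in D[u_1,v_1]$, one shows $p_0\notin W_T[p_1,u_1,v_1]$ (else $p_0\in D[u_1,v_1]$, but $p_0$ cannot witness both non-edges by Property~\ref{pr:common-neighbor}), and symmetrically $p_1\notin W_T[p_0,u_0,v_0]$ and $p_1\notin W_B[p_0,u_0,v_0]$. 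Hence $p_1$ lies in a side wedge $W_L$ or $W_R$ of $p_0$, and then the containment $u_0,v_0\in W_T(p_1,u_1,v_1)$ forces $\overline{u_0v_0}$ to cross $\overline{u_1v_1}$. That is the missing execution of your refined plan.
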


\begin{proof} Let  $\ell$ be the line separating $\Gamma$ from its witness set. Let $u_0,u_1,v_1,v_0$ be the four vertices of $C$ such that $\overline{u_0u_1}$ and $\overline{v_0v_1}$ are two edges of $\Gamma$ while $\overline{u_0v_0}$ and $\overline{u_1v_1}$ are two non-edges of $\Gamma$.  Since the drawing is a complete bipartite graph,  $\overline{u_0v_1}$ and $\overline{v_0u_1}$ are edges of $\Gamma$. We prove that $\overline{u_0v_0}$ and $\overline{u_1v_1}$ must cross in $\Gamma$, which implies that $\overline{u_0u_1}$ and $\overline{v_0v_1}$ do not cross.

Let $p_0 \in P$ such that $p_0 \in D[u_0,v_0]$. By Property~\ref{pr:common-neighbor}, both $u_1$ and $v_1$ lie in the wedge $W(p_0, u_0, v_0) = W_T(p_0, u_0, v_0)$. Observe that $p_0$ cannot also be a witness for the pair $u_1$ and $v_1$ as otherwise, by Property~\ref{pr:common-neighbor}, we should have that also $u_0$ and $v_0$ lie in the top wedge $W_T(p_0, u_1, v_1)$, which is impossible. So, let $p_1 \in P$ be distinct from $p_0$ and such that $p_1 \in D[u_1,v_1]$. If $p_0$ were a point in $W_T[p_1, u_1, v_1]$, $p_0$ would also be a point in $\triangle(v_1p_1u_1)$ and we would have $p_0 \in D[u_1,v_1]$, which we just argued is impossible (see, e.g. Figure~\ref{fi:no-crossing}). By analogous reasoning we have that $p_1$ cannot be a point of $W_T[p_0, u_0, v_0]$.
Also, $p_1 \not \in W_B[p_0, u_0, v_0]$ or else $p_0$ would be in $W_T[p_1, u_1, v_1]$ since $W_T(p_1, u_1, v_1)$ contains both $u_0$ and $v_0$.
It follows that either $p_1 \in W_L(p_0, u_0, v_0)$ or $p_1 \in W_R(p_0, u_0, v_0)$. In either case, $W_T(p_1, u_1, v_1)$ can contain both $u_0$ and $v_0$ only if $\overline{u_0v_0}$ and $\overline{u_1v_1}$ cross.
 \end{proof}

\begin{figure}[ht]
	\centering
		\includegraphics[width=0.7\textwidth, page=1]{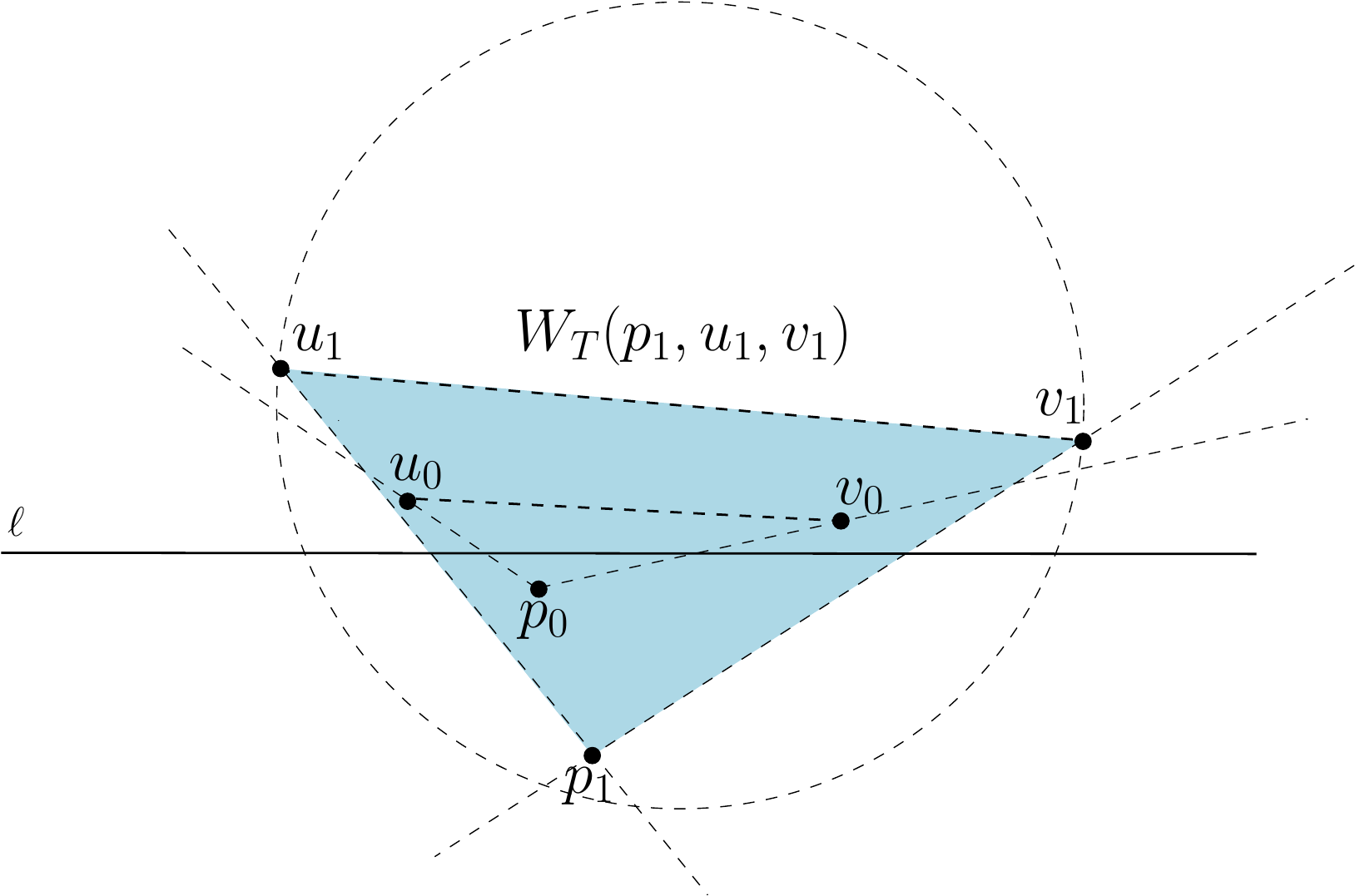}
			\caption{If $p_0 \in W_T[p_1, u_1, v_1]$, then $p_0 \in D[u_1,v_1]$.}
	\label{fi:no-crossing}
\end{figure}

%
%

The following corollaries are a consequence of \cref{le:alternating-4-cycle} and of \cref{th:linear-separability} .

\begin{restatable}[*]{corollary}{coplanarity}\label{co:planarity}
Let $G_0$ and $G_1$ be two vertex disjoint~complete~bipartite~graphs. If~the~pair~$\langle G_0,G_1 \rangle $~is~MWG-drawable,~then~both~$G_0$ and~$G_1$~are~planar~graphs.
\end{restatable}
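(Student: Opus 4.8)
The plan is to argue by contradiction: suppose $\langle G_0,G_1\rangle$ is MWG-drawable with some $\Gamma_i$ non-planar, and derive a forbidden configuration from the two lemmas we already have. Since $G_0$ and $G_1$ are complete bipartite, each has diameter at most two (in fact exactly two unless it is a single edge or a star, but those are trivially planar), so \cref{th:linear-separability} applies and gives us that each $\Gamma_i$ is linearly separable from $\Gamma_{1-i}$; thus \cref{le:alternating-4-cycle} is available for both drawings. A complete bipartite graph $K_{m,n}$ is planar exactly when $\min\{m,n\}\le 2$, so non-planarity of, say, $G_0=K_{m,n}$ means $m\ge 3$ and $n\ge 3$.

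The key step is to show that $K_{3,3}$ (hence any non-planar complete bipartite graph, which contains $K_{3,3}$ as a subgraph in a way that is consistent with the bipartition) cannot appear as a linearly separable WG-drawing, by exhibiting an unavoidable pair of crossing edges. Take the bipartition classes $A=\{a_1,a_2,a_3\}$ and $B=\{b_1,b_2,b_3\}$ of $\Gamma_0$. For any two distinct vertices $a_i,a_j$ in the same class, $\overline{a_ia_j}$ is a non-edge; similarly for $b_i,b_j$. Now $\overline{a_1b_1}$ and $\overline{a_2b_2}$ are edges, and $\overline{a_1a_2}$ and $\overline{b_1b_2}$ are non-edges, so $\{\overline{a_1b_1},\overline{a_2b_2}\}$ together with these two non-edges forms an alternating 4-cycle; by \cref{le:alternating-4-cycle} the edges $\overline{a_1b_1}$ and $\overline{a_2b_2}$ do not cross and the non-edges $\overline{a_1a_2}$ and $\overline{b_1b_2}$ do cross. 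This holds for every choice of a perfect matching on two vertices from each side, so \emph{every} pair of the three matching edges $\overline{a_1b_1},\overline{a_2b_2},\overline{a_3b_3}$ is non-crossing, i.e. these three edges are pairwise non-crossing and hence form a planar sub-drawing. The real leverage comes from the dual conclusion: for each pair $i\ne j$, $\overline{a_ia_j}$ crosses $\overline{b_ib_j}$. So the triangle-shaped non-edge triple on the $a$-side pairwise crosses the corresponding non-edge triple on the $b$-side in a rotationally consistent way, and I expect this to force a contradiction with the fact that the three matching edges are simultaneously non-crossing — essentially a small combinatorial-geometry argument about the cyclic order of the six points.

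Concretely, I would exploit that the three pairwise non-crossing edges $\overline{a_ib_i}$ partition the plane in a tree-like way, which forces a consistent cyclic order around the convex hull of $\{a_1,a_2,a_3,b_1,b_2,b_3\}$; then the requirement that each $\overline{a_ia_j}$ crosses the "parallel" $\overline{b_ib_j}$ pins down that order up to reflection, and finally I check that this order violates \cref{le:alternating-4-cycle} for at least one of the alternating 4-cycles formed by a mismatched pairing such as $\{\overline{a_1b_2},\overline{a_2b_1}\}$ with non-edges $\overline{a_1a_2}$ and $\overline{b_1b_2}$ — that 4-cycle also demands $\overline{a_1a_2}$ and $\overline{b_1b_2}$ cross, but now with the matching edges rerouted, and comparing the two crossing patterns yields the impossibility. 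Alternatively, and perhaps more cleanly, one can observe that among the edges $\overline{a_ib_j}$ for all $i,j$ (all of which are present since $G_0$ is complete bipartite) one can always select two vertex-disjoint ones whose removal leaves an alternating 4-cycle whose edges are forced to cross by the geometry, contradicting the lemma.

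The step I expect to be the main obstacle is nailing down the combinatorial-geometry argument in the previous paragraph without hand-waving: translating "three pairwise non-crossing segments between two triples of points" plus "each same-side pair crosses the corresponding other-side pair" into a rigid cyclic order, and then exhibiting the specific alternating 4-cycle that breaks. I would handle it by a careful case analysis on the position of the six points (which of $a_i,b_j$ lie on the convex hull and in what order), using only \cref{le:non-crossing}, \cref{th:linear-separability}, and \cref{le:alternating-4-cycle} as black boxes, and I would lean on the observation that \cref{le:alternating-4-cycle} applies to \emph{every} alternating 4-cycle in the drawing, not just one, which is a strong uniform constraint. Since the full details are somewhat involved, they are deferred to the appendix as indicated by the '*'.
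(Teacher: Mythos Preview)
Your approach has a genuine gap, and more importantly it overlooks the one-line observation that makes this corollary immediate. In a complete bipartite graph with parts $A$ and $B$, \emph{every} pair of vertex-disjoint edges $\overline{ab}$, $\overline{a'b'}$ (with $a,a'\in A$, $b,b'\in B$) forms an alternating $4$-cycle, since $\overline{aa'}$ and $\overline{bb'}$ are non-edges. Hence \cref{le:alternating-4-cycle} says that \emph{no} two vertex-disjoint edges of $\Gamma_i$ cross; and two edges sharing an endpoint are straight-line segments from a common point, so they cannot cross either. Therefore $\Gamma_i$ is a crossing-free straight-line drawing of $G_i$, and $G_i$ is planar. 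This is exactly the paper's proof, and it needs no contradiction, no $K_{3,3}$, and no cyclic-order case analysis.

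Your proposed strategy of ``exhibiting the specific alternating $4$-cycle that breaks'' cannot succeed: both the matched pairing $\{\overline{a_1b_1},\overline{a_2b_2}\}$ and the mismatched pairing $\{\overline{a_1b_2},\overline{a_2b_1}\}$ are alternating $4$-cycles with the \emph{same} non-edges $\overline{a_1a_2}$, $\overline{b_1b_2}$, so \cref{le:alternating-4-cycle} yields the same, mutually consistent conclusions in each case (edges non-crossing, non-edges crossing). There is no internal conflict among the alternating $4$-cycles to exploit; the contradiction you are looking for is simply that all nine edges of the induced $K_{3,3}$ are pairwise non-crossing, which already makes $\Gamma_i$ a planar drawing of a non-planar graph. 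Once you see that, the detour through convex-hull orders and deferred case analysis is unnecessary.
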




\begin{restatable}[*]{corollary}{coconvexity}\label{co:4-cycle-convexity}
Let $\Gamma$ be a WG-drawing of a complete bipartite graph such that $\Gamma$ is linearly separable from its witness set. Any 4-cycle formed by edges of $\Gamma$ is a convex polygon.
\end{restatable}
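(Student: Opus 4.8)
The plan is to derive the statement as a short consequence of \cref{le:alternating-4-cycle} together with the elementary characterization of convex quadrilaterals. A complete bipartite graph with a part of size at most one has no $4$-cycle, so the statement is vacuous in that case and I may assume that both parts $A,B$ of the graph have size at least two. Fix a $4$-cycle $C$ formed by edges of $\Gamma$; since $\Gamma$ realizes a complete bipartite graph, $C$ alternates between $A$ and $B$, so I can write its vertices in cyclic order as $a_0,b_0,a_1,b_1$ with $a_0,a_1\in A$ and $b_0,b_1\in B$. Its four sides $\overline{a_0b_0},\overline{b_0a_1},\overline{a_1b_1},\overline{b_1a_0}$ are edges of $\Gamma$, whereas its two diagonals $\overline{a_0a_1}$ and $\overline{b_0b_1}$ join vertices in the same part and are therefore non-edges of $\Gamma$.

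Next I would apply \cref{le:alternating-4-cycle} twice. The two vertex-disjoint edges $\overline{a_0b_0}$ and $\overline{a_1b_1}$ of $\Gamma$, whose ``cross'' connections $\overline{a_0a_1}$ and $\overline{b_0b_1}$ are both non-edges, form an alternating $4$-cycle; hence $\overline{a_0b_0}$ and $\overline{a_1b_1}$ do not cross while $\overline{a_0a_1}$ and $\overline{b_0b_1}$ do cross. Likewise the two vertex-disjoint edges $\overline{b_0a_1}$ and $\overline{b_1a_0}$ of $\Gamma$, whose ``cross'' connections are again the non-edges $\overline{b_0b_1}$ and $\overline{a_0a_1}$, form an alternating $4$-cycle, so $\overline{b_0a_1}$ and $\overline{b_1a_0}$ do not cross. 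Thus neither pair of opposite (non-adjacent) sides of $C$ crosses, so the closed walk $a_0,b_0,a_1,b_1,a_0$ bounds a simple quadrilateral whose two diagonals $\overline{a_0a_1},\overline{b_0b_1}$ cross; a simple quadrilateral whose diagonals cross is convex, and therefore $C$ is a convex polygon.

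I do not expect a substantial obstacle: all the geometric content is in \cref{le:alternating-4-cycle}, and what remains is bookkeeping — choosing the vertex labels so that, in each of the two alternating $4$-cycles, the non-edges lie precisely on the diagonals of $C$ — together with the standard planar fact quoted above. The only point deserving a line of justification is that the crossing of $\overline{a_0a_1}$ and $\overline{b_0b_1}$ provided by \cref{le:alternating-4-cycle} is a \emph{proper} crossing (the two segments meet at a single point interior to both), consistent with how ``cross'' is used throughout \cref{se:diameter-2}; this also rules out three of the four vertices being collinear, so that adjacent sides of $C$ meet only at their shared vertex and $C$ is a genuine, non-degenerate convex quadrilateral.
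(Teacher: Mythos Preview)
Your proposal is correct and follows essentially the same route as the paper: both arguments invoke \cref{le:alternating-4-cycle} on the two pairs of opposite sides of the $4$-cycle to conclude that the sides do not cross while the diagonals do, and then appeal to the standard fact that a quadrilateral with crossing diagonals is convex. The paper phrases the ``no two opposite sides cross'' part by citing the argument of \cref{co:planarity} (any two vertex-disjoint edges form an alternating $4$-cycle), whereas you apply the lemma twice explicitly; these are the same thing, and your extra remark about the crossing being proper is a reasonable bit of hygiene the paper leaves implicit.
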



\subsection{Characterization}\label{sse:characterization}

 We start with two technical lemmas.

\begin{figure}[t]
		\centering
		\subfigure[]{\label{fi:universal-a}\includegraphics[width=0.45\columnwidth]{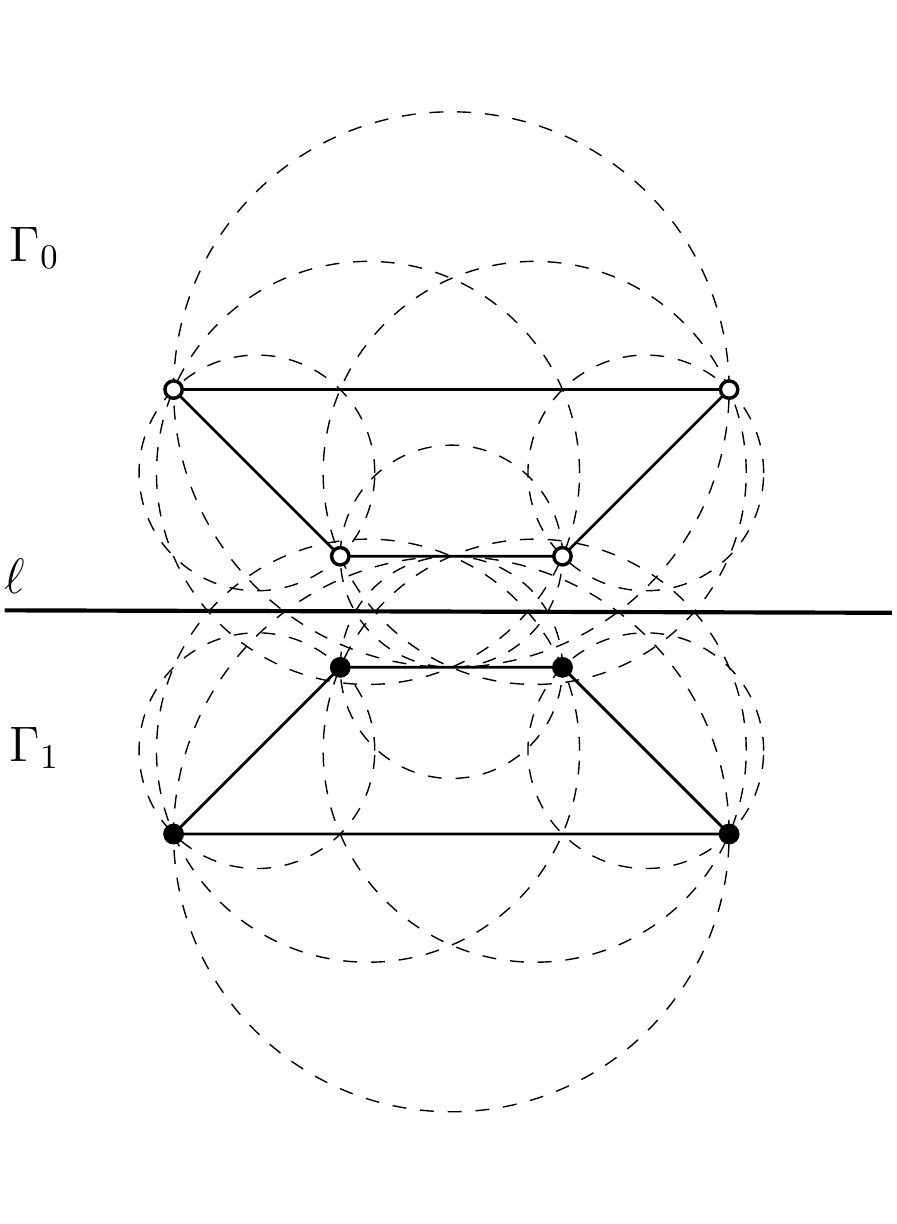}}
\hfill
		\subfigure[]{\label{fi:universal-b}\includegraphics[width=0.45\columnwidth]{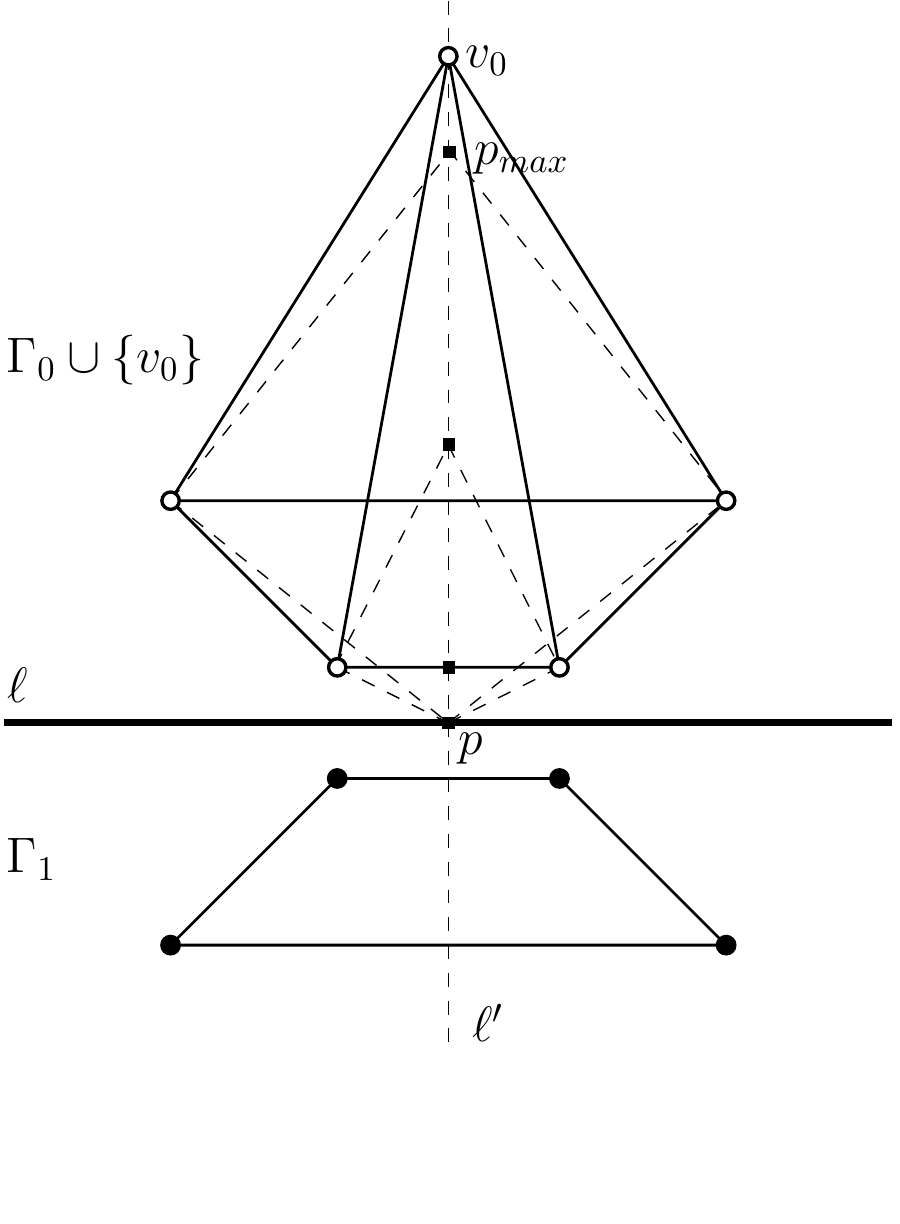}}
		\caption{(a) A linearly separable MWG-drawing $\langle \Gamma_0,\Gamma_1 \rangle $. (b)Adding a universal vertex $v_0$ to $\Gamma_0$ by placing it far enough from $p$ on $\ell'$.}\label{fi:universal}
	\end{figure}

\begin{restatable}[*]{lemma}{leuniversalvertex}\label{le:universal-vertex}
Let $\langle G_0,G_1 \rangle $ be an MWG-drawable pair admitting a linearly separable MWG-drawing. Then the pair $\langle G_0 \cup \{v_0\},G_1 \rangle $ also admits a linearly separable MWG-drawing, where $v_0$ is a vertex not in $G_0$ and is adjacent to all vertices in $G_0$---that is, a {\em universal vertex} of $G_0 \cup \{v_0\}$.
\end{restatable}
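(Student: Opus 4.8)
The plan is to take a linearly separable MWG-drawing $\langle \Gamma_0, \Gamma_1\rangle$ of $\langle G_0, G_1\rangle$ with separating line $\ell$ (which we may assume is $y=0$ with $\Gamma_1$ the witnesses in the lower half-plane) and to add the new vertex $v_0$ far from everything along a line parallel to $\ell$, in the half-plane containing $\Gamma_0$, so that no Gabriel disk involving $v_0$ and an old vertex of $\Gamma_0$ can ever reach down to a witness. The point is that once $v_0$ is sufficiently far, for every vertex $w$ of $\Gamma_0$ the disk $D[v_0, w]$ is enormous but, crucially, its intersection with the witness half-plane must be checked carefully: a huge disk through $v_0$ and $w$ could still dip below $\ell$. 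So the better choice is to push $v_0$ \emph{vertically} upward (i.e.\ along the $y$-axis direction), far above $\mathit{CH}(\Gamma_0)$. Then for any old vertex $w\in\Gamma_0$, the lowest point of $D[v_0,w]$ is at height roughly $y(w)$, which is still positive and bounded away from $\ell$ by $\min_w y(w)>0$; hence $D[v_0,w]\cap P = \emptyset$, so $\overline{v_0 w}$ is an edge of the new $\Gamma_0$, as required for $v_0$ to be universal.

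Next I must check the three remaining consistency conditions. First, no old edge/non-edge relation inside $\Gamma_0$ changes: the witness set $P$ is unchanged, so disks $D[u,w]$ with $u,w\in G_0$ have exactly the same relation to $P$ as before. Second, the drawing $\Gamma_1$ must remain a valid WG-drawing with its (now larger) witness set $P' = V(\Gamma_0)\cup\{v_0\}$: we need that adding $v_0$ does not destroy any edge of $\Gamma_1$, i.e.\ $v_0 \notin D[u_1,v_1]$ for every edge $\overline{u_1 v_1}$ of $\Gamma_1$. Since every such disk lies essentially in the lower half-plane (its intersection with $\{y>0\}$ is bounded, as $u_1, v_1$ are at bounded height below $\ell$) and $v_0$ is arbitrarily high, this holds once $v_0$ is high enough. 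Non-edges of $\Gamma_1$ already had a witness in $P\subseteq P'$, so they stay non-edges. Third, linear separability is preserved: $\Gamma_1$ is still entirely in $\{y<0\}$ and $\Gamma_0\cup\{v_0\}$ is entirely in $\{y>0\}$, so $\ell$ still separates them.

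The one genuinely delicate point is making ``sufficiently high'' precise and simultaneously compatible with all the finitely many constraints: there are finitely many old vertices $w$ (constraints $D[v_0,w]\cap P=\emptyset$) and finitely many edges of $\Gamma_1$ (constraints $v_0\notin D[u_1,v_1]$), and each is a half-open condition of the form ``$y(v_0) > $ some explicit bound depending on the current coordinates.'' So I would fix $v_0$ at $(x_0, Y)$ for $x_0$ arbitrary (say $x_0=0$) and $Y$ larger than the maximum of all these finitely many bounds; since the witnesses lie below $\ell$ at bounded depth and the old $\Gamma_0$-vertices lie at height at least some $\varepsilon>0$, such a finite $Y$ exists. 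I expect the main obstacle to be purely bookkeeping: verifying that for a disk through $v_0=(0,Y)$ and $w=(x_w,y_w)$ with $y_w>0$, the disk does not intersect $\{y\le 0\}$ — this follows because the disk's center has $y$-coordinate $(Y+y_w)/2$ and radius $\tfrac12\sqrt{x_w^2+(Y-y_w)^2}$, and the lowest point has height $(Y+y_w)/2 - \tfrac12\sqrt{x_w^2+(Y-y_w)^2}$, which tends to $y_w>0$ as $Y\to\infty$; a symmetric estimate handles the $\Gamma_1$-edge disks. Everything else is immediate from the invariance of $P$ on the $G_0$-side and from $P\subseteq P'$ on the $G_1$-side.
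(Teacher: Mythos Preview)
Your proposal is correct and follows essentially the same approach as the paper: place $v_0$ on a vertical line sufficiently far above the separating line $\ell$, then verify the finitely many constraints (that each $D[v_0,w]$ with $w\in\Gamma_0$ stays above $\ell$, that $v_0$ avoids every Gabriel disk of an edge of $\Gamma_1$, and that all old edge/non-edge relations are preserved). Your asymptotic computation of the lowest point of $D[v_0,w]$ is in fact a cleaner justification than the paper's more implicit geometric construction, but the underlying idea is identical.
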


Figure~\ref{fi:universal-b} shows the addition of the universal vertex $v_0$ to the MWG-drawing of Figure~\ref{fi:universal-a}.
In fact, a universal vertex can be added to either drawing as long as it is positioned sufficiently far from the separating line.

%
%
%
%
%
Let $u_0,u_1$ be two points with $x(u_0) < x(u_1)$. The open \emph{vertical strip} of $u_0$, $u_1$, denoted as $S(u_0,u_1)$, is the set of points $(x,y)$ such that $x(u_0) < x < x(u_1)$. Assume now that $u_0$ and $u_1$ are vertices of a WG-drawing $\Gamma$ such that $\Gamma$ is linearly separable from its witness set by a line $\ell$. Segment $\overline{u_0u_1}$ divides $S(u_0,u_1)$ into two (open) {\em half-strips}: $S_N(u_0,u_1)$ is the (near) half-strip on the same side of $\overline{u_0u_1}$ as $\ell$ and $S_F(u_0,u_1)$ is the other (far) half-strip. $S[u_0,u_1]$, $S_N[u_0,u_1]$, and $S_F[u_0,u_1]$ consist of $S(u_0,u_1)$, $S_N(u_0,u_1)$, and $S_F(u_0,u_1)$ along with their respective boundaries.

\begin{restatable}[*]{lemma}{leisolatedvertex}\label{le:isolated-vertex}
	Let $\langle G_0,G_1 \rangle $ be an MWG-drawable pair admitting a linearly separable MWG-drawing. Then at least one of the pairs $\langle G_0 \cup \{v_0\},G_1 \rangle $, $\langle G_0 ,G_1\cup \{v_1\} \rangle $ also admits a linearly separable MWG-drawing, where, for $i = 0,1$, $v_i$ is a vertex not in $G_i$ and has no edges to any vertex in $G_i$---that is, $v_i$ is an isolated vertex of $G_i$.
\end{restatable}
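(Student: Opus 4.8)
The plan is to realize the new isolated vertex \emph{at infinity}. Assume without loss of generality that the separating line of the given linearly separable drawing $\langle\Gamma_0,\Gamma_1\rangle$ is horizontal, and set $A=\mathit{CH}(\Gamma_0)$ and $B=\mathit{CH}(\Gamma_1)$; by hypothesis these are disjoint compact convex sets. Let $h_A,h_B$ be their support functions, so $h_A(\vec d)=\max_{a\in A}\langle a,\vec d\rangle$ (attained at a vertex of $\Gamma_0$) and similarly for $h_B$. One cannot simply push the new vertex far away on its \emph{own} side of the separating line, the way a universal vertex is added in \cref{le:universal-vertex}: its Gabriel disks with the vertices of its own graph would then stay clear of all the witnesses, turning those pairs into edges rather than non-edges. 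That asymmetry is exactly why the statement only asserts that \emph{one} of the two graphs can be augmented.

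First I would pin down when an isolated vertex $v_0$ can be added to $\Gamma_0$ (the case of $\Gamma_1$ being symmetric). Suppose a unit direction $\vec d$ satisfies (i) $h_B(\vec d)>h_A(\vec d)$, so that some vertex $w^{\star}$ of $\Gamma_1$ is strictly farther in direction $\vec d$ than every vertex of $\Gamma_0$, and (ii) the forward prism $A+\{t\vec d:t\ge 0\}$ is disjoint from $B$. Then, for $R$ large, put $v_0=q+R\vec d$ with $q$ a vertex of $\Gamma_0$. As $R\to\infty$ the Gabriel disk $D[v_0,u]$ converges, on every bounded region, to the closed half-plane $\{x:\langle x-u,\vec d\rangle\ge 0\}$, which by (i) contains $w^{\star}$ in its interior for \emph{every} vertex $u$ of $\Gamma_0$; hence for $R$ large every pair $\{v_0,u\}$ is a non-edge, i.e.\ $v_0$ is isolated (adjacencies among the old vertices are unchanged, as the witness set $V(\Gamma_1)$ of $\Gamma_0$ is untouched). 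A point $v_0$ far from the origin lies in none of the finitely many bounded Gabriel disks spanned by edges of $\Gamma_1$, so adding $v_0$ as a witness destroys no edge of $\Gamma_1$ (and cannot create one). Finally, since $q\in A$, the convex hull of $V(\Gamma_0)\cup\{v_0\}$ lies inside $A+\{t\vec d:t\ge 0\}$, which by (ii) is disjoint from $B\supseteq V(\Gamma_1)$, so the augmented vertex set of $\Gamma_0$ is still linearly separable from that of $\Gamma_1$.

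The crux is to show that at least one of these two symmetric condition-pairs is satisfiable. Condition (ii) for $\Gamma_0$ fails exactly when $\vec d$ is the direction of some vector $b-a$ with $b\in B$, $a\in A$; since $\{\,b-a:b\in B,\ a\in A\,\}$ is compact and misses the origin, the set $D$ of such directions is a closed circular arc of angular width strictly less than $\pi$, and the corresponding bad set for $\Gamma_1$ is the antipodal arc $-D$. If neither construction were possible we would have $\{\vec d:h_B(\vec d)>h_A(\vec d)\}\subseteq D$ and $\{\vec d:h_A(\vec d)>h_B(\vec d)\}\subseteq -D$; since $D\cup(-D)$ has total width less than $2\pi$, its complement contains an open arc $I$ of directions, and on $I$ we must have $h_A=h_B$.

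To finish I would rule that out: two disjoint compact convex sets cannot have the same support function on an open arc of directions. Choose $\vec d_0=(\cos\theta_0,\sin\theta_0)$ in the interior of $I$ at which the angular support functions $g_A(\theta):=h_A(\cos\theta,\sin\theta)$ and $g_B(\theta):=h_B(\cos\theta,\sin\theta)$ are both differentiable -- all but countably many $\theta$ qualify. At such a $\theta_0$, the boundary point of $A$ with outer normal $\vec d_0$ is the single point $g_A(\theta_0)(\cos\theta_0,\sin\theta_0)+g_A'(\theta_0)(-\sin\theta_0,\cos\theta_0)$, and likewise for $B$; since $g_A=g_B$ near $\theta_0$ (hence $g_A'(\theta_0)=g_B'(\theta_0)$), these two boundary points coincide and lie in $A\cap B$ -- a contradiction. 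The step I expect to need the most care is this convex-geometry lemma, together with making precise the limiting claim that $D[v_0,u]$ tends to $\{x:\langle x-u,\vec d\rangle\ge 0\}$ as $v_0$ recedes in direction $\vec d$, and checking the degenerate cases (for instance $D$ a single direction, or some $\Gamma_i$ with very few vertices).
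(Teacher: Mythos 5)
Your argument is correct, but it reaches the conclusion by a noticeably heavier route than the paper. The underlying placement idea is the same: send the new vertex to infinity in a direction along which the \emph{other} drawing reaches strictly farther, so that an extremal vertex $w^{\star}$ of that drawing falls into every Gabriel disk $D[v_0,u]$ (killing all potential edges at $v_0$), while $v_0$ itself escapes every bounded Gabriel disk determined by the other drawing and hence disturbs none of its adjacencies. The difference lies in how one finds an admissible direction and decides which graph to augment. The paper uses the separating line $\ell$ itself: it takes the rightmost vertex $u$ of the entire drawing, augments whichever graph does \emph{not} contain $u$, and places $v_0$ on that graph's side of $\ell$, to the right of every line through $u$ orthogonal to a segment $\overline{vu}$, so that $\angle v u v_0>\frac{\pi}{2}$ forces $u\in D[v,v_0]$; linear separability and Property~\ref{pr:strip} then dispose of everything else in a few lines. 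In your language, the paper fixes $\vec d=\pm(1,0)$, observes that these two directions can never lie in your bad arc $D\cup(-D)$ (every vector from one hull to the other has a nonzero vertical component), and notes that one of the two horizontal directions satisfies your condition (i) because the overall rightmost vertex belongs to one of the two drawings. Your support-function argument --- that $D$ and $-D$ cannot absorb both strict-inequality sets unless $h_A=h_B$ on an open arc, which would force the disjoint hulls to share a support point --- is a genuinely different existence proof; it is direction-agnostic and immune to ties in $x$-coordinate (a degenerate case the paper's ``rightmost vertex'' phrasing quietly brushes aside), but it costs you a convex-geometry lemma and a limiting argument that the paper's choice of direction renders unnecessary.
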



\begin{figure}[t]
		\centering
		\subfigure[]{\label{fi:isolated-a}\includegraphics[width=0.48\columnwidth]{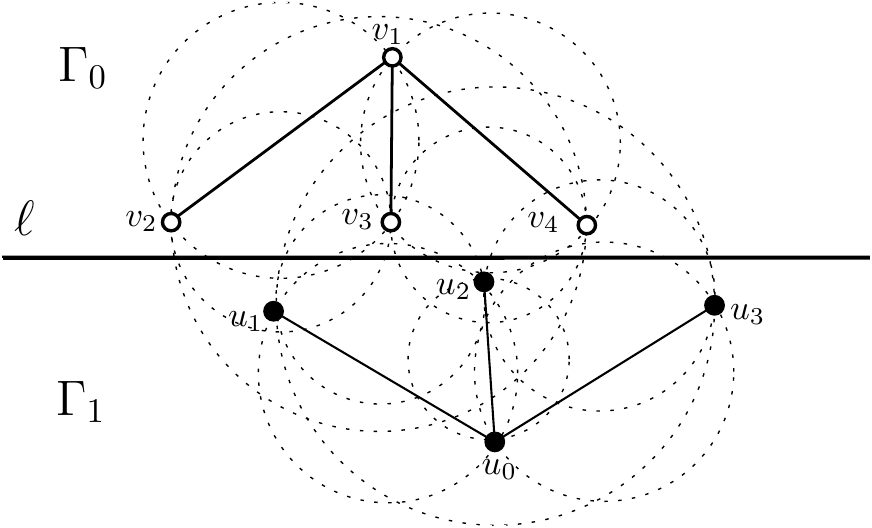}}
\hfill
		\subfigure[]{\label{fi:isolated-b}\includegraphics[width=0.48\columnwidth]{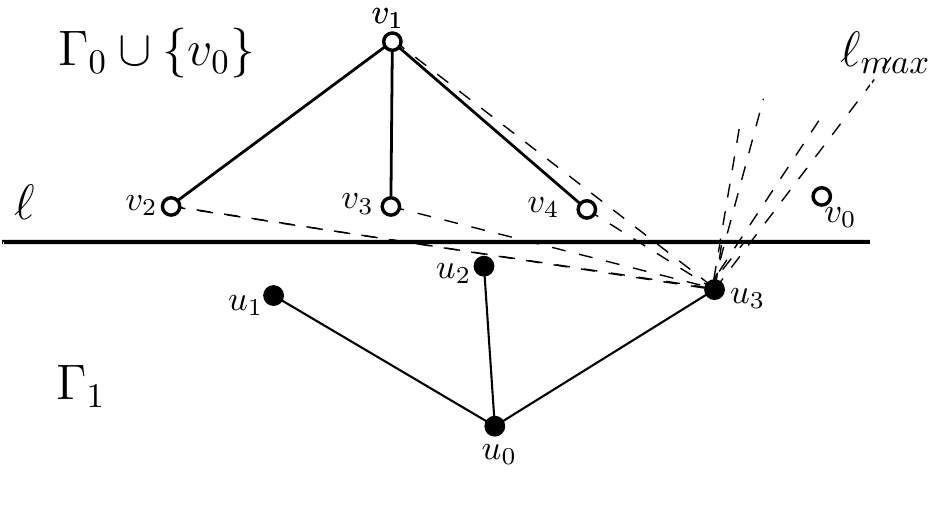}}
		\caption{(a) A linearly separable MWG-drawing $\langle \Gamma_0,\Gamma_1 \rangle $. (b)Adding an isolated vertex $v_0$ to $\Gamma_0$.}\label{fi:isolated}
	\end{figure}

Figure~\ref{fi:isolated-b} shows the addition of the isolated vertex $v_0$ to the MWG-drawing of Figure~\ref{fi:isolated-a}.
In fact, an isolated vertex can be added to the left (right) of whichever of the two drawings has the leftmost (rightmost) vertex, as long
as it is positioned sufficiently far enough to the left (right) of that vertex.
\cref{le:universal-vertex,le:isolated-vertex} are used in the following lemma, where we use colors to distinguish vertices in distinct partition sets.  

\begin{lemma}\label{le:2-stars}
 $\langle K_{1,n_0},K_{1,n_1} \rangle $ has a MWG-drawing if $|n_0 - n_1| \leq 2$ .
%
\end{lemma}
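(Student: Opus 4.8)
The plan is to produce a linearly separable drawing. The starting observation is that $K_{1,n}$ is the empty graph $\overline{K}_n$ together with one universal vertex (its centre); so by \cref{le:universal-vertex} applied twice it suffices to exhibit, whenever $|n_0-n_1|\le 2$, a linearly separable MWG-drawing of $\langle \overline{K}_{n_0},\overline{K}_{n_1}\rangle$ — except for the two tiny pairs $\langle K_{1,2},K_{1,0}\rangle$ and $\langle K_{1,3},K_{1,1}\rangle$, for which $\langle \overline K_{n_0},\overline K_{n_1}\rangle$ is itself undrawable (there is no $\Gamma_1$-witness available, resp. one point cannot lie in the three Gabriel disks of a triangle of points on the other side of a line) and which I handle directly. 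Throughout I place the separating line at $y=0$, $\Gamma_0$ in the open upper half-plane and $\Gamma_1$ in the open lower one; for $n_0,n_1\ge 2$ both stars have diameter two, so by \cref{th:linear-separability} this loses no generality.

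First suppose $n_1\le n_0\le n_1+1$. Put the vertices of $\overline K_{n_0}$ on the line $y=1$ and those of $\overline K_{n_1}$ on the line $y=-1$, with $x$-coordinates alternating $a_1,b_1,a_2,b_2,\dots$, all consecutive same-colour gaps equal to some large $g$. For $g$ large enough the Gabriel disk of two consecutive $a$'s dips below $y=0$ and contains the single $b$ between them (symmetrically for consecutive $b$'s); since $n_0-1\le n_1$ and $n_1-1\le n_0$ every gap gets such a witness, and because the $a$'s (resp.\ $b$'s) are collinear the disk of any non-consecutive pair contains the disk of a consecutive pair, hence a witness. So every pair on either side is a non-edge, giving a linearly separable MWG-drawing of $\langle \overline K_{n_0},\overline K_{n_1}\rangle$, and \cref{le:universal-vertex} closes the case. (Making $g$ ``large enough'' is a single quadratic inequality.)

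The case $n_0=n_1+2$ is the crux, and is exactly what the bound reflects. Keeping collinear leaves, $\Gamma_0$'s $n_1+1$ consecutive leaf-gaps have pairwise-disjoint $x$-ranges, so they need $n_1+1$ distinct witnesses; with only $n_1$ leaves in $\Gamma_1$ one is pushed to use the centre $c_1$ as a gap-witness — but putting $c_1$ inside a leaf-gap disk of $\Gamma_0$ drags it toward $y=0$, while keeping each edge $\overline{c_1b_i}$ present forces $D[c_1,b_i]$ to stay below $y=0$, i.e.\ $c_1$ close in $x$ to every $b_i$; for $n_1\ge 3$ these demands are incompatible. The fix (for $n_1\ge 2$) is to abandon collinearity of $\Gamma_0$'s leaves: place the $n_1+2$ points of $\overline K_{n_0}$ in ``clustered'' position so that each $b_i$ witnesses a whole complete-bipartite family of pairs of $\Gamma_0$ (one group of leaves at angular distance $\ge\pi/2$ from another group, seen from $b_i$). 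Since $E(K_m)$ admits a biclique cover of size $\lceil\log_2 m\rceil$ and $n_1\ge\lceil\log_2(n_1+2)\rceil$ for all $n_1\ge 2$, the $n_1$ leaves of $\Gamma_1$ alone can witness all $\binom{n_1+2}{2}$ pairs of $\Gamma_0$, so $c_1$ is free to sit far below where all its edges survive; symmetrically $\Gamma_0$'s $n_1+3$ vertices easily witness the $\binom{n_1}{2}$ pairs of $\Gamma_1$, and again \cref{le:universal-vertex} adds the two centres. For $n_1\le 1$ this is unnecessary: in $\langle K_{1,2},K_{1,0}\rangle$ the single vertex of $\Gamma_1$ lies in the Gabriel disk of $\Gamma_0$'s two leaves, and in $\langle K_{1,3},K_{1,1}\rangle$ one places $\Gamma_0$'s three leaves and $\Gamma_1$'s two vertices so that one $\Gamma_1$-vertex witnesses two leaf-pairs of $\Gamma_0$, the other witnesses the third, and the lone edge of $K_{1,1}$ has its Gabriel disk disjoint from $\Gamma_0$. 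In all cases $c_0,c_1$ are finally placed at large $|y|$ so that their incident Gabriel disks stay strictly inside their own half-plane and meet no foreign vertex.

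I expect the genuinely delicate step to be the $n_0=n_1+2$ construction: simultaneously (i) realizing a prescribed biclique cover of $E(K_{n_0})$ by Gabriel-disk-containments from $n_1$ points on one side of a line, (ii) keeping the leaves of $\Gamma_1$ pairwise non-adjacent via $\Gamma_0$-witnesses, and (iii) keeping the two centre-leaf stars' disks empty of foreign vertices — that is, choosing coordinates (one ``spread-out'' direction per witness, everything else tightly clustered near $y=0$) that make all three hold at once. Everything else is bookkeeping of a handful of quadratic inequalities.
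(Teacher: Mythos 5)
Your reduction and your construction for $|n_0-n_1|\le 1$ are sound and essentially match the paper's route (the paper builds the two interleaved independent sets iteratively via \cref{le:isolated-vertex} rather than writing down coordinates, then adds the two centres via \cref{le:universal-vertex}). The problem is the case $n_0=n_1+2$, which you correctly identify as the crux: the construction you propose there cannot exist. In any linearly separable drawing with the witnesses below the line $y=0$, a witness $z$ for a non-adjacent pair $u,v$ above the line must satisfy $x(u)<x(z)<x(v)$ (this is \cref{pr:strip}; if $z$ lies outside the open vertical strip, both coordinates of the vectors $u-z$ and $v-z$ have the same signs, the dot product is positive, and $\angle uzv<\pi/2$). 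Consequently, if the $n_0$ leaves of $\Gamma_0$ are listed by increasing $x$-coordinate (the coordinates must be distinct, since a pair sharing an $x$-coordinate has an empty strip and hence can never be a non-edge), the $n_0-1$ consecutive gaps determine pairwise disjoint open strips, each of which must contain its own witness. So $\langle \overline K_{n_0},\overline K_{n_1}\rangle$ requires at least $n_0-1=n_1+1$ witnesses below the line but only $n_1$ are available: it is not linearly separably MWG-drawable for \emph{any} $n_1$ when $n_0=n_1+2$, not just for $n_1\le 1$. Your biclique-cover idea founders on exactly this point --- the strip constraint forces every realizable ``biclique'' to be a left-set/right-set split at the witness's $x$-coordinate, and no amount of clustering lets one witness serve two consecutive gaps --- so step (i) of your ``delicate step'' is unachievable, not merely delicate.

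The repair is not a cleverer drawing of the two independent sets but a different order of operations, which is what the paper does: first draw $\langle \overline K_{n_0-1},\overline K_{n_1}\rangle$ (sizes differ by one, so your first construction applies), then add the \emph{centre} $u_0$ of $G_1$ as a universal vertex positioned as the rightmost point of the whole drawing, then add the last leaf of $G_0$ as an isolated vertex far to the right as in \cref{le:isolated-vertex} --- all of its non-adjacencies are witnessed by the single vertex $u_0$, which supplies the missing $(n_1+1)$-st witness --- and only then add the centre of $G_0$. In short, the universal-vertex and isolated-vertex additions must be interleaved; the clean two-stage reduction ``independent sets first, both centres last'' is exactly what fails at difference two.
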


\begin{proof}
Observe that two independent sets whose sizes differ by at most 1, one consisting of red vertices and one consisting of blue vertices, admit a linearly separable MWG-drawing where the red vertices are above a horizonal separating line $\ell$ while  the blue  vertices are below $\ell$: start with one red vertex with coordinates $(0,1)$ and a blue vertex with coordinates $(-1,-1)$ and iteratively add red and blue vertices by applying the isolated vertex-addition procedure in the proof of \cref{le:isolated-vertex}.
Let $G_i = K_{1,n_i}$, for $i = 0, 1$.

Denote by $v_0$ the non-leaf vertex of $G_0$ and by $u_0$ the non-leaf vertex of $G_1$.
Assume first that $|n_0 - n_1| \leq 1$. By the previous observation, $\langle G_0 \setminus{\{v_0\}},G_1\setminus{\{u_0\}} \rangle $ admits a linearly separable MWG-drawing. Therefore, by \cref{le:universal-vertex} applied to $\langle G_0 \setminus{\{v_0\}},G_1\setminus{\{u_0\}} \rangle $ we have that $\langle G_0,G_1\setminus{\{u_0\}} \rangle $ admits a a linearly separable MWG-drawing. By \cref{le:universal-vertex} applied to $\langle G_0,G_1\setminus{\{u_0\}} \rangle $ we have that if $|n_0 - n_1| \leq 1$, the pair  $\langle G_0,G_1 \rangle $ is  MWG-drawable.

Consider now the case $|n_0 - n_1| = 2$ and assume that $n_0 > n_1$ (the proof when $n_1 > n_0$ is analogous). Let $v_1$ be a leaf of $G_0$. With the same reasoning as in the previous case, $\langle G_0 \setminus{\{v_0,v_1\}},G_1\setminus{\{u_0\}} \rangle $ admits a linearly separable MWG-drawing that we denote as $\langle \Gamma_0 \setminus{\{v_0,v_1\}},\Gamma_1\setminus{\{u_0\}} \rangle $. By the technique in the proof of \cref{le:universal-vertex}, we add the universal vertex $u_0$ to $\Gamma_1\setminus{\{u_0\}}$ in such a way that $u_0$ is the rightmost vertex of the linearly separable MWG-drawing $\langle \Gamma_0 \setminus{\{v_0,v_1\}},\Gamma_1 \rangle $. We now exploit the construction of \cref{le:isolated-vertex} to add the isolated vertex $v_1$ to  $\langle \Gamma_0 \setminus{\{v_0,v_1\}},\Gamma_1 \rangle $ and obtain a linearly separable MWG-drawing $\langle \Gamma_0 \setminus{\{v_0\}},\Gamma_1 \rangle $. Finally, we use \cref{le:universal-vertex} to construct an MWG-drawing of  $\langle G_0,G_1 \rangle $ also when $|n_0 - n_1| = 2$.
\end{proof}


\begin{lemma}\label{le:no-edge-blocking-vertex}
Let $\Gamma$ be a WG-drawing of a graph such that $\Gamma$ is linearly separable from its witness set $P$. If $\overline{u v}$ is an edge of $\Gamma$ and $z \in S_F[u, v]$ is a vertex of $\Gamma$ , then both $\overline{u z}$ and $\overline{v z}$ are edges of $\Gamma$.
\end{lemma}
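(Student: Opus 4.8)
The plan is to prove $D[u,z]\cap P=\emptyset$, which shows $\overline{uz}$ is an edge; the assertion for $\overline{vz}$ then follows by interchanging $u$ and $v$. Normalize coordinates as in the paper's convention: $\ell$ is the line $y=0$, the points of $P$ have negative $y$-coordinate, and all vertices of $\Gamma$ have positive $y$-coordinate. Write $x(u)<x(v)$ and let $m$ be the affine function whose graph is the line through $u$ and $v$; then $z\in S_F[u,v]$ means $x(u)\le x(z)\le x(v)$ and $y(z)\ge m(x(z))$, and since $m$ is strictly positive on $[x(u),x(v)]$ we have $y(z)>0$. Because $\overline{uv}$ is an edge, $D[u,v]\cap P=\emptyset$, and because $P\subseteq\{y<0\}$ it is enough to establish the purely geometric inclusion $D[u,z]\cap\{y<0\}\subseteq D[u,v]$. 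I would obtain this by splitting a point of $D[u,z]$ according to whether it lies on or below the line through $u$ and $v$, or strictly above it.

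\emph{Part 1 (points on or below line $uv$).} Here I claim $D[u,z]\cap\{q:y(q)\le m(x(q))\}\subseteq D[u,v]$. Applying a rigid motion we may assume $u=(a,0)$, $v=(b,0)$ with $a<b$, and $z=(c,h)$ with $a\le c\le b$ and $h\ge 0$ (so the far side of line $uv$ becomes the upper half-plane). A point $q=(x,y)$ with $y\le 0$ lies in $D[u,z]$ iff $(x-a)(x-c)+y^{2}-yh\le 0$; since $yh\le 0$ this forces $(x-a)(x-c)\le -y^{2}\le 0$, whence $a\le x\le c$ and $y^{2}\le (x-a)(c-x)$. As $a\le x\le c\le b$, we get $(x-a)(b-x)\ge (x-a)(c-x)\ge y^{2}$, i.e.\ $(x-a)(x-b)+y^{2}\le 0$, which is precisely $q\in D[u,v]$.

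\emph{Part 2 (points strictly above line $uv$).} Here I claim $D[u,z]$ contains no point $q$ with $y(q)<0$ and $y(q)>m(x(q))$. Any such $q$ has $m(x(q))<0$, so $x(q)$ lies strictly outside $[x(u),x(v)]$; say $x(q)>x(v)$ (the case $x(q)<x(u)$ is symmetric), which forces $m$ to be decreasing with its unique zero $x_{0}$ satisfying $x(v)<x_{0}$, and every such $q$ has $x(q)>x_{0}$. Let $C=\big(\tfrac{x(u)+x(z)}{2},\tfrac{y(u)+y(z)}{2}\big)$ and $r=\tfrac12|uz|$ be the center and radius of $D[u,z]$. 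Since $C_{x}\le\tfrac{x(u)+x(v)}{2}<x(v)<x_{0}$ and $C_{y}>0$, the point of the closed quadrant $Q=\{x\ge x_{0},\ y\le 0\}$ closest to $C$ is its corner $(x_{0},0)$, and
\[
(x_{0}-C_{x})^{2}+C_{y}^{2} > \Big(\tfrac{x(z)-x(u)}{2}\Big)^{2}+\Big(\tfrac{y(z)-y(u)}{2}\Big)^{2} = r^{2},
\]
using $x_{0}>x(v)\ge x(z)$ for the first term and $y(u),y(z)>0$ for the second. Hence $\operatorname{dist}(C,Q)>r$, so $D[u,z]\cap Q=\emptyset$, which proves the claim. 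Parts~1 and~2 together give $D[u,z]\cap\{y<0\}\subseteq D[u,v]$, hence $D[u,z]\cap P=\emptyset$; swapping $u$ and $v$ shows $D[v,z]\cap P=\emptyset$ as well, completing the proof. (If $z$ happens to lie on the segment $\overline{uv}$ the statement is immediate, since then $D[u,z]\subseteq D[u,v]$ and $D[v,z]\subseteq D[u,v]$, and $z\notin\{u,v\}$ as $z$ is a distinct vertex.)

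I expect Part~2 to be the crux. Part~1 is a short computation, but it only controls the portion of $D[u,z]$ below the line through $u$ and $v$; the region lying above that line yet below $\ell$ is genuinely nonempty whenever $uv$ is not parallel to $\ell$, and it sits entirely outside the vertical strip $x(u)\le x\le x(v)$, so a separate quantitative argument—the distance estimate to the quadrant—is needed to certify that $D[u,z]$ never reaches it. Carrying that estimate out carefully on both sides of the strip, together with the slope-sign bookkeeping and the borderline placements of $z$ (on $\overline{uv}$, or directly above $u$ or $v$), is the delicate part of the argument.
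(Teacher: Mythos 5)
Your reduction of the lemma to the inclusion $D[u,z]\cap\{y<0\}\subseteq D[u,v]$ is the right idea (it is what the paper's one\nobreakdash-line proof implicitly does), and your Part~2 is correct. The gap is in Part~1, at the normalization step. You assert that after a rigid motion taking the line through $u$ and $v$ to the $x$-axis one may write $z=(c,h)$ with $a\le c\le b$. But $S_F[u,v]$ is defined via the \emph{vertical} strip $x(u)\le x\le x(v)$ in the original coordinates, where ``vertical'' is tied to the fixed separating line $\ell$; after the rotation, the first coordinate of $z$ becomes its orthogonal projection onto the line $uv$, and this need not land between $u$ and $v$ unless $\overline{uv}$ was already horizontal. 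Concretely, with $\ell:y=0$, $u=(0,0.1)$, $v=(10,20)$, $z=(10,25)$ one has $z\in S_F[u,v]$ but $c\approx 26.7>b\approx 22.3$. Worse, in that configuration the statement you claim in Part~1 is actually false, not merely unproven: the point of line $uv$ at rotated abscissa $25$ satisfies $(x-a)(x-c)+y^2-yh=-43.5<0$, so it lies in $D[u,z]$ and in your closed half-plane $\{y(q)\le m(x(q))\}$, yet $(x-a)(x-b)=68.25>0$, so it is outside $D[u,v]$. Hence Part~1 cannot be fixed by tightening the algebra; the claimed inclusion over the whole half-plane below line $uv$ is wrong.

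What rescues the lemma is the constraint you never use in Part~1: the points that matter lie below $\ell$, not merely below the line $uv$. The paper's proof goes through Property~\ref{pr:strip}: any witness $p$ of $\overline{uz}$ must lie in $S_N(u,z)$, so in particular $x(u)<x(p)<x(z)\le x(v)$ and $p$ is below the segment $\overline{uv}$; from there a short slope comparison (for $p$ below the line $uv$, the slope from $p$ to a point of that line decreases as the point moves right, so $\angle upv\ge\angle upz\ge\tfrac{\pi}{2}$) gives $p\in D[u,v]$. If you first confine the candidate witness to the vertical strip of $u$ and $z$ by that angle argument (or by citing Property~\ref{pr:strip}) and only then run your computation, the argument closes; as written, Part~1 derives a false conclusion from an unjustified normalization, and the proof does not go through.
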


\begin{proof}
Consider, w.l.o.g., the segment $\overline{u z}$. If it is not an edge of $\Gamma$, then it must have a witness in $S_N(u, z)$. But any such point will also be in $D[u, v]$, contradicting the fact that $\overline{u v}$ is an edge of $\Gamma$.
\end{proof}

\begin{restatable}[*]{lemma}{leverticalstrip}\label{le:vertical-strip}
Let $\Gamma$ be a WG-drawing of a graph such that $\Gamma$ is linearly separable from its witness set $P$. Let $u_0,u_1,v_0,v_1$ be such that $u_0, v_0, u_1, v_1$ induce a $C_4$ in $\Gamma$.
 We have that: (i) $v_0$ and $v_1$ are in opposite half-planes with respect to the line through $u_0,u_1$, and (ii) one of $\{v_0,v_1\}$ is a point of $S_N(u_0,u_1)$ and the other is {\em not} in $S[u_0,u_1]$.
\end{restatable}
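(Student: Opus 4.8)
The plan is to reduce to the induced $K_{2,2}$, then locate $v_0,v_1$ relative to the line $L$ through $u_0,u_1$ and to the vertical strip $S(u_0,u_1)$. As usual, take $\ell$ horizontal with the witnesses $P$ below it and the vertices of $\Gamma$ above it, assume $x(u_0)<x(u_1)$, and assume the points are in general position (the degenerate configurations, e.g. $x(u_0)=x(u_1)$, being easily seen to be vacuous). First I would record the easy fact that restricting a WG-drawing to a subset of its vertices, keeping the witness set, again yields a WG-drawing of the induced subgraph and preserves linear separability; hence the sub-drawing $\Gamma'$ of $\Gamma$ on $\{u_0,v_0,u_1,v_1\}$ is a linearly separable WG-drawing of $K_{2,2}$ with witness set $P$, whose two non-edges are $\overline{u_0u_1}$ and $\overline{v_0v_1}$. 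Applying \cref{le:alternating-4-cycle} to $\Gamma'$ and the alternating $4$-cycle with edges $\overline{u_0v_0},\overline{u_1v_1}$ and non-edges $\overline{u_0u_1},\overline{v_0v_1}$ shows that $\overline{u_0u_1}$ and $\overline{v_0v_1}$ cross at a point interior to both segments; in particular $v_0,v_1$ lie strictly on opposite sides of $L$, which is (i), and, symmetrically, $u_0,u_1$ lie strictly on opposite sides of the line $L'$ through $v_0,v_1$.

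The crux of the argument is the observation that \emph{the witness $r$ of any non-edge $\overline{ab}$ of $\Gamma$ (with $x(a)<x(b)$) lies in the open near half-strip $S_N(a,b)$}. This is immediate: $r\in D[a,b]$ is equivalent to $(x(a)-x(r))(x(b)-x(r))+(y(a)-y(r))(y(b)-y(r))\le 0$, and since $y(r)<0<y(a),y(b)$ the second summand is strictly positive, forcing $(x(a)-x(r))(x(b)-x(r))<0$, i.e. $x(a)<x(r)<x(b)$; and $y(r)<0$ puts $r$ below the segment $\overline{ab}$ inside that strip, so $r\in S_N(a,b)$. Consequently the witness $p$ of $\overline{u_0u_1}$ satisfies $x(u_0)<x(p)<x(u_1)$ and lies on the same side of $L$ as $S_N(u_0,u_1)$, and the witness $q$ of $\overline{v_0v_1}$ has $x(q)$ strictly between $x(v_0)$ and $x(v_1)$.

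For (ii) I would relabel $v_0,v_1$ so that $v_0$ is on $p$'s side of $L$ (so $v_1\notin S_N(u_0,u_1)$ automatically). By Property~\ref{pr:common-neighbor}, $v_0$ and $v_1$, being common neighbours of $u_0$ and $u_1$, lie in the open wedge $W_T(p,u_0,u_1)$; intersecting that wedge with the open half-plane of $L$ containing $p$ gives exactly the open triangle $\Delta(p u_0 u_1)$, so $v_0\in\Delta(p u_0 u_1)$. Since $x(u_0)<x(p)<x(u_1)$, the vertices $u_0$ and $u_1$ are the unique leftmost and rightmost vertices of this triangle, so the interior point $v_0$ has $x(u_0)<x(v_0)<x(u_1)$; being on the near side of $L$, $v_0\in S_N(u_0,u_1)$. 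Finally, suppose for contradiction that $v_1\in S[u_0,u_1]$; by general position $x(v_1)\in(x(u_0),x(u_1))$, and together with $x(v_0)\in(x(u_0),x(u_1))$ the closed vertical strip spanned by $v_0$ and $v_1$ lies inside the open strip $S(u_0,u_1)$. By the crux applied to $\overline{v_0v_1}$, $x(q)$ lies strictly between $x(v_0)$ and $x(v_1)$; by Property~\ref{pr:common-neighbor}, $u_0,u_1\in W_T(q,v_0,v_1)$, so whichever of $u_0,u_1$ is on $q$'s side of $L'$ lies in $\Delta(q v_0 v_1)$, hence in the closed strip spanned by $v_0$ and $v_1$, hence in $S(u_0,u_1)$ --- impossible, since neither $u_0$ nor $u_1$ lies in $S(u_0,u_1)$. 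Therefore $v_1\notin S[u_0,u_1]$.

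The main obstacle is recognising and proving the crux of the second paragraph: it is the single input that simultaneously forces $v_0$ into the strip and ejects $v_1$ from it, and the purely angular information of Property~\ref{pr:common-neighbor} is not enough on its own. The remaining work is routine but needs care: verifying that ``open wedge $\cap$ open half-plane'' really equals the stated open triangle (which uses $p\notin L$ and $q\notin L'$, each holding because $D[\cdot,\cdot]$ meets the corresponding line only along its diameter, which lies above $\ell$), and disposing of the degenerate point configurations separately.
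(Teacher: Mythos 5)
Your proof is correct, and for most of its length it travels the same road as the paper's: part (i) is obtained from \cref{le:alternating-4-cycle} (you invoke the ``non-edges cross'' half after restricting to the induced $K_{2,2}$, the paper the ``edges don't cross'' half --- equivalent), and the placement of $v_0$ in $S_N(u_0,u_1)$ rests in both cases on the fact that the witness $p$ of $\overline{u_0u_1}$ lies in $S_N(u_0,u_1)$ together with \cref{pr:common-neighbor}. Your ``crux'' is precisely the paper's Property~\ref{pr:strip}, which the paper states without proof; your dot-product derivation of it is a welcome explicit justification. Where you genuinely diverge is the exclusion of $v_1$ from $S[u_0,u_1]$: the paper observes that $v_1\in S[u_0,v_0]$ or $S[v_0,u_1]$ would put $v_1$ in the far half-strip of an edge, so \cref{le:no-edge-blocking-vertex} would force $\overline{v_0v_1}$ to be an edge --- a two-line contradiction. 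You instead run the witness argument a second time on the non-edge $\overline{v_0v_1}$, showing its witness $q$ would drag one of $u_0,u_1$ into the open strip $S(u_0,u_1)$. Both are sound; the paper's route is shorter because it reuses an already-proved lemma, while yours is more self-contained (you never need \cref{le:no-edge-blocking-vertex}) and makes the symmetry between the two non-edges of the $C_4$ explicit. Your handling of the boundary/degenerate issues ($p\notin L$, $q\notin L'$, proper crossings) is no less careful than the paper's, which also tacitly assumes general position.
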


%

By means of \cref{le:vertical-strip} we can restrict the set of complete bipartite graph pairs that are MWG-drawable.

\begin{restatable}[*]{lemma}{lenoktwothree}\label{le:no-k23}
 Let $\Gamma$ be a WG-drawing of a complete bipartite graph such that $\Gamma$ is linearly separable from its witness set. Then $\Gamma$ does not have $K_{2,3}$ as a subgraph.
\end{restatable}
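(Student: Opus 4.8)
The plan is to argue by contradiction via \cref{le:vertical-strip}. Suppose $\Gamma$ is a WG-drawing of a complete bipartite graph, linearly separable from its witness set $P$, and suppose $\Gamma$ contains a copy of $K_{2,3}$. Since $\Gamma$ realizes a complete bipartite graph, the two degree-three vertices of this $K_{2,3}$ lie in one partition class and the three degree-two vertices in the other (if $u$ is a degree-three vertex adjacent to $v$, then $u$ and $v$ are in different classes, and chasing these adjacencies forces the two sides of the $K_{2,3}$ into the two classes). Label the two degree-three vertices $u_0,u_1$ and the three degree-two vertices $v_0,v_1,v_2$, and assume $x(u_0)<x(u_1)$ (if the two $x$-coordinates coincide, a slight rotation fixes this without changing any adjacencies, since Gabriel disks are rotation-invariant).

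Next I would note that for every pair $\{v_i,v_j\}$ with $i\neq j$, the four vertices $u_0,v_i,u_1,v_j$ induce a $C_4$ in $\Gamma$: the segments $\overline{u_0v_i},\overline{v_iu_1},\overline{u_1v_j},\overline{v_ju_0}$ are all edges of $\Gamma$ (being edges of the $K_{2,3}$), whereas $\overline{u_0u_1}$ and $\overline{v_iv_j}$ are non-edges because each joins two vertices of the same partition class of a complete bipartite graph. Hence \cref{le:vertical-strip}(ii) applies to $u_0,v_i,u_1,v_j$ and tells us that exactly one of $v_i,v_j$ lies in the near half-strip $S_N(u_0,u_1)$ (the other lies outside $S[u_0,u_1]$, hence in particular outside $S_N(u_0,u_1)$).

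The decisive observation is that $S_N(u_0,u_1)$ depends only on $u_0$ and $u_1$, so it is one and the same region in all three applications of \cref{le:vertical-strip} above. Setting $A=\{\,i\in\{0,1,2\}:v_i\in S_N(u_0,u_1)\,\}$, the previous paragraph yields $|A\cap\{i,j\}|=1$ for every two-element set $\{i,j\}\subseteq\{0,1,2\}$. But no subset $A$ of a three-element set can meet all three of its $2$-subsets in exactly one element: if $0\in A$ then $1,2\notin A$ and $|A\cap\{1,2\}|=0$, while if $0\notin A$ then $1,2\in A$ and $|A\cap\{1,2\}|=2$. This contradiction shows $\Gamma$ has no $K_{2,3}$ subgraph.

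Since \cref{le:vertical-strip} does the geometric work, there is no genuinely hard step here; the two things to get right are (a) checking that each quadruple $u_0,v_i,u_1,v_j$ really induces a $C_4$, which is exactly where complete bipartiteness is used, and (b) recognizing that the near half-strip is common to all three pairs, which is what converts three ``exactly one'' statements into an impossible parity condition on a $3$-element set. The clause of \cref{le:vertical-strip} placing the other vertex outside $S[u_0,u_1]$ is not needed for the argument; mere membership in $S_N(u_0,u_1)$ already suffices.
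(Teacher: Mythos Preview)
Your proof is correct. You and the paper both pivot on \cref{le:vertical-strip}, but you organize the argument differently, and your version is a bit cleaner.

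The paper labels the size-$3$ side as $u_0,u_1,u_2$ (ordered by $x$-coordinate) and the size-$2$ side as $v_0,v_1$, then looks at the two $C_4$'s on $\{u_0,u_1,v_0,v_1\}$ and $\{u_1,u_2,v_0,v_1\}$. It first invokes \cref{co:4-cycle-convexity} on both quadrilaterals to conclude that $v_0,u_1,v_1$ are vertically aligned, and then gets a contradiction from \cref{le:vertical-strip}(ii), which forces one of $v_0,v_1$ strictly inside $S_N(u_0,u_1)$, hence with $x$-coordinate strictly less than $x(u_1)$.

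You instead fix the size-$2$ side as $u_0,u_1$ and apply \cref{le:vertical-strip} to all three $C_4$'s sharing that pair. The ``exactly one of each pair lies in $S_N(u_0,u_1)$'' conclusion, combined over three pairs drawn from a $3$-element set, is a parity obstruction. This avoids \cref{co:4-cycle-convexity} entirely and is more self-contained. One small remark: the rotation to ensure $x(u_0)\neq x(u_1)$ is harmless but actually unnecessary --- if $x(u_0)=x(u_1)$ then $D[u_0,u_1]$ lies entirely on the $\Gamma$-side of the separating line and cannot contain a witness, contradicting that $\overline{u_0u_1}$ is a non-edge.
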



We now characterize the MWG-drawable pairs of complete bipartite graphs. We recall that Aronov et al. prove that every complete bipartite graph admits a WG-drawing (Theorem 5 of ~\cite{DBLP:journals/comgeo/AronovDH13}). The following theorem can be regarded as an analog of the result by Aronov et al. in the context of MWG-drawings.

\begin{figure}[t]
	\centering
		\includegraphics[width=0.55\textwidth, page=1]{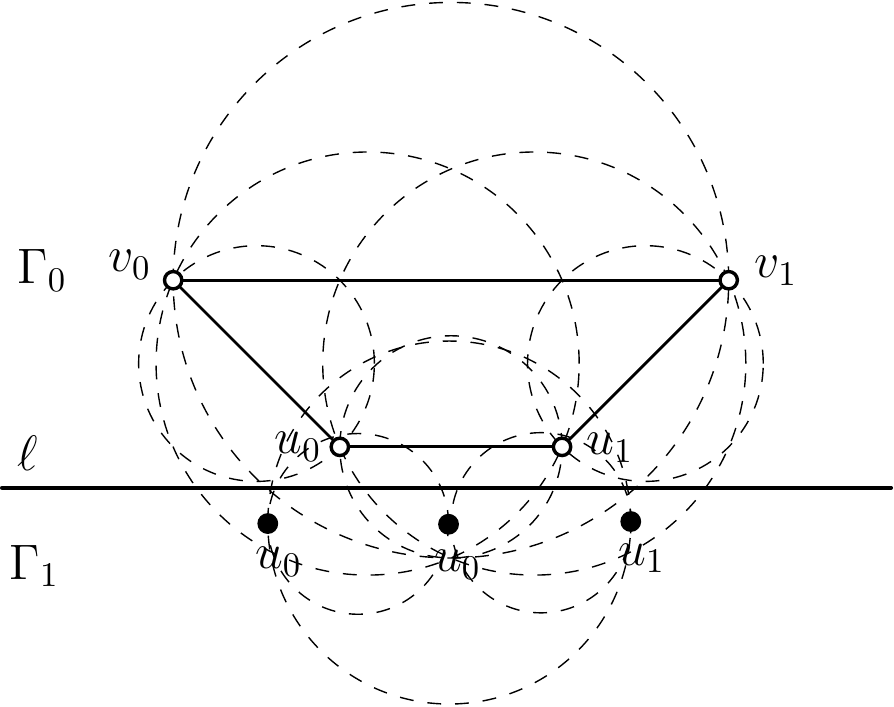}
			\caption{An MWG-drawing of $K_{2,2}$ and of an independent set of size three.}
	\label{fi:independent}
\end{figure}

\begin{theorem}\label{th:bipartite-characterization}
Let $\langle G_0,G_1 \rangle $ be a pair of complete bipartite graphs such that $G_i$ has $n_i$ vertices. The pair $\langle G_0,G_1 \rangle $ admits an MWG-drawing if and only if, for $i=0,1$, $G_i$ is either $K_{1,n_i - 1}$ or $K_{2,2}$ and $|n_0 - n_1| \leq 2$.
\end{theorem}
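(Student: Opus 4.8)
The plan is to prove the two directions separately. For the ``if'' direction, I would split into cases according to what $G_0$ and $G_1$ are. When both $G_i$ are stars $K_{1,n_i-1}$ with $|n_0-n_1|\le 2$, this is exactly \cref{le:2-stars}. The remaining sub-cases involve $K_{2,2}$ in at least one of the two graphs. Since $K_{2,2}$ has $n=4$ vertices, the constraint $|n_0-n_1|\le 2$ forces the companion graph to have between $2$ and $6$ vertices, i.e.\ to be one of $K_{1,1}$, $K_{1,2}$, $K_{1,3}$, $K_{1,4}$, $K_{1,5}$ or $K_{2,2}$ itself (note $K_{1,1}=K_{1,n-1}$ with $n=2$ and also $K_2$; an independent set is not a connected complete bipartite graph, but a single vertex or $K_{1,1}$ are). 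So I would exhibit, either explicitly or by a vertex-addition argument, a linearly separable MWG-drawing of $\langle K_{2,2}, K_{2,2}\rangle$ and of $\langle K_{2,2}, H\rangle$ for each admissible star $H$. \cref{fi:independent} already gives the drawing of $K_{2,2}$ together with an independent set of size three; from a base drawing of $K_{2,2}$ and a small independent set on the other side, I would repeatedly apply \cref{le:universal-vertex} (to turn the center of a star into a universal vertex) and \cref{le:isolated-vertex} (to pad with leaves/extra vertices while keeping linear separability), exactly as in the proof of \cref{le:2-stars}. The key point is that all of these base drawings can be made linearly separable, so the lemmas apply.

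\textbf{The ``only if'' direction.} Suppose $\langle G_0,G_1\rangle$ is MWG-drawable with $G_i$ complete bipartite on $n_i$ vertices; I must show each $G_i\in\{K_{1,n_i-1},K_{2,2}\}$ and $|n_0-n_1|\le 2$. First, a complete bipartite graph has diameter at most $2$, so by \cref{th:linear-separability} the drawing $\langle\Gamma_0,\Gamma_1\rangle$ is linearly separable; hence each $\Gamma_i$ is a WG-drawing of a complete bipartite graph linearly separable from its witness set, and by \cref{co:planarity} both $G_i$ are planar. A planar complete bipartite graph $K_{a,b}$ (with $a\le b$) satisfies $a\le 2$, so each $G_i$ is a star $K_{1,m}$ or of the form $K_{2,b}$. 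Now \cref{le:no-k23} says no such $\Gamma_i$ contains $K_{2,3}$ as a subgraph, which rules out $K_{2,b}$ for every $b\ge 3$; hence each $G_i$ is either a star $K_{1,n_i-1}$ or exactly $K_{2,2}$, establishing the first part of the characterization.

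\textbf{The size constraint.} It remains to show $|n_0-n_1|\le 2$. This is the part I expect to be the main obstacle, since it is the genuinely quantitative claim. The idea is to work in the linearly separable drawing and count. Assume w.l.o.g.\ $n_0\ge n_1$ and that $\Gamma_0$ lies above the separating line $\ell$ with $\Gamma_1$ below. Every vertex of $\Gamma_1$ is a witness that must ``kill'' some non-edges of $\Gamma_0$, and every vertex of $\Gamma_0$ likewise kills non-edges of $\Gamma_1$. The crucial structural fact to extract is a bound on how many vertices of $\Gamma_0$ can be witnessed/separated by a single vertex of $\Gamma_1$, combined with the half-strip machinery (\cref{le:vertical-strip,le:no-edge-blocking-vertex}): if $G_0$ is a star $K_{1,n_0-1}$ with center $c_0$, then for each leaf $\ell'$ the non-edge $\overline{c_0\ell'}$ — wait, $\overline{c_0\ell'}$ is an edge; rather each pair of leaves forms a non-edge, so there are $\binom{n_0-1}{2}$ non-edges all needing witnesses below $\ell$, and by \cref{pr:common-neighbor} every such witness $p$ must have both leaves in the top wedge $W(p,\cdot,\cdot)$ while the center $c_0$ witnesses nothing. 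Ordering the leaves of $\Gamma_0$ by $x$-coordinate and the vertices of $\Gamma_1$ by $x$-coordinate, I would argue that a single witness can separate at most a contiguous block of leaves of bounded size, and conversely that $\Gamma_1$ cannot be too large relative to $\Gamma_0$ because its own non-edges need witnesses from $\Gamma_0$; pushing this symmetric accounting through yields $n_0-n_1\le 2$. When one of the graphs is $K_{2,2}$ the same half-strip analysis applies with the $C_4$ playing the role handled by \cref{le:vertical-strip}. I would present this as a sequence of short claims: (1) linear separability and the left-to-right ordering of each side; (2) each vertex of $\Gamma_{1-i}$ witnesses the non-edges of $\Gamma_i$ lying in a single wedge, of bounded ``leaf-span''; (3) summing over the $n_{1-i}$ witnesses bounds the number of non-edges, hence the size, of $\Gamma_i$ in terms of $n_{1-i}$; (4) applying (3) in both directions forces $|n_0-n_1|\le 2$. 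The delicate step is (2)–(3): making precise why the span is bounded by exactly the right constant so that the arithmetic closes up to $2$ rather than some larger slack.
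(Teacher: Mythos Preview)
Your ``if'' direction and the structural half of the ``only if'' direction (reducing each $G_i$ to a star or $K_{2,2}$ via \cref{th:linear-separability}, \cref{co:planarity}, and \cref{le:no-k23}) match the paper's approach exactly.

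The gap is in the size constraint $|n_0-n_1|\le 2$. You flag it as the ``main obstacle'' and sketch a wedge/leaf-span accounting that you yourself call delicate and leave unfinished. The paper's argument is much simpler and you are missing the key observation that makes it close immediately. Once you know the drawing is linearly separable with horizontal line $\ell$, any witness for a non-edge $\overline{uv}$ of $\Gamma_i$ must lie in the open vertical strip $S(u,v)$ (this is just Property~\ref{pr:strip}: the witness is below $\ell$ and inside $D[u,v]$, so its $x$-coordinate is strictly between those of $u$ and $v$). Now, if $G_i=K_{1,n_i-1}$, list its $n_i-1$ leaves by increasing $x$-coordinate; the $n_i-2$ strips between \emph{consecutive} leaves are pairwise disjoint, and each must contain a witness from $\Gamma_{1-i}$. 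Hence $n_{1-i}\ge n_i-2$. If $G_i=K_{2,2}$, the two non-edges need distinct witnesses (this is exactly the observation inside the proof of \cref{le:alternating-4-cycle} that $p_0$ cannot also witness the second non-edge), so again $n_{1-i}\ge 2=n_i-2$. Applying this in both directions gives $|n_0-n_1|\le 2$ in one line.

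So the ``bounded leaf-span per witness'' idea is the wrong granularity: you do not need to bound how many non-edges a single witness can kill, only to exhibit $n_i-2$ particular non-edges whose strips are disjoint. That turns steps (2)--(4) of your plan into a single sentence and removes the ``delicate constant'' worry entirely.
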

\begin{proof}
By Theorem~\ref{th:linear-separability}, any MWG-drawing $\langle \Gamma_0 , \Gamma_1 \rangle$ of $G_0$ and $G_1$ is linearly separable, so any witnes for a non-edge $\overline{u v}$ in $\Gamma_i$ must lie in $S(u,v)$. By Corollary~\ref{co:planarity} both $G_0$ and $G_1$ must be planar. By \cref{le:no-k23} each of the two graphs is either $K_{2,2}$ or a star (i.e. $K_{1,n_i - 1}$, $i=0,1$).
Together, these imply that the difference in the cardinalities of the vertex sets in the two graphs is at most two.

If $G_0 = K_{1,n_0 - 1}$ and $G_1 = K_{1,n_1 - 1}$, the theorem follows by \cref{le:2-stars}. If $G_0 = K_{2,2}$ and $G_1 = K_{2,2}$ the pair $\langle G_0,G_1 \rangle $ has an MWG-drawing as shown, for example, in Figure~\ref{fi:universal-a}. By removing one of the bottom-most vertices of $\Gamma_1$ in Figure~\ref{fi:universal-a} we obtain an MWG-drawing of $\langle K_{2,2},K_{1,2} \rangle $ and by removing both the bottom-most vertices of $\Gamma_1$ in Figure~\ref{fi:universal-a} we obtain an MWG-drawing of $\langle K_{2,2},K_{1,1} \rangle $. To complete the proof we have to show that $\langle K_{2,2},K_{1,3} \rangle $, $\langle K_{2,2},K_{1,4} \rangle $, and $\langle K_{2,2},K_{1,5} \rangle $ are also MWG-drawable pairs. To this end refer to Figure~\ref{fi:independent} that shows an MWG-drawing $\langle \Gamma_0,\Gamma_1 \rangle $ where $\Gamma_0$ is $K_{2,2}$ while $\Gamma_1$ is an independent set consisting of three vertices. By applying \cref{le:universal-vertex} we can add a universal vertex to $\Gamma_1$, thus obtaining an MWG-drawing of $\langle K_{2,2},K_{1,3} \rangle $. In order to construct MWG-drawings of $\langle K_{2,2},K_{1,4} \rangle $ and of $\langle K_{2,2},K_{1,5} \rangle $, notice that in Figure~\ref{fi:independent} $v_0$ is the leftmost vertex and $v_1$ is the rightmost vertex of $\langle \Gamma_0,\Gamma_1 \rangle $. By \cref{le:isolated-vertex} we can add either one isolated vertex or two isolated vertices to $\Gamma_1$. In the former case we obtain an MWG-drawing of $K_{2,2}$ and of an independent set of size four which can be extended to an MWG-drawing of $\langle K_{2,2},K_{1,4} \rangle $ by means of \cref{le:universal-vertex}. In the latter case, we again use \cref{le:universal-vertex} to add a universal vertex to the drawing of the independent set of size five and obtain an MWG-drawing of $\langle K_{2,2},K_{1,5} \rangle $.
\end{proof}

The following theorem is a consequence of \cref{th:bipartite-characterization} and of the constructive arguments of \cref{le:2-stars}



\begin{theorem}\label{th:bipartite-testing}
Let $\langle G_0,G_1 \rangle $ be a pair of complete bipartite graphs such that $G_0$ has $n_0$ vertices and $G_1$ has $n_1$ vertices. There exists an $O(n_0+n_1)$-time algorithm that tests whether  $\langle G_0,G_1 \rangle $ admits an MWG-drawing. In the  affirmative case, there exists an $O(n_0+n_1)$-time algorithms to compute an MWG-drawing of $\langle G_0,G_1 \rangle $ in the real RAM model of computation.
\end{theorem}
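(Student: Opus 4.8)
The plan is to read the algorithm directly off \cref{th:bipartite-characterization} and to show that the constructive lemmas underlying that characterization add vertices at amortized constant cost. We first describe the test. By \cref{th:bipartite-characterization}, the pair $\langle G_0,G_1\rangle$ is MWG-drawable exactly when, for $i=0,1$, $G_i$ is a star $K_{1,n_i-1}$ or equals $K_{2,2}$, and $|n_0-n_1|\le 2$. These are constant-size predicates once the bipartition of each $G_i$ and the two part sizes $(a_i,b_i)$ with $a_i\le b_i$ are known, so it suffices to compute those in $O(n_0+n_1)$ time. This is immediate if the parts come with the input, as is customary for complete multipartite graphs; otherwise one recovers them using the fact that a complete bipartite graph with both parts of size at least two has minimum degree at least two: inspecting the degrees of a couple of vertices (truncated at $2$) already decides whether $G_i$ can be a star, a ``yes'' instance then has only $O(n_i)$ edges and is confirmed by one traversal, and a ``no'' instance of the form $K_{a,b}$ with $a\ge 2$ and $n_i\ne 4$ is rejected after $O(n_i)$ work. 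We then check that for $i=0,1$ either $a_i=1$ or $a_i=b_i=2$, and that $|n_0-n_1|\le 2$, in $O(1)$ further time, and answer accordingly. This phase uses only integer arithmetic.

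When the answer is ``drawable'' we produce a witness drawing by replaying the proofs of \cref{le:isolated-vertex}, \cref{le:universal-vertex}, \cref{le:2-stars} and the case analysis in the proof of \cref{th:bipartite-characterization}. If one of $G_0,G_1$ equals $K_{2,2}$, the characterization leaves only the pairs $\langle K_{2,2},K_{2,2}\rangle$ and $\langle K_{2,2},K_{1,m}\rangle$ with $m\le 5$, all of bounded size, so each is emitted from a hard-coded $O(1)$-size coordinate list (those of Figures~\ref{fi:universal-a} and~\ref{fi:independent} and their sub-drawings). Otherwise both graphs are stars and we follow \cref{le:2-stars}: we build, by $\Theta(n_0+n_1)$ successive isolated-vertex insertions (\cref{le:isolated-vertex}), a linearly separable MWG-drawing of two independent sets of the prescribed sizes, then attach one or two universal vertices via \cref{le:universal-vertex}, inserting one further isolated vertex in between when $|n_0-n_1|=2$. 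The total number of insertion steps is $O(n_0+n_1)$.

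It remains to bound each insertion by $O(1)$ real-RAM operations, which is the crux. In the real RAM a cell holds a real number on which $+,-,\times,\div,\sqrt{\cdot}$ cost $O(1)$, so coordinate magnitudes growing unboundedly is harmless. Each isolated-vertex step of \cref{le:isolated-vertex} places the new point on a fixed line, to the left (resp.\ right) of and beyond a threshold distance from the current leftmost (resp.\ rightmost) vertex of the two drawings; each universal-vertex step of \cref{le:universal-vertex} places it on a fixed line beyond a threshold distance from a single witness on the separating line. The plan is to carry along, as an invariant, a constant-size \emph{frontier} of the drawing---the separating line, the current extreme vertices of $\Gamma_0$ and of $\Gamma_1$, and the $O(1)$ witnesses whose Gabriel disks matter near those extremes---and to verify that in each construction the required threshold is an explicit expression, built from finitely many applications of the real-RAM primitives (the square roots coming only from circle radii), in the coordinates recorded in the frontier. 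Granting this, every step touches $O(1)$ cells and performs $O(1)$ arithmetic, so the drawing is produced in $O(n_0+n_1)$ time.

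I expect the main obstacle to be precisely this last verification: confirming that ``far enough'' in Lemmas~\ref{le:isolated-vertex} and~\ref{le:universal-vertex} is certified by only $O(1)$ already-placed points---so that the frontier is genuinely sufficient and no rescan of the whole drawing is ever needed---and that the thresholds it yields stay within the real-RAM operation set rather than requiring higher-degree algebraic numbers. Everything else---tracking which of the two drawings currently owns the global leftmost/rightmost vertex, and dispatching the finitely many $K_{2,2}$-instances by lookup---is routine bookkeeping.
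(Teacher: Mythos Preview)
Your approach is exactly the one the paper intends: it states the theorem as an immediate consequence of \cref{th:bipartite-characterization} together with the constructive arguments in \cref{le:2-stars}, without spelling out the running-time analysis, so your write-up is already more detailed than the paper's own justification.

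On the concern you flag about the ``far enough'' thresholds: it dissolves once you separate the two lemmas by how often they are invoked. \cref{le:universal-vertex} is applied only a constant number of times in the whole construction (at most twice in \cref{le:2-stars}, and once more in the $K_{2,2}$ cases), so even the naive $O(n)$ scan over all current vertices to compute $p'$ and $p''$ is fine. For \cref{le:isolated-vertex}, applied $\Theta(n_0+n_1)$ times, you do need constant work per call---but in the specific zigzag built in \cref{le:2-stars} the vertices are inserted in increasing $x$-order and alternate colours at fixed heights, and a short calculation shows that the binding constraint $\angle v\,u\,v_0>\frac{\pi}{2}$ is always attained by the \emph{rightmost} existing vertex $v$ of the colour being extended (the one closest to the current rightmost witness $u$). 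Hence the frontier you propose really is just the two current rightmost points, one per colour, and each insertion costs $O(1)$ real-RAM operations; no square roots are even needed, since the threshold reduces to a sign test on a dot product.
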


\section{MWG-drawable Complete $k$-partite graphs}\label{se:k-partite}

Aronov et al. also showed that there exists a complete multipartite graph, namely $K_{3,3,3,3}$, which does not admit a  WG-drawing (Theorem
15 of ~\cite{DBLP:journals/comgeo/AronovDH13}). We extend this result in the context of MWG-drawings by proving the following result.

\begin{figure}[t]
	\centering
		\includegraphics[width=0.7\textwidth, page=1]{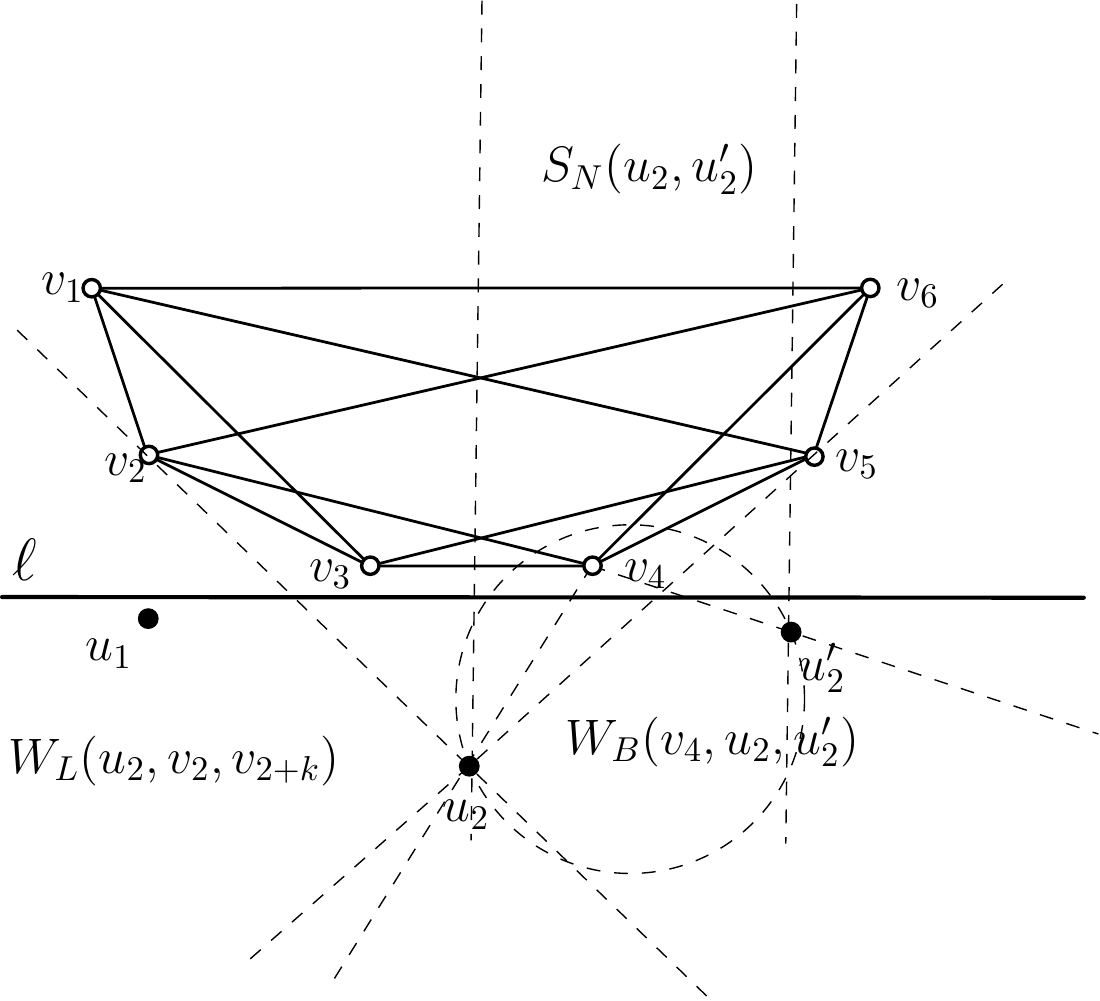}
			\caption{$v_4 \in S_N(u_2,u'_2)$ implies $ W_L(u_2,v_2,v_{2+k}) \cap W_B(v_4,u_2,u_2') = \emptyset$ .}
	\label{fi:nok222-k222}
\end{figure}

\begin{restatable}[*]{theorem}{thmultipartite}\label{th:no-k222-k222}
 Let $\langle G_0,G_1 \rangle $ be a pair of complete multi-partite graphs such that for each of the graphs every partition set has size at least two. The pair is mutually Gabriel drawable if and only if it is $\langle K_{2,2},K_{2,2} \rangle $.
\end{restatable}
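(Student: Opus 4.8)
The plan is to prove the two implications of the statement separately, with essentially all the work in the ``only if'' direction.

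For the ``if'' direction I would just invoke \cref{th:bipartite-characterization}: the pair $\langle K_{2,2},K_{2,2}\rangle$ is MWG-drawable, an explicit drawing being the one in Figure~\ref{fi:universal-a}. For the ``only if'' direction, suppose $\langle G_0,G_1\rangle$ admits an MWG-drawing $\langle \Gamma_0,\Gamma_1\rangle$. Since each $G_i$ is complete multi-partite with all partition sets of size at least two, each $G_i$ has diameter exactly two, so \cref{th:linear-separability} applies: $\langle\Gamma_0,\Gamma_1\rangle$ is linearly separable, and hence each $\Gamma_i$ is a linearly separable WG-drawing of $G_i$ whose witness set is the vertex set of $\Gamma_{1-i}$. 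The goal now reduces to showing that $G_i=K_{2,2}$ for $i=0,1$. Fixing $i$ and assuming $G_i\neq K_{2,2}$, I would distinguish two cases according to the partition structure: either (a) $G_i$ is complete bipartite and has a partition set of size at least three, or (b) $G_i$ has at least three partition sets. In case (a) the graph $G_i$ contains $K_{2,3}$ as an induced subgraph, contradicting \cref{le:no-k23}. Case (b) is the substantive one: choosing exactly two vertices from each of three partition sets yields six vertices of $\Gamma_i$ that induce a $K_{2,2,2}$, so it remains to prove that no linearly separable WG-drawing can contain six vertices inducing $K_{2,2,2}$.

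To establish this last fact I would argue geometrically. Let $A=\{a,a'\}$, $B=\{b,b'\}$, $C=\{c,c'\}$ be the parts of such an induced $K_{2,2,2}$ in a linearly separable WG-drawing $\Gamma$; the non-edges of $\Gamma$ among these six vertices are exactly $\overline{aa'}$, $\overline{bb'}$, $\overline{cc'}$, and each has a witness on the far side of the separating line $\ell$. The only induced $4$-cycles on these vertices are the three cycles on pairs of parts, e.g.\ $a\,b\,a'\,b'$, whose diagonals are the two non-edges $\overline{aa'}$ and $\overline{bb'}$. I would apply \cref{le:vertical-strip} to each of these three $C_4$'s, and to each of the two ways of reading each $C_4$, obtaining for every ordered pair of parts the rigid positional information that exactly one vertex of one part lies in the near half-strip $S_N(\cdot,\cdot)$ of the non-edge of the other part, while its partner lies outside the closed strip $S[\cdot,\cdot]$ and on the opposite side of that non-edge's supporting line. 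Up to relabelling I may then assume that $b,c\in S_N(a,a')$---so both are below $\overline{aa'}$ and $x$-between $a$ and $a'$---while $b',c'$ lie above the line through $a,a'$ and outside $S[a,a']$, together with the two analogous statements obtained by cyclically permuting $A,B,C$. Feeding these positions into the wedge machinery---Property~\ref{pr:triangle} applied to the pairs of edges sharing a vertex of the $K_{2,2,2}$, Property~\ref{pr:common-neighbor} applied to the three non-edges together with their common neighbours, \cref{le:non-crossing} and \cref{le:no-edge-blocking-vertex} to exclude the relevant crossings---I expect to arrive at the configuration of Figure~\ref{fi:nok222-k222}: the vertex forced into $S_N(u,u')$ for one of the three non-edges $\overline{uu'}$ makes a left wedge $W_L(u,\cdot,\cdot)$ disjoint from a bottom wedge $W_B(\cdot,u,u')$, even though the $C_4$ constraints and Property~\ref{pr:common-neighbor} force both wedges to contain a further one of the six vertices---a contradiction. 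This would complete the argument and show that each $G_i$ is $K_{2,2}$.

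The hard part will be case (b). Unlike the bipartite situation, $K_{2,2,2}$ has independence number two and therefore contains no induced $K_{2,3}$, so the short parity argument behind \cref{le:no-k23} is not available and one must reason about all three non-edges of the octahedron simultaneously. The real effort goes into organizing the ensuing case analysis---which vertex of each part is the ``near'' one, on which side of each of the three supporting lines the other vertices fall, and on which side of each vertical strip they sit---so that every branch collapses to an empty wedge-intersection as in Figure~\ref{fi:nok222-k222}.
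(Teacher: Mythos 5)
Your ``if'' direction and your case (a) are fine, and your use of \cref{th:linear-separability} to get linear separability matches the paper. The problem is the claim to which you reduce case (b): that \emph{no linearly separable WG-drawing can contain six vertices inducing a $K_{2,2,2}$}. That claim is false, and the paper in fact exhibits the opposite: Lemma~\ref{le:convex-terrain} and Lemma~\ref{le:witness-ordering} \emph{characterize} linearly separable WG-drawings of complete $k$-partite graphs with all classes of size two (vertices on a convex terrain $v_1,\dots,v_{2k}$ with classes $\{v_i,v_{k+i}\}$, witnesses $p_1,\dots,p_k$ ordered left to right), and Figure~\ref{fi:terrain} is precisely such a drawing for $k=3$, i.e.\ of $K_{2,2,2}$. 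The wedge bookkeeping you propose uses only constraints internal to one drawing together with the existence of \emph{some} witness below $\ell$ for each of the three non-edges, and such configurations are realizable; every branch of your case analysis will therefore terminate in a consistent picture rather than an empty wedge intersection. A telltale sign that a single-drawing argument cannot work is that it would never invoke the hypothesis that the partition sets of the \emph{other} graph also have size at least two.

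The paper's contradiction is genuinely mutual. After establishing the terrain structure and the ordering of the witnesses, it takes the witness $u_2\in\Gamma_1$ of the middle color class $\{v_2,v_{2+k}\}$ of $\Gamma_0$ and uses that $u_2$, being a vertex of a complete multipartite graph all of whose classes have size at least two, has a non-adjacent partner $u_2'$; the non-edge $\overline{u_2u_2'}$ then needs its own witness $v\in\Gamma_0$, Property~\ref{pr:common-neighbor} forces $v\in W_T[u_2,v_2,v_{2+k}]$ and forces \emph{all} of $\Gamma_1$ into $W_B[v,u_2,u_2']$, and this wedge is disjoint from $W_L(u_2,v_2,v_{2+k})$, which must nevertheless contain the witness $u_1$ of the first color class. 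To repair your proof you would need to import this back-and-forth step; the structural lemmas about a single drawing are preparation for it, not a substitute. (A secondary point: the paper rules out partition sets of size three or more not only in the bipartite case but in general, by restricting to the induced complete bipartite subgraph on two such sets and applying \cref{le:no-k23} to that induced drawing; your case split leaves, say, $K_{2,2,3}$ entirely to the doomed case (b).)
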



\begin{proof} [Sketch] Let $\Gamma$ be a WG-drawing of a complete $k$-partite graph, with $k\geq 2$ such that $\Gamma$ is linearly separable from its witness set $P$ by a separating line $\ell$. Note that any induced subgraph $G'$ of $G$ admits a WG-drawing with witness set $P$, which can be derived from $\Gamma$ by removing the vertices not in $G'$.  By this observation, \cref{th:linear-separability}, and \cref{le:no-k23} we conclude that if $\langle G_0,G_1 \rangle $ is a pair of complete MWG-drawable multi-partite graphs, then neither $G_0$ nor $G_1$ can have $K_{2,3}$ as a subgraph. Therefore we can assume that all partition sets in each of the two graphs have size exactly two. Refer to Figure~\ref{fi:nok222-k222}.

The proof proceeds by first showing that $\mathit{CH}(\Gamma)$ is a {\em convex terrain} with respect to $\ell$; that is  for each vertex $v$ on the boundary of $\mathit{CH}(\Gamma)$, the segment from $v$ to $\ell$ perpendicular to $\ell$ does not intersect $\mathit{CH}(\Gamma)$. Using this property we can order the vertices of $\Gamma$ by increasing $x$-coordinate and show that the $i$-th partite set consists of vertices $v_i$, $v_{i+k}$.  Let $p_i$ be a witness for vertices $v_i$ and $v_{k+i}$ and let $p_j$ be a witness for vertices $v_j$ and $v_{k+j}$. Thirdly, we show that if $i < j$ then $p_i \in W_L(p_j,v_j,v_{j+k})$; if $i > j$ then $p_i \in W_R(p_j,v_j,v_{j+k})$. Consider now an MWG-drawing $\langle \Gamma_0,\Gamma_1 \rangle $ where $\Gamma_0$ is in the upper half-plane with respect to the separating line $\ell$ and it has at least three partition sets. Let $u_1$, $u_2$ and $u_3$ be three vertices of $\Gamma_1$ that act as witnesses for partition sets $\{v_1,v_{1+k}\}$, $\{v_2,v_{2+k}\}$, and $\{v_3,v_{3+k}\}$, respectively. We have $u_1 \in W_L(u_2,v_2,v_{2+k})$ and  $u_3 \in W_R(u_2,v_2,v_{2+k})$. Let $u_2'$ be the vertex of $\Gamma_1$ that is in the same partition set as $u_2$ and assume that $u_2'$ is to the right of $u_2$ (the proof in the other case being symmetric). Let $v$ be a vertex of $\Gamma_0$ that is a witness of $u_2$ and $u_2'$ (i.e. $v \in D[u_2, u'_2]$).

By Property~\ref{pr:common-neighbor} either $v \in \{v_2, v_{2+k}\}$ or $v \in W_T(u_2, v_2, v_{2+k})$, hence $v \in W_T[u_2, v_2, v_{2+k}]$.
Again by Property~\ref{pr:common-neighbor}, all vertices of $\Gamma_1$ must lie in $W_B[v,u_2,u_2']$.
Because $v \in W_T[u_2, v_2, v_{2+k}]$,  $W_B[v,u_2,u_2']$ is disjoint from $W_L(u_2,v_2,v_{2+k})$.
But $u_1 \in W_L(u_2,v_2,v_{2+k})$, a contradiction.
\end{proof}

\section{Open Problems}\label{se:open}

 The results of this paper naturally suggest many interesting  open problems. For example: (i) Can one give a complete characterization of those pairs of complete multipartite graphs that admit an MWG-drawing extending \cref{th:no-k222-k222} by taking into account graphs some of whose  partition sets have size one? It is not hard to see that the ideas of \cref{le:universal-vertex,le:2-stars,} can be used to construct MWG-drawings of graph  pairs of the form $\langle K_{1,\cdots,1,n_0},K_{1,\cdots,1,n_1} \rangle $ as long as the number of partition sets of size one in the two graphs differ by at most two. However, this may not be a complete characterization. (ii) Which other pairs of diameter-2 graphs admit an MWG-drawing? (iii) Which pairs of (not necessarily complete) bipartite graphs admit an MWG-drawing?
 (iv) Finally, it would be interesting to study mutual witness drawings for other proximity regions.

\bibliographystyle{abbrvurl}
\bibliography{biblio}

\newpage

\newpage

\appendix
\makeatletter
\noindent
\rlap{\color[rgb]{0.51,0.50,0.52}\vrule\@width\textwidth\@height1\p@}%
\hspace*{7mm}\fboxsep1.5mm\colorbox[rgb]{1,1,1}{\raisebox{-0.4ex}{%
		\large\selectfont\sffamily\bfseries Appendix}}%
\makeatother

\section*{Proofs omitted from Section~\ref{se:bipartite}}

\coplanarity*
\begin{proof}
Let $\langle \Gamma_0,\Gamma_1 \rangle $ be an MWG-drawing of $\langle G_0,G_1 \rangle $. By \cref{th:linear-separability}, $\Gamma_0$ and $\Gamma_1$ are linearly separable. Any two vertex-disjoint edges of $\Gamma_i$ induce an alternating 4-cycle and thus, by \cref{le:alternating-4-cycle}, they cannot cross. And, of course, two edges that share a vertex cannot cross, so both graphs are planar.
\end{proof}

\coconvexity*
\begin{proof}
By the same argument as in \cref{co:planarity}, no two edges of $\Gamma$ can cross. Let $C$ be a 4-cycle formed ny four edges $\overline{u_0v_0}, \overline{v_0u_1}, \overline{u_1v_1}, \overline{v_1u_0}$  of $\Gamma$. Since any the two pairs of non adjacent edges of $C$ form an alternating 4-cyle with the two non-edges $\overline{u_0u_1}$ and $\overline{v_0v_1}$, by \cref{le:alternating-4-cycle} we have that  non-edges $\overline{u_0u_1}$ and $\overline{v_0v_1}$ cross in $\Gamma$, which implies that $C$ is a convex polygon.
\end{proof}

\leuniversalvertex*
\begin{proof}
Let $\langle \Gamma_0,\Gamma_1 \rangle $ be a linearly separable MWG-drawing of $\langle G_0,G_1 \rangle $  and let $\ell$ be the horizontal line separating $\Gamma_0$ from $\Gamma_1$; see, for example, Figure~\ref{fi:universal-a}. Assume that $\Gamma_0 $ is in the top half-plane defined by $\ell$ and let $\ell'$ be a vertical line not passing through any vertices of $\Gamma_0$. Let $p$ be the intersection point of $\ell$ with $\ell'$. For each vertex $v \in \Gamma_0$ let  $\ell_v$ be the line passing through $v$ and orthogonal to segment $\overline{pv}$. Let $p_v$ be the intersection point of $\ell_v$ with $\ell'$ and let $p'$ the topmost of such intersection points defined by considering all vertices of $\Gamma_0$. Also, let $p''$ be the topmost intersection point between $D[u, u']$ and $\ell'$, taken over all edges $\overline{u u'}$ of $\Gamma_1$.  Let $p_{\max}$ be any point of $\ell'$ above both $p'$ and $p''$. A linearly separable MWG-drawing of $\langle G_0 \cup \{v_0\},G_1 \rangle $ such that $v_0$ is a universal vertex of $G_0$ can be obtained from $\langle \Gamma_0,\Gamma_1 \rangle $ by adding $v_0$ at any point along $\ell'$ above $p_{\max}$. See, for example, Figure~\ref{fi:universal-b}. By construction, we have that: (i) for any vertex $v \in \Gamma_0$, we have $D[v, v_0] \cap \ell = \emptyset$ which implies that $v_0$ is adjacent to all vertices of $\Gamma_0$; (ii) for any pair $u u'$ of vertices of $\Gamma_1$, the Gabriel disk $D[u, u']$ contains a vertex of $\Gamma_0 \cup \{v_0\}$ if and only if it contains a vertex of $\Gamma_0$; (iii) since no vertex has been added or deleted in $\Gamma_1$, all edges and non-edges of $\Gamma_0$ are maintained in $\langle \Gamma_0 \cup \{v_0\},\Gamma_1 \rangle $.
\end{proof}


\leisolatedvertex*
\begin{proof}
  Let $\langle \Gamma_0,\Gamma_1 \rangle $ be a linearly separable MWG-drawing of $\langle G_0,G_1 \rangle $, let $\ell$ be the horizontal line separating $\Gamma_0$ from $\Gamma_1$, and assume that $\Gamma_0 $ is in the top half-plane defined by $\ell$. Let $u$ be the rightmost vertex of the MWG-drawing. If $u$ belongs to $\Gamma_1$ we add an isolated vertex to $\Gamma_0$, else we add an isolated vertex to $\Gamma_1$. For concreteness, we describe the construction in the case that $u$ is a vertex of $\Gamma_1$, the other case being symmetric. See, for example, Figure~\ref{fi:isolated-a}. For each vertex $v \in \Gamma_0$ let  $\ell_v$ be the line through $u$ orthogonal to segment $\overline{v u}$. Let $p_v$ be the intersection point of $\ell_v$ with $\ell$ and let $p_{\max}$ be the rightmost of such intersection points defined by considering all vertices of $\Gamma_0$ and $\ell_{\max}$ be the line from the set $\{\ell_v : v \in \Gamma_0\}$ intersecting $\ell$ at $p_{\max}$. See, for example, Figure~\ref{fi:isolated-b}. Add a vertex $v_0$ to the right of $\ell_{\max}$ and above $\ell$. We now show that $\langle \Gamma_0 \cup \{v_0\},\Gamma_1 \rangle $ is an MWG-drawing.

  Since $v_0$ is to the right of any vertex of $\Gamma_1$, $v_0$ is in the top half-plane and $\Gamma_1$ is in the bottom half-plane, by Property~\ref{pr:strip} we have that for any pair of vertices $u_i,u_j$ of $\Gamma_1$, $v_0$ is not a point of $D[u_i, u_j]$.
It follows that for any pair
$u_i,u_j$ of vertices of $\Gamma_1$, the Gabriel disk $D[u_i, u_j]$ contains a vertex of $\Gamma_0 \cup \{v_0\}$ if and only if it contains a vertex of $\Gamma_0$. Also, for any vertex $v \in \Gamma_0$, we have $\angle v u v_0 > \frac{\pi}{2}$ which implies that $u \in D[v, v_0]$ and $v_0$ is not adjacent to any vertex of $\Gamma_0$. Finally, since no vertex has been added or deleted in $\Gamma_1$, all edges and non-edges of $\Gamma_0$ are maintained in $\langle \Gamma_0 \cup \{v_0\},\Gamma_1 \rangle $. We conclude that if $\langle \Gamma_0,\Gamma_1 \rangle $ is a linearly separable MWG-drawing, also $\langle \Gamma_0 \cup \{v_0\},\Gamma_1 \rangle $ is a linearly separable MWG-drawing.
\end{proof}

The following property can be proved with elementary geometric arguments.

\begin{property}\label{pr:strip}
Let $\langle \Gamma_0,\Gamma_1 \rangle $ be a linearly separable MWG-drawing of $\langle G_0,G_1 \rangle $, let $\ell$ be the (horizontal) line separating $\Gamma_0$ from $\Gamma_1$, let $i \in \{0,1\}$, let $u_0,u_1$ be a pair of non-adjacent vertices of $\Gamma_i$ with $x(u_0) < x(u_1)$, and let $z$ be a vertex of $\Gamma_{1-i}$ such that $z \in D[u_0, u_1]$. We have that $z \in S_N(u_0,u_1)$.
\end{property}

\leverticalstrip*
\begin{proof}
 Let $\lambda$ be the line through $u_0$ and $u_1$. If both $v_0$ and $v_1$ lie in the same half-plane defined by $\lambda$ we have that one of $\overline{v_0u_0}, \overline{v_0u_1}$ crosses one of $\overline{v_1u_0}, \overline{v_1u_1}$, which is impossible by \cref{le:alternating-4-cycle}.  Hence, Property (i) holds. Let $p$ be a witness point such that $p \in D[u_0, u_1]$; by Property~\ref{pr:strip} $p \in S_N(u_0,u_1)$. Let $v_0$ be the vertex in the same half-plane as $p$ with respect to $\lambda$. Vertex $v_0$ must also be a point of  $S_N(u_0,u_1)$ or else we would have that either $\angle u_0pv_0 > \frac{\pi}{2}$ or $\angle u_1pv_0 > \frac{\pi}{2}$, which is impossible because both $\overline{v_0u_0}$ and $\overline{v_0u_1}$ are edges of $\Gamma$.

Now suppose $v_1 \in S[u_0,u_1]$. Then either $v_1 \in S[u_0,v_0]$ or $v_1 \in S[v_0,u_1]$.
Assume that $v_1 \in S[u_0,v_0]$ (the other case is analogous). Then, by \cref{le:no-edge-blocking-vertex},
$\overline{v_0 v_1}$ is an edge of $\Gamma_0$, a contradiction. Thus $v_1 \not \in S[u_0,u_1]$
and hence Property (ii) holds.
\end{proof}

\lenoktwothree*
\begin{proof} Suppose $\Gamma$ contains $K_{2,3}$ as a subgraph. Let $u_0,u_1,u_2, v_0, v_1$ be the vertices of $K_{2,3}$ such that  $u_0,u_1,u_2$ are in the same partition set and assume $x(u_0) \leq x(u_1) \leq x(u_2)$. Consider the 4-cycle $C_0$ with vertices $v_0, u_0, v_1, u_1$ and the 4-cycle $C_1$ with vertices $v_0, u_1, v_1, u_2$. By \cref{co:4-cycle-convexity} $\Gamma$ is such that both $C_0$ and $C_1$ are convex quadrilaterals, which implies that $v_0$, $u_1$, and $v_1$ are vertically aligned points. However, by \cref{le:vertical-strip} one of $\{v_0,v_1\}$--say $v_0$--must be a point of $S_N(u_0,u_1)$ that is $x(u_0) < x(v_0) < x(u_1)$, a contradiction.
 \end{proof}

\section*{Proofs omitted from Section~\ref{se:k-partite}}

Observe that if $\Gamma$ is a WG-drawing of a graph $G$ with witness set $P$, then any induced subgraph $G'$ of $G$ admits a WG-drawing with witness set $P$, which can be derived from $\Gamma$ by removing the vertices not in $G'$. Also, by \cref{th:linear-separability}, if $G_0$ and $G_1$ are MWG-drawable, their drawings must be linearly separable. By this observation and \cref{le:no-k23} we conclude that if $\langle G_0,G_1 \rangle $ is a pair of complete MWG-drawable multi-partite graphs, then neither $G_0$ nor $G_1$ can have $K_{2,3}$ as a subgraph. It follows that in order to prove \cref{th:no-k222-k222} we can assume that all partition sets in each of the two graphs have size exactly two.

To prove \cref{th:no-k222-k222} we introduce some terminology and a technical result from Aronov et al~\cite{DBLP:journals/comgeo/AronovDH13}. Let $\ell$ be a  line and let $\Pi$ be a convex polygon in one of the half-planes of $\ell$. We say that $\Pi$ is a \emph{convex terrain with respect to $\ell$} if for each vertex $v$ on the boundary of $\Pi$, the segment from $v$ to $\ell$ perpendicular to $\ell$ does not intersect $\Pi$. To be consistent with~\cite{DBLP:journals/comgeo/AronovDH13}, we associate each partition set of $\Gamma$ with a distinct color. We shall also sometimes say that two vertices are in the same color class to mean that they belong to the same partition set.

\begin{lemma}~\cite{DBLP:journals/comgeo/AronovDH13}\label{le:convex-position}
Let $\Gamma$ be a WG-drawing of a complete $k$-partite graph, with $k\geq 2$. Then $\mathit{CH}(\Gamma)$ contains at least two vertices of each color class on its boundary.
\end{lemma}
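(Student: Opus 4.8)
The plan is to distill a sharp angular constraint from the witness of a monochromatic non-edge, and then read off two hull vertices of each colour class directly from it. Throughout I assume, as the statement tacitly requires, that every colour class has at least two vertices (a singleton class trivially contributes its only vertex, and carries no content). Fix a colour $c$ and two vertices $w,w'$ of colour $c$. Since $\Gamma$ is complete multi-partite, $w$ and $w'$ are non-adjacent, so by definition of a WG-drawing there is a witness $p\in P$ with $p\in D[w,w']$, equivalently $\angle w p w'\ge \frac{\pi}{2}$, i.e. $(w-p)\cdot(w'-p)\le 0$.

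The key step is the following cone lemma. Let $S_p:=\{x:(x-p)\cdot(w-p)>0 \text{ and } (x-p)\cdot(w'-p)>0\}$; this is an open wedge with apex $p$ and opening angle $\pi-\angle w p w'\le \frac{\pi}{2}$. I claim that \emph{every} vertex $u$ of colour different from $c$ lies in $S_p$. Indeed, $u$ is adjacent both to $w$ and to $w'$ (distinct colours), so $D[w,u]$ and $D[w',u]$ contain no witness; in particular $p\notin D[w,u]$ and $p\notin D[w',u]$, which says exactly $\angle w p u<\frac{\pi}{2}$ and $\angle w' p u<\frac{\pi}{2}$, i.e. $u\in S_p$. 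By convexity this yields $\mathit{CH}(V\setminus V_c)\subseteq S_p$, whereas $w,w'\notin S_p$ since $(w-p)\cdot(w'-p)\le 0$ fails the defining inequalities. Consequently the two boundary lines of $S_p$ do useful separation work: the line $\ell_{w'}$ through $p$ orthogonal to $\overline{pw'}$ has $w$ on the side $(x-p)\cdot(w'-p)\le 0$, while $w'$ together with all colour-$\ne c$ vertices lie strictly on the opposite side; symmetrically $\ell_{w}$ separates $w'$ from $w$ and all colour-$\ne c$ vertices.

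Now I read off hull vertices. Let $\vec n=-(w'-p)$, the normal of $\ell_{w'}$ pointing to $w$'s side. Every colour-$\ne c$ vertex $u$ satisfies $(u-p)\cdot\vec n<0$, whereas $(w-p)\cdot\vec n\ge 0$; hence the global maximiser of the linear functional $x\mapsto (x-p)\cdot\vec n$ over $V$ is attained at a colour-$c$ vertex, which is therefore a vertex of $\mathit{CH}(\Gamma)$. Symmetrically, maximising in the direction $-(w-p)$ produces a colour-$c$ hull vertex on the $w'$ side. In the case that the class has exactly two vertices---the only case actually needed, since the argument supporting \cref{th:no-k222-k222} first reduces (via \cref{le:no-k23}) to all parts having size exactly two---the colour-$c$ vertices are precisely $w$ and $w'$: the line $\ell_{w'}$ isolates $w$ from $\{w'\}\cup(V\setminus V_c)$, i.e. from \emph{all} other vertices, so $w\notin\mathit{CH}(V\setminus\{w\})$ and $w$ lies on $\partial\mathit{CH}(\Gamma)$; symmetrically $w'$ does, giving two distinct colour-$c$ hull vertices. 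The same separation shows en passant that no colour-$c$ vertex lies in $\mathit{CH}(V\setminus V_c)$.

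The main obstacle is establishing $x_1\ne x_2$ when a class has size $\ge 3$. The two maximising directions differ by $\angle w p w'\ge\frac{\pi}{2}$, so a single ``spiky'' colour-$c$ vertex sitting far out in the wedge opposite to $S_p$ could, a priori, maximise both functionals at once, and the pair $(w,w')$ alone gives no bound on how far the colour-$\ne c$ vertices may extend inside $S_p$. I would resolve this by exploiting $k\ge 2$: any colour-$\ne c$ vertex extreme enough to spoil the argument belongs to some other class $c'$, and applying the cone lemma to a monochromatic non-edge of $c'$ confines all colour-$\ne c'$ vertices (those of $c$ included) to a second narrow wedge; intersecting the two simultaneous cone constraints pins down the configuration and forces a second colour-$c$ vertex onto the hull. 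Turning this simultaneous-cone bound into a quantitative argument is the technical heart of the general statement; the size-two case above, which is airtight and handles the degenerate sub-case $\angle w p w'=\frac{\pi}{2}$ without change, already suffices for every use of the lemma in this paper.
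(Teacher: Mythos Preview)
This lemma is not proved in the paper at all; it is quoted from Aronov, Dulieu, and Hurtado and invoked as a black box in the proof of \cref{le:convex-terrain}. There is therefore no in-paper argument to compare your attempt against.

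On its own merits: your wedge argument is correct and clean for colour classes of size exactly two. The line $\ell_{w'}$ through $p$ perpendicular to $\overline{pw'}$ strictly separates $w$ from $\{w'\}\cup(V\setminus V_c)$, hence from all of $V\setminus\{w\}$, so $w$ is an extreme point of $\mathit{CH}(\Gamma)$; symmetrically for $w'$. You are also right that this special case is the only one the present paper needs, since the sole use of the lemma is inside \cref{le:convex-terrain}, whose hypothesis is precisely that every colour class has size two.

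For classes of size at least three, what you have is not a proof but an honest acknowledgement of a gap. Your diagnosis of the obstruction---a single colour-$c$ vertex $w''$ in the wedge opposite $S_p$ might maximise both linear functionals at once---is correct. The repair you sketch, however, is off target: the spoiler $w''$ is a colour-$c$ vertex, not a ``colour-$\ne c$ vertex extreme enough'', so passing to a cone for another class $c'$ constrains where the $c$-vertices can sit collectively but gives no handle on \emph{which} of them is extremal in your two chosen directions. If you want to complete the general statement you will need a different idea (for instance, choosing the pair $w,w'$ more carefully among the extreme points of $V_c$, or iterating the two-point separation over several pairs). As written, the general case remains a genuine gap, albeit one irrelevant to this paper's applications.
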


\begin{figure}[t]
	\centering
		\includegraphics[width=0.6\textwidth, page=1]{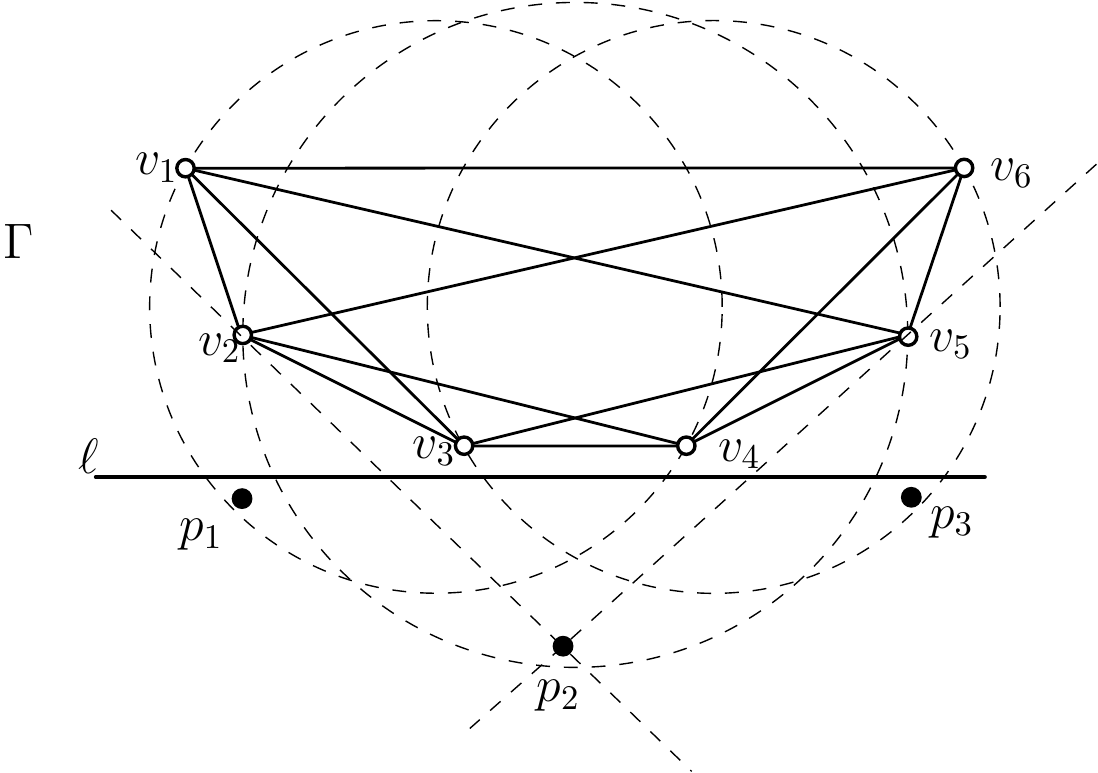}
			\caption{$p_1 \in W_L(p_2,v_2,v_5)$ and $p_3 \in W_R(p_2,v_2,v_5)$.}
	\label{fi:terrain}
\end{figure}

\begin{lemma}\label{le:convex-terrain}
Let $\Gamma$ be a WG-drawing of a complete $k$-partite graph, with $k\geq 2$ such that $\Gamma$ is linearly separable from its witness set by a separating line $\ell$ and such that every color class of $\Gamma$ has size two. Then $\mathit{CH}(\Gamma)$ is a convex terrain with respect to $\ell$.
\end{lemma}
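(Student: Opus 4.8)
The plan is to argue by contradiction: suppose $\mathit{CH}(\Gamma)$ is not a convex terrain with respect to $\ell$. Assume without loss of generality that $\Gamma$ lies in the upper half-plane and $\ell$ is horizontal. Failure of the terrain property means there is a vertex $v$ on the boundary of $\mathit{CH}(\Gamma)$ such that the vertical segment from $v$ down to $\ell$ meets $\mathit{CH}(\Gamma)$ in a point other than $v$; equivalently, $v$ lies on the ``upper'' boundary of the hull --- there is a hull edge $\overline{ab}$ whose vertical strip $S(a,b)$ contains $x(v)$ and which lies below $v$. First I would sort the hull vertices by $x$-coordinate and observe that the non-terrain condition forces the existence of three consecutive hull vertices, or more simply a hull edge $\overline{ab}$ with $a,b$ not both on the lower boundary, such that some vertex sits strictly above the line through $a$ and $b$ within the strip $S(a,b)$.

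The key geometric tool is Property~\ref{pr:strip} together with \cref{le:vertical-strip}: in a WG-drawing linearly separable from its witnesses, every witness of a non-edge $\overline{u_0u_1}$ lies in the \emph{near} half-strip $S_N(u_0,u_1)$, and in any induced $C_4$ exactly one of the two ``opposite'' vertices lies in $S_N(u_0,u_1)$ while the other lies outside $S[u_0,u_1]$. The next step is to exploit completeness: since every color class has size two and $k\geq 2$, the graph contains $K_{2,2}$ on any two color classes, so there are plenty of induced $4$-cycles to feed into \cref{le:vertical-strip}. Concretely, take the topmost vertex $v$ violating the terrain condition, identify a color class not containing $v$, say $\{w,w'\}$; then $v$ together with this class and the partner of $v$ spans a $C_4$, forcing one of $\{w,w'\}$ into a near half-strip that, combined with $v$'s extreme position on the hull boundary, yields a point of $\Gamma$ outside $\mathit{CH}(\Gamma)$ or a crossing that \cref{le:alternating-4-cycle} forbids --- a contradiction.

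The alternative (and I suspect cleaner) route is to argue that a non-terrain hull boundary produces a non-edge $\overline{u_0u_1}$ whose near half-strip $S_N(u_0,u_1)$ contains no witness at all: if $v$ is above the line through the hull edge $\overline{u_0u_1}$ and inside $S(u_0,u_1)$, then $v$ is on the far side, so by \cref{le:no-edge-blocking-vertex} applied appropriately $\overline{u_0u_1}$ would have to be an edge; but hull edges of a complete multipartite graph with all classes of size $\geq 2$ cannot always be edges (two vertices of the same class on the hull, guaranteed by \cref{le:convex-position}, form a non-edge on the boundary), and for that non-edge the witness must sit in its near half-strip, which the non-terrain shape makes impossible because the near side is ``pinched'' against $\ell$. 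The main obstacle I anticipate is the bookkeeping: translating ``not a convex terrain'' into a clean statement about a specific pair of hull vertices and a specific strip, and then making sure the vertex or witness we extract actually contradicts one of \cref{le:vertical-strip}, \cref{le:no-edge-blocking-vertex}, or \cref{le:alternating-4-cycle} rather than merely being geometrically awkward. I would handle this by always choosing the \emph{extreme} offending vertex (topmost, or leftmost among topmost) so that any edge or witness forced to exist is pushed past a hull boundary we already control.
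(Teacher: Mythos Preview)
Your plan is missing a load-bearing ingredient and, as a result, neither of your two routes closes. The paper's proof is not by contradiction at all: it first invokes \cref{le:convex-position} (from Aronov et al.) to conclude that \emph{every} vertex of $\Gamma$ lies on the boundary of $\mathit{CH}(\Gamma)$, since each color class has size exactly two. You never cite this lemma, so your arguments about ``hull vertices'' and ``hull edges'' are not a priori about all vertices of $\Gamma$. With that in hand, the paper sorts the $2k$ vertices by $x$-coordinate as $v_1,\dots,v_{2k}$ and then, by repeatedly applying \cref{le:vertical-strip}, proves the structural fact that the color classes are precisely $\{v_i,v_{k+i}\}$ for $i=1,\dots,k$, and that $v_{i+1},\dots,v_{k+i-1}\in S_N(v_i,v_{k+i})$. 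This pairing is what forces the $x$-sorted order to be the convex-hull order, hence a terrain; and it is reused verbatim in \cref{le:witness-ordering}, so the direct approach pays twice.

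Your first route (``topmost offending $v$, pick another class $\{w,w'\}$, get a contradiction'') does not go through as stated: \cref{le:vertical-strip} only tells you that one of $w,w'$ sits in $S_N(v,v')$ and the other outside $S[v,v']$, which by itself is perfectly compatible with $v$ being on the upper chain. The contradiction you are hoping for only materializes \emph{after} you know which vertex is the color partner of which, i.e., after you have essentially done the paper's pairing argument. Your second route misreads both \cref{le:no-edge-blocking-vertex} (its hypothesis is that $\overline{u_0u_1}$ is an edge, not its conclusion) and the geometry of $S_N$: the near half-strip of any segment in the upper half-plane extends downward through $\ell$, so it is never ``pinched against $\ell$'' and always has room for witnesses. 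In short, the missing idea is the explicit color-class pairing $\{v_i,v_{k+i}\}$ obtained by iterating \cref{le:vertical-strip} from the leftmost vertex outward; once you have that, the terrain property drops out directly and no contradiction argument is needed.
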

\begin{proof}
As usual, we assume that $\ell$ is a horizontal line. Note that by Lemma~\ref{le:convex-position}, every vertex of $\Gamma$ is on the boundary of  $\mathit{CH}(\Gamma)$. Label the vertices as $v_1, \ldots, v_{2 k}$ in order of increasing $x$-coordinate (we can assume that the vertices have distinct $x$-coordinates).
We first show that, for $i = 1, \ldots, k$, $v_i$ and $v_{k+i}$ have the same color.

Let $u_1 = v_1$ and let $u'_1$ be the other vertex having the same color as $u_1$; call this color $c_1$. Now let  $w_1$ and  $w'_1$ be two vertices of the same color (but different from $c_1$). By Lemma~\ref{le:vertical-strip}, exactly one of $w_1,  w'_1$ is in $S(u_1,  u'_1)$; in fact it is in $S_N(u_1,  u'_1)$. Thus, $S_N(u_1,  u'_1 )$ contains exactly one vertex of every color other than $c_1$.
Thus $u'_1 = v_{k+1}$ and $v_1, \ldots , v_k$ all have distinct colors $c_1, \ldots, c_k$.

Now let $u_2 = v_2$ and let $u'_2$ be the other vertex having the same color as $u_2$; so $u_2$ has color $c_2$. By the same argument,
$S(u_2,  u'_2 )$ (in fact $S_N(u_2,  u'_2 )$) contains exactly one vertex of each of the other colors and so $u'_2$ must be $v_{k+2}$, and, similarly, for $i = 1, \ldots k$,
we have that $v_i = v_{k+i}$, and that $v_{i+1}, \ldots, v_{k+i-1}$ are in  $S_N(v_i, v_{k+i})$.

Thus $v_1, \ldots, v_{2 k}, v_1$ is a convex polygon and so every diagonal $v_i, v_j$ is contained within $\mathit{CH}(\Gamma)$ and thus, for each $v_i$, the vertical segment from $v_i$ to $\ell$ does not intersect $\mathit{CH}(\Gamma)$, making $\mathit{CH}(\Gamma)$ a convex terrain with respect to $\ell$.
\end{proof}

An example of the configuration described in the statement of \cref{le:convex-terrain} is shown in Figure~\ref{fi:terrain}.
In the following lemma we assume, without loss of generality, that for every $p \in P$, $p$ is contained in the Gabriel disk of some pair of vertices of $\Gamma$. Indeed, if $P$ contained a point $p$ which is not in any Gabriel disk of any pair of non-adjacent vertices of $\Gamma$, then $P \setminus{\{p\}}$ would also be a witness set for $\Gamma$.

\begin{lemma}\label{le:witness-ordering}
Let $\Gamma$ be a WG-drawing of a complete $k$-partite graph, with $k\geq 2$ such that $\Gamma$ is linearly separable from its witness set $P$ by a separating line $\ell$ and such that every color class of $\Gamma$ has size two. Let $p_i$ be a witness for vertices $v_i$ and $v_{k+i}$ and let $p_j$ be a witness for vertices $v_j$ and $v_{k+j}$. If $i < j$ then $p_i \in W_L(p_j,v_j,v_{j+k})$; if $i > j$ then $p_i \in W_R(p_j,v_j,v_{j+k})$.
\end{lemma}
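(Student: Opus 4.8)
The plan is to exploit the convex terrain structure established in \cref{le:convex-terrain} together with Property~\ref{pr:common-neighbor}. Recall that the vertices of $\Gamma$ are labeled $v_1,\ldots,v_{2k}$ by increasing $x$-coordinate, that the $i$-th color class is $\{v_i,v_{k+i}\}$, and that $v_1,\ldots,v_{2k},v_1$ bounds a convex polygon lying above $\ell$. I would begin by fixing $j$ and analyzing where the witness $p_i$ can lie relative to the wedges of $v_j$ with apex considerations around $p_j$. The key point is that, since $v_i,v_{k+i}$ are the antipodal points of $D[p_i,\ldots]$, err---more precisely $p_i \in D[v_i,v_{k+i}]$, so $v_i$ and $v_{k+i}$ are on opposite sides of $p_i$ and $\angle v_i p_i v_{k+i} \geq \tfrac{\pi}{2}$; combined with $p_i$ being in the lower half-plane (below $\ell$) and $v_i, v_{k+i}$ in the upper half-plane, $p_i \in S_N(v_i,v_{k+i})$ by Property~\ref{pr:strip}, i.e. $x(v_i) < x(p_i) < x(v_{k+i})$.

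The heart of the argument is a pairwise comparison between the strips $S_N(v_i,v_{k+i})$ and $S_N(v_j,v_{k+j})$. From the convex-terrain proof we know that if $i<j$ then $v_j$ lies in $S_N(v_i,v_{k+i})$ while $v_{k+i}$ lies in $S_N(v_j,v_{k+j})$; in particular the two strips overlap but neither contains the other, and $x(v_i) < x(v_j) < x(v_{k+i}) < x(v_{k+j})$. Now I would use Property~\ref{pr:common-neighbor}: the pair $\{v_j,v_{k+j}\}$ is a non-edge of $\Gamma$ with witness $p_j$, and for \emph{every} vertex $z$ adjacent to both $v_j$ and $v_{k+j}$ we must have $z \in W(p_j,v_j,v_{k+j})$. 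Since $\Gamma$ is complete multipartite with all classes of size two, $v_i$ (which has a different color from $v_j$ when $i<j$, by the terrain labeling) is adjacent to both $v_j$ and $v_{k+j}$, so $v_i \in W(p_j,v_j,v_{k+j})$; similarly $v_{k+i}\in W(p_j,v_j,v_{k+j})$. I would then argue symmetrically that $p_i$, being forced by Property~\ref{pr:common-neighbor} applied with the roles reversed (the common neighbors of $v_i,v_{k+i}$ must lie in $W(p_i,v_i,v_{k+i})$, and $v_j,v_{k+j}$ are such neighbors), sits in a location constrained so tightly by the $x$-ordering $x(v_i)<x(p_i)<x(v_{k+i})<x(p_j?)$---here I'd pin down $x(p_i)$ versus $x(p_j)$ and versus $x(v_j)$ using the overlap structure---that $p_i$ must fall into the left wedge $W_L(p_j,v_j,v_{k+j})$ rather than the top, bottom, or right wedge. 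The top wedge is excluded because $p_i$ is below $\ell$ and $v_j,v_{k+j}$ straddle or sit just above a point below $\ell$; the bottom wedge is excluded because if $p_i\in W_B[p_j,v_j,v_{k+j}]$ then $v_j,v_{k+j}$ would be forced into $W_T$, contradicting the $x$-ordering or the convexity; and the right wedge is excluded precisely because $x(p_i)$ is small enough (inherited from $x(v_i) < x(v_j)$ and the strip-containment facts).

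The main obstacle I anticipate is making the wedge-elimination airtight: a priori $p_i$ could sit below $\ell$ quite far to the right, and I need to show that the constraint $v_i,v_{k+i}\in W(p_j,v_j,v_{k+j})$ together with $v_j,v_{k+j}\in W(p_i,v_i,v_{k+i})$ genuinely forces $x(p_i)<x(v_j)$ (or at least that $p_i$ lies to the left of the vertical line through $v_j$), which is what puts $p_i$ in the left wedge. I would handle this by a careful case analysis on which of the four wedges of $p_j$ contains $p_i$, deriving in each bad case either a crossing of two edges of $\Gamma$ (forbidden once $\Gamma$ is complete multipartite and linearly separable, via the alternating-4-cycle argument of \cref{le:alternating-4-cycle}) or a violation of Property~\ref{pr:common-neighbor} (some common neighbor of $v_i,v_{k+i}$ escaping $W(p_i,v_i,v_{k+i})$), as illustrated in Figure~\ref{fi:terrain}. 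The case $i>j$ is fully symmetric under reflection across the relevant vertical, swapping left and right, and I would simply note this rather than repeat the argument.
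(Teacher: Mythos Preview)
Your overall plan---a wedge case analysis at apex $p_j$ using Property~\ref{pr:common-neighbor} and the convex-terrain structure---matches the paper's approach. However, the concrete arguments you sketch for eliminating the top, bottom, and right wedges have real gaps.

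First, excluding $W_T(p_j,v_j,v_{j+k})$ ``because $p_i$ is below $\ell$'' does not work: $W_T$ is the wedge above the horizontal through $p_j$, and since $p_j$ itself is below $\ell$, a point of $W_T$ can perfectly well sit below $\ell$. The paper's argument is different and hinges on an observation you never state: \emph{no witness can serve two different color classes}. This follows because if $p$ witnesses both $\{v_i,v_{i+k}\}$ and $\{v_j,v_{j+k}\}$ then, by Property~\ref{pr:common-neighbor}, $v_j,v_{j+k}\in W_T(p,v_i,v_{i+k})$ strictly, which forces $W_T(p,v_j,v_{j+k})\subsetneq W_T(p,v_i,v_{i+k})$, yet $v_i,v_{i+k}$ must also lie in $W_T(p,v_j,v_{j+k})$---impossible. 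With this in hand, $p_i\in W_T(p_j,v_j,v_{j+k})$ and $p_i$ below $\ell$ forces $p_i\in\triangle(p_j,v_j,v_{j+k})$, hence $\angle v_jp_iv_{j+k}>\angle v_jp_jv_{j+k}\ge\tfrac{\pi}{2}$, making $p_i$ a second witness for $\{v_j,v_{j+k}\}$: contradiction. The bottom-wedge case then follows by swapping the roles of $p_i$ and $p_j$ (if $p_i\in W_B$ then $p_j\in W_T(p_i,v_i,v_{i+k})$), not by any $x$-ordering argument.

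Second, your reason for excluding $W_R$---that ``$x(p_i)$ is small enough''---does not by itself place $p_i$ outside $W_R(p_j,v_j,v_{j+k})$, since that wedge is bounded by slanted lines whose horizontal extent you have not controlled. The paper instead argues the contrapositive directly: by Property~\ref{pr:common-neighbor}, $v_j\in W_T(p_i,v_i,v_{i+k})$; if $p_i\in W_R(p_j,v_j,v_{j+k})$ then, using the convex-terrain ordering (in particular that $v_j$ lies to the upper-left of $p_j$), the wedge $W_T(p_i,v_i,v_{i+k})$ cannot reach $v_j$. Your proposed fallback via \cref{le:alternating-4-cycle} is unnecessary and it is not clear how it would resolve these cases.
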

\begin{proof}
We assume, w.l.o.g, the separating line is horizontal and that $\Gamma$ is in the upper half-plane (see for example Figure~\ref{fi:terrain}).
We begin with an observation. Let $\{v_i, v_{i+k}\}$ be one of the color classes of $\Gamma$ and let $p_i \in P$ be in $D[v_i, v_{i+k}]$. Since every other vertex $v$ of $\Gamma$ is adjacent to both $v_i$ and $v_{i+k}$, by Property~\ref{pr:common-neighbor} it follows that $v$ must lie in $W_T(p_i, v_i, v_{i+k})$ and that no $p \in P$ can be a witness for two different color classes.

It is also the case that if $\{v_i, v_{i+k}\}$ is a color class distinct from $\{v_j, v_{j+k}\}$, then no witness $p_i$ for $\{v_i, v_{i+k}\}$ can be contained in $W_T(p_j, v_j, v_{j+k})$, because then $p_i$ would be contained in $\triangle(p_j, v_j, v_{j+k})$, which would imply that
\[\angle v_j,p_i,v_{j+k} > \angle v_j,p_j,v_{j+k} \geq \frac{\pi}{2},\]
which makes $p_i$ a witness for $\{v_j, v_{j+k}\}$ as well as for $\{v_i, v_{i+k}\}$.

Similarly, $p_i$ cannot be contained in $W_B(p_j, v_j, v_{j+k})$ since then, in order for $W_T(p_i, v_i, v_{i+k})$ to contain both of $\{v_j, v_{j+k}\}$, $W_T(p_i, v_i, v_{i+k})$ would also need to contain $p_j$, which we just showed cannot occur.
So, either $p_i \in W_L(p_j, v_j, v_{j+k})$ or $p_i \in W_R(p_j, v_j, v_{j+k})$.

Assume that $i < j$ (the proof when $i > j$ is analogous). Since, by \cref{le:convex-terrain}, the vertices of  $\Gamma$ form a convex terrain with respect to the separating line, we have that $p_i$ is in $W_L(p_j,v_j,v_{j+k})$, as otherwise $W_T(p_i, v_i, v_{i+k})$ would not contain $v_j$.
\end{proof}

We are now ready to prove \cref{th:no-k222-k222}. As noted at the beginning of this section, it suffices to consider graphs all of whose partitions sets have size two.

\thmultipartite*
\begin{proof} Assume that $\Gamma_0$ is above the separating line and that it has at least three color classes. Let $u_1$, $u_2$ and $u_3$ be three vertices of $\Gamma_1$ that act as witnesses for color classes $\{v_1,v_{1+k}\}$, $\{v_2,v_{2+k}\}$, and $\{v_3,v_{3+k}\}$, respectively. By \cref{le:witness-ordering}, we have $u_1 \in W_L(u_2,v_2,v_{2+k})$ and  $u_3 \in W_R(u_2,v_2,v_{2+k})$. Let $u_2'$ be the vertex of $\Gamma_1$ that is in the same color class as $u_2$ and assume that $u_2'$ is to the right of $u_2$ (the proof in the other case being symmetric). Let $v$ be a vertex of $\Gamma_0$ that is a witness of $u_2$ and $u_2'$ (i.e. $v \in D[u_2, u'_2]$). See for example Figure~\ref{fi:nok222-k222} where vertex $v$ is $v_4$. Vertex $v$ is above the separating line and, by Property~\ref{pr:strip}, $v \in S_N(u_2,u_2')$. Also either $v$ coincides with one of $\{v_2,v_{2+k}\}$ or, by Property~\ref{pr:common-neighbor} $v \in  W_T(u_2,v_2,v_{2+k})$. In either case, we have that $ W_L(u_2,v_2,v_{2+k}) \cap W_B(v,u_2,u_2') = \emptyset$. However, $W_B(v,u_2,u_2')$ must, by Property~\ref{pr:common-neighbor},  contain all vertices of $\Gamma_1$, a contradiction since $u_1 \in W_L(u_2,v_2,v_{2+k})$.
\end{proof}

\end{document}